\newcommand{\ol}[1]{\overline{#1}}
\newtheorem{obs}[theorem]{Observation}
\begin{document}

\title{Identifying species  network features from gene tree quartets under the coalescent model
	\thanks{This research was supported in part by
		the National Institutes of Health grant R01 GM117590,
		awarded under the Joint DMS/NIGMS Initiative to
		Support Research at the Interface of the Biological and
		Mathematical Sciences. }
}

  \titlerunning{Identifying species network features  from gene tree quartets}        

\author{Hector Ba\~nos    
}


\institute{University of Alaska Fairbanks \at
             P.O. Box 756660\\
             Fairbanks, AK 99775-6660 \\               
              \email{hdbanoscervantes@alaska.edu}          
}

\date{}

\maketitle

\begin{abstract}
We show that many topological features of level-1 species networks are identifiable from the distribution of the gene tree quartets under the network multi-species coalescent model. In particular, every cycle of  size at least $4$ and every hybrid node in a cycle of size at least 5 is identifiable. This is a step toward justifying the  inference of such networks which was recently implemented by Sol\'is-Lemus and An\'e. We show additionally how to compute quartet concordance factors for a network  in terms of simpler networks, and explore some circumstances in which cycles of size $3$ and hybrid nodes in 4-cycles can be detected. 	
	
\keywords{Coalescent Theory \and Phylogenetics \and Networks  \and  Concordance factors}
\end{abstract}

\section{Introduction}
\label{intro}
 As phylogenetic analysis of DNA data has progressed, more evidence has appeared showing that  hybridization is often an important factor in evolution. As surveyed in \cite{Nakhleh2011}, hybridization has played a very important role in the evolutionary history of plants, some groups of fish and frogs (\cite{Ellstrand1996}, \cite{Linder2004}, \cite{Mallet2005}, \cite{Noor2006}, \cite{Rieseberg2000}). Other biological processes such as introgression, lateral gene transfer and  gene flow, also require moving beyond a simple tree-like view of species relationships.  
 
Phylogenetic networks are the objects used to represent the relationships between species that admit such events (\cite{Arnold1997},\cite{Bapteste2013}). These networks are often thought of as  obtained from phylogenetic trees by adding additional edges, so that  some nodes in the tree have two parents. Nodes with two parents, called \textit{hybrid nodes}, represent species whose genome arises from two different ancestral species.  Inference of phylogenetics networks from biological data presents new challenges, with methods  still being developed, as shown by recent works including \cite{Ane2007}, \cite{Meng2009}, \cite{Solis-Lemus2016},   and \cite{Yu2011}.

Another challenge in inferring  evolutionary history arises from the fact that many multi-locus data sets exhibit gene tree incongruence, even without suspected hybridization. One possible reason for this is incomplete lineage sorting (ILS), which is described in the tree setting by the multi-species coalescent model \cite{Pamilo1988}.  See for example  \cite{Carstens2007}, \cite{Pollard2006}, and\cite{Syring2005}  where ILS is explained in the biological setting. \\

Meng and Kubatko \cite{Meng2009}   formulated a model of gene tree production, based on the multi-species coalescent model, incorporating both hybridization and ILS. We refer to this model as the \textit{network multi-species coalescent model}, which is further developed in \cite{Nakhleh2012} and \cite{Nakhleh2016}. The model determines the probability of observing any rooted gene tree given a metric rooted phylogenetic species network.
 
An\'e and Sol\'is-Lemus   \cite{Solis-Lemus2016} recently presented a novel statistical method, based on the network multi-species coalescent model, to infer phylogenetic networks from gene tree quartets in a pseudolikelihood framework. The quartets themselves might come from larger gene trees inferred by standard phylogenetic methods.
The pseudolikelihood  in this work is built on  quartet frequencies, or concordance factors, extending  an idea of Liu  \cite{Liu2010} from the tree setting.  The pseudolikelihood  approach is simpler and faster than computing the full likelihood and makes large-scale data analysis more tractable. They demonstrate positive results in reconstructing the  evolutionary relationships among swordtails and platyfishes.

However,  the theoretical underpinnings of the method of \cite{Solis-Lemus2016} are not complete. In using a model for statistical  inference it is important to know if it is theoretically possible to uniquely recover the parameters from the data the model predicts. In more precise terms, for model-based statistical inference to have a solid basis, we need that the probability distribution for data which arises under the model uniquely determines the parameters. This is known as \textit{identifiability} of the model parameters.

While \cite{Solis-Lemus2016} highlighted important issues of parameter identifiability needed to justify its inference method, it included only preliminary investigations. The authors aptly argue that   gene quartet probabilities can be computed for larger networks and carry some information about network topology and edge lengths, but do not formally provide a full proof. They argue that some hybridization can be detected, but  did not establish  what features in a large network topology can be identified. Working in the setting of level-1 networks, which is also adopted here, their investigations into parameter identifiability focus on small networks of 4 or 5 taxa.

The primary purpose of this work is to begin to address some of these  identifiability questions raised in \cite{Solis-Lemus2016}. That is, we study the question: given information on gene quartet probabilities for some unknown level-1 network $\mathcal{N}$, what can be determined about the topology of $\mathcal{N}$? We limit our focus to topological features of networks, leaving hybrid parameters, and metric identifiability for subsequent study. 

Although others have considered the problem of constructing large networks from small ones, these works do not seem to be applicable to the question studied here. Most of these works, including \cite{Huber2015} and  \cite{Keijsper2014}, are primarily combinatorial in nature. In particular, these studies do not address ILS through the network multi-species coalescent model, nor the types of inputs that might be obtained from biological data.\\ 

The main result of this work, Theorem \ref{thm::main} of Section \ref{sec final}, is that under the network multi-species coalescent model on level-1  networks, we can generically identify from gene quartet distributions ``most" of the unrooted topological network, including all cycles of size at least $4$, and hybrid nodes in the cycles of size greater than 4, from quartet gene tree distributions. ``Generically" here means for all values of numerical parameters except those in a set of measure zero. The methods used are a mix of the semi-algebraic study of quartet gene tree frequencies  (in terms of linear equalities  and inequalities they satisfy) with combinatorial approaches to combining this knowledge for many quartets. As a side benefit the proofs suggest combinatorial methods for inferring   networks. However, we do not explore how such methods might be implemented in the presence of the noise that any collection of inferred gene trees will have. 

Another result of this work, in Section \ref{Q CF sn}, is a rigorous derivation of  how gene quartet probabilities can be computed  for large networks under the  coalescent model. Although this parallels some of the results in \cite{Solis-Lemus2016}, the   proofs given here deal with   complications concerning passing from large rooted networks  to unrooted quartet networks that were left unaddressed in that work. This is accomplished by expressing quartet frequencies as convex combinations of those on simplified networks, ultimately leading to expressions in terms of trees.\\

The outline of this work is as follows: Section 2  introduces basic definitions and establishes some terminology on graphs and networks. Section 3 sets forth insights and tools for studying the structure of level-1 networks. Section 4 reviews the network multi-species coalescent model  of \cite{Meng2009}, as well as quartet  concordance factors and some of their properties.   In Section 5 we show how concordance factors of quartet networks can be expressed in terms of simpler networks. Section 6 introduces the ``Cycle  property" of concordance factors and Section 7 defines the ``Big Cycle"  property of concordance factors. In Section 8, the main result on topological network identifiability is proved using the Big Cycle property and in  Section 9 some extended results  on the ``Cycle property" are shown.


\section{Phylogenetic networks}\label{sec def}
We adopt standard terminology for graphs and networks, as used in phylogenetics; see for example \cite{Semple2005} and \cite{Steel2016}.  All undirected, directed, or semidirected graphs will not contain loops. If $G$ is a directed or semidirected graph, the \textit{undirected graph of $G$}, denoted by $U(G)$, is the graph $G$ with all directions  omitted. 
\subsection{Rooted networks}
To set terminology, we begin with some fundamental definitions.
\begin{definition}\label{def::network} 
	A \textbf{topological binary rooted phylogenetic network} $\mathcal{N}^+$
	on taxon set $X$ is a connected directed acyclic graph with vertices $V = \{r\} \sqcup V_L \sqcup V_H \sqcup V_T$, edges $E = E_H \sqcup E_T$ and a bijective leaf-labeling function $f : V_L \to X$ with the
	following characteristics:
	\begin{itemize}
		\item[1.]The \textbf{root} $r$ has indegree 0 and outdegree 2.
		\item[2.] A \textbf{leaf} $v \in V_L$ has indegree 1 and outdegree 0.
		\item[3.] A  \textbf{tree node} $v\in  V_T$ has indegree 1 and outdegree 2.
		\item[4.] A \textbf{hybrid node} $v\in  V_H$ has indegree 2 and outdegree 1.
		\item[5.] A \textbf{hybrid edge} $e \in E_H$ is an edge whose child is a hybrid node.
		\item[6.] A \textbf{tree edge} $e \in E_T$ is an edge whose
		child is a tree node or a leaf.
	\end{itemize}
\end{definition}

\begin{definition}
	Let $\mathcal{N}^+$ be a topological binary rooted phylogenetic network with  $|E|=m$ and  $|E_H|=2h$. A \textbf{metric for $\mathcal{N}^+$} is a pair $(\lambda, \gamma)$, where
	$\lambda:E\to \mathbb{R}_{> 0}$ and  $\gamma:E_H\to  (0,1)$ satisfies that if two edges $h_1$ and $h_2$ have the same hybrid node as child, then $\gamma(h_1)+\gamma(h_2)=1$. 
	
	If  $(\lambda, \gamma)$ is a metric for $\mathcal{N}^+$, then we refer to $(\mathcal{N}^+,(\lambda, \gamma))$ as a \textbf{metric binary rooted phylogenetic network}. 
\end{definition}
Note that Definition \ref{def::network} differs from that of \cite{Steel2016} in that it allows up to two edges between a pair of nodes. An edge weight $\lambda(e)$ is interpreted as the time (in coalescent units) between speciation events represented by the ends of edge $e$.  For any hybrid edge $h$ with child $v$, the value $\gamma(h)=\gamma_{h}$  is the probability that a lineage at $v$ has ancestral lineage in $h$ and is called  \textit{hybridization parameter}. \\

\subsection{Most recent common ancestor}
We generalize the concept of the most recent common ancestor of a set of taxa on trees to the network setting. 

\begin{definition}\label{def::above}
	Let $\mathcal{N}^+$ be a (metric or topological) binary rooted phylogenetic network. We say that a node $v$ is \textbf{above} a node $u$, and $u$ is \textbf{below} $v$, if there exists a non-empty directed path in $\mathcal{N}^+$ from $v$ to $u$. We also say that an edge with parent node  $x$ and   child   $y$ is above (below) a node $v$ if  $y$ is above or equal to $v$   ($x$ is below or equal to $v$).
\end{definition}
 Note that since $\mathcal{N}^+$ has no directed cycles, $u$ cannot be both above and below $v$. 
\begin{definition}\label{def::MRCA}
	Let $\mathcal{N}^+$ be a (metric or topological)  binary rooted phylogenetic network on $X$  and let $Z\subseteq X$. Let $D$  be the set of  nodes which lie on every directed path from the root $r$ of $\mathcal{N}^+$ to any $z\in Z$. Then the \textbf{most recent common ancestor of $Z$ of $\mathcal{N}^+$}, denoted by \emph{MRCA}$(Z,{\mathcal{N}^+})$,  is the unique node $v\in D$  such that $v$ is below all $u\in D$, $u\neq v$.
\end{definition}

When $\mathcal{N}^+$   is clear from  context, we  write MRCA$(Z)$ for MRCA($Z,{\mathcal{N}^+}$). 
To see that MRCA($Z$) is well defined for any $Z\subseteq X$, note first that $D\neq\emptyset$ since  $r\in D$. Also, since every pair of nodes $u,v\in D$ both lie on a path, we have a notion of above and below for $u$ and $v$, i.e.  a total order on $D$, and hence a minimal element. \\
 
Note that this definition of network MRCA differs from that of the least common ancestor (LCA) that appears elsewhere in the phylogenetic network literature \cite{Steel2016}. While it agrees with the usual concept for trees, it is more subtle on networks. In particular, if $\mathcal{N}^+$ is a network on $X$, MRCA($X$) need not to be the root of the network, as Figure \ref{fig::root2cyc} (left) shows. Furthermore, there can be nodes below the MRCA($X$) which are ancestral to all of $X$, as Figure \ref{fig::ancestralofall} shows.
 \begin{figure}\begin{center}
		\includegraphics[scale=.49]{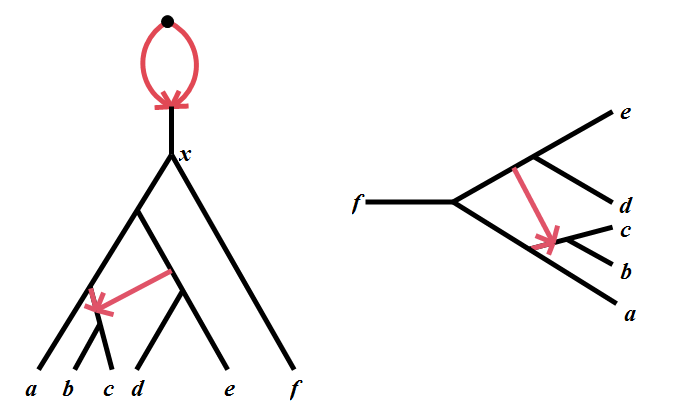}  
		\caption{(Left) A binary rooted phylogenetic network on $X$, with MRCA$(X)$ the node labeled $x$, and (Right) its induced unrooted semidirected network. In a depiction of a rooted network, all edges are directed downward, from the root, but arrowheads are shown only on hybrid edges. For the unrooted network, all edges except hybrid ones are undirected.}\label{fig::root2cyc}
\end{center}\end{figure}
 \begin{figure}\begin{center}
 		\includegraphics[scale=.35]{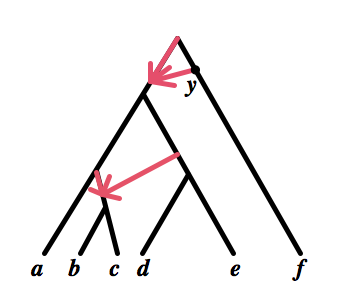}  
 		\caption{A binary rooted phylogenetic network where the node labeled $y$ is ancestral to all taxa in $X$ but is not  MRCA($X$).  MRCA($X$) here is the root of the network.  }\label{fig::ancestralofall}
 \end{center}\end{figure}
 \begin{lemma}\label{lem::MRCAproperties}
	Let $\mathcal{N}^+$ be a (metric or topological) binary rooted phylogenetic network on $X$  with root $r$, and let  $Z\subseteq Y\subseteq X$. Then
	\begin{enumerate}
	    	\item[(i)] the indegree of MRCA$(Z)$ is at most one for any $Z\subset X$;
			\item[(ii)] at most one of the out edges of MRCA($Z$) is hybrid;
			\item[(iii)] if $Z\subseteq Y\subseteq X$ then  MRCA($Z$) is below or equal to MRCA($Y$).
	\end{enumerate}
\end{lemma}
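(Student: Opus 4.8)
The three parts all revolve around the lowest dominator node $v=\mathrm{MRCA}(Z)$, and the common engine for (i) and (ii) is this: if some vertex $w$ lying strictly below $v$ can be shown to lie on \emph{every} directed path from $r$ to \emph{every} $z\in Z$, then $w\in D$ while $v$ is not below $w$, contradicting the defining minimality of $\mathrm{MRCA}(Z)$ as the lowest element of $D$. I would prove all three statements from this principle together with acyclicity of $\mathcal{N}^+$ and the fact that every non-root vertex is reachable from $r$ (following in-edges backward in the acyclic graph must terminate at the unique indegree-$0$ vertex).

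For (i) I would argue by contradiction. Among the four vertex types only hybrid nodes have indegree exceeding one, so it suffices to rule out that $v$ is a hybrid node. If it were, then $v$ has a unique child $c$; since $v$ is not a leaf and $v\notin Z$, every directed path from $r$ to a $z\in Z$ passes through $v$ (as $v\in D$) and must leave $v$ along its only out-edge, to $c$. Hence $c\in D$ and $c$ is strictly below $v$, the desired contradiction.

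Part (ii) is where the real work lies, and I expect the cycle-extraction step to be the main obstacle. Suppose $v$ has two out-edges, both hybrid, with hybrid children $c_1,c_2$. First I would check that each $c_i$ has a descendant in $Z$: otherwise no directed path from $r$ to a vertex of $Z$ could pass through $c_i$, so every such path would exit $v$ through the other child $c_{3-i}$, placing $c_{3-i}\in D$ strictly below $v$. Next, writing $p_i\neq v$ for the second parent of $c_i$, I would fix a descendant $z_i\in Z$ of $c_i$ and concatenate a directed path from $r$ to $p_i$ with the edge $p_i\to c_i$ and a directed path from $c_i$ to $z_i$; acyclicity makes this a genuine directed path from $r$ to $z_i$, so it contains $v$. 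Since $v$ cannot occur on the sub-path from $c_i$ to $z_i$ (those vertices are descendants of $v$'s own child $c_i$) and $v\neq p_i$, the node $v$ must occur strictly before $p_i$, so $p_i$ lies strictly below $v$. Finally I would use that the only children of $v$ are $c_1,c_2$: a directed path from $v$ to $p_i$ starts through $c_1$ or $c_2$, and $p_i$ cannot be $c_i$ or a descendant of $c_i$ (that would close the cycle $c_i\to\cdots\to p_i\to c_i$), so $p_i$ is a descendant of $c_{3-i}$, yielding a directed path $c_{3-i}\to\cdots\to p_i\to c_i$. Applying this to $i=1$ and $i=2$ gives directed paths from $c_2$ to $c_1$ and from $c_1$ to $c_2$, a directed cycle contradicting acyclicity.

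For (iii) I would simply compare the two dominator sets. Writing $D_Z$ and $D_Y$ for the sets of Definition~\ref{def::MRCA} associated to $Z$ and $Y$, the inclusion $Z\subseteq Y$ forces $D_Y\subseteq D_Z$, since lying on every path to every element of the larger set $Y$ is a stronger requirement than doing so for $Z$. Thus $\mathrm{MRCA}(Y)\in D_Y\subseteq D_Z$, and because $\mathrm{MRCA}(Z)$ is, by definition, below every other element of $D_Z$ under the total order on $D_Z$ noted after Definition~\ref{def::MRCA}, it is below or equal to $\mathrm{MRCA}(Y)$.
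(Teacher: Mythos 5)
Your arguments for (i) and (iii) coincide with the paper's, and for (ii) you use the same two ingredients the paper does (the ``alternative path avoiding $\mathrm{MRCA}(Z)$'' contradiction and the directed-cycle contradiction), merely arranged differently: the paper splits into cases according to whether the second parents lie below $\mathrm{MRCA}(Z)$, while you first rule out the ``not below'' alternative and then always land in the cycle case. You are also more careful than the paper in checking that each hybrid child has a descendant in $Z$ and that the concatenated walk through $p_i$ is a genuine path.

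There is one case your proof of (ii) silently excludes: the two hybrid out-edges of $v=\mathrm{MRCA}(Z)$ may share the \emph{same} child, since Definition~\ref{def::network} (as the paper notes) permits two parallel edges between a pair of nodes. In that situation there is no ``second parent $p_i\neq v$,'' and phrases like ``the other child $c_{3-i}$'' do not apply, so your argument as written does not go through. The fix is immediate from your own engine: if both out-edges of $v$ end at the same hybrid node $c$, then every directed path from $r$ to any $z\in Z$ that passes through $v$ must continue to $c$, so $c\in D$ lies strictly below $v$, contradicting minimality. The paper disposes of this subcase explicitly before introducing the distinct children $x_1\neq x_2$; you should do the same.
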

\begin{proof} To see (i), suppose that the indegree  of MRCA$(Z)$ is two. Then the outdegree would be one, and the child of MRCA$(Z)$ would be in any path from the root to any taxa in $Z$,  contradicting the definition of MRCA$(Z)$. 
	
	For (ii), suppose the  out edges of MRCA($Z$), $e_1$ and $e_2$, are both hybrid. If  $e_1$ and $e_2$ have the same child then every path from   $r$   to any $z\in Z$ would contain that node, contradicting the definition of MRCA($Z$).

	 Now denote by $x_1\neq x_2$ the child nodes of $e_1$ and $e_2$ respectively.  If both $x_1$ and $x_2$ had parents below MRCA($Z$), then $x_1$ has a parent below $x_2$ and $x_2$ has a parent below $x_1$ giving a directed cycle. Thus, without loss of generality, assume $x_1$ has parents MRCA($Z$) and $v$ with $v$ not below MRCA($Z$).   Let $z\in Z$ with $z$ below $x_1$. If we remove the   MRCA($Z$) from $\mathcal{N}^+$ there is still a path from $r$ to $z$ (which goes from $r$ to $v$ to $x_1$ to $z$). This contradicts the fact that MRCA($Z$) is on all paths from $r$ to any $z\in Z$.
	 
	 Finally, (iii) follows directly from the definition.  
 $\Box$\end{proof}

\begin{lemma}\label{lem::mrcaxy}
	Let $\mathcal{N}^+$ be a (metric or topological)   binary rooted phylogenetic  network on  $X$  and let $Z\subset X$, $|Z|\geq 2$. For every  $x\in Z$, there is a $y\in Z$ such that MRCA($x,y$)=MRCA($Z$).
\end{lemma}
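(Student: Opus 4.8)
The plan is to work entirely with the chain of forced ancestors of the fixed leaf $x$. For a node or leaf $u$, let $S_u$ denote the set of nodes lying on \emph{every} directed path from the root $r$ to $u$; as in the discussion following Definition \ref{def::MRCA}, the elements of $S_u$ all lie on any single such path, so $S_u$ is totally ordered by the ancestor relation, i.e. it is a chain descending from $r$ to $u$. Writing $m=\,$MRCA$(Z)$, observe that $m\in S_z$ for every $z\in Z$ (this is exactly what $m\in D$ means in Definition \ref{def::MRCA}), and in particular $m\in S_x$. Since $x$ is a leaf it is an ancestor of nothing, so $m\neq x$ and $m$ sits strictly above $x$ on the chain $S_x$. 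Hence the part of $S_x$ strictly below $m$ is nonempty; let $s$ be its top element, that is, the node of $S_x$ immediately below $m$. The partner $y$ I seek will be any element of $Z$ whose paths from $r$ can avoid $s$.

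First I would produce such a $y$. Because $s$ lies strictly below $m$, and $m$ is by definition the lowest node of $D=\bigcap_{z\in Z}S_z$, the node $s$ cannot belong to $D$; thus there is some $y\in Z$ with $s\notin S_y$, i.e. some path from $r$ to $y$ that avoids $s$. Note $y\neq x$ since $s\in S_x$. The heart of the argument is the following claim, which is exactly where the subtlety of hybrid nodes (the possibility of reaching a node by several routes) must be confronted: \emph{if $t\in S_x$ lies below $s$ on the chain $S_x$, then $s$ lies on every path from $r$ to $t$.} To see this, note that $s$ and $t$ both lie on any path to $x$, so $s$ is an ancestor of $t$; if some path $P$ from $r$ to $t$ missed $s$, I could append to $P$ a directed path from $t$ down to $x$, which cannot meet $s$ because $s$ is an ancestor of $t$ and $\mathcal{N}^+$ is acyclic. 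This would yield a path from $r$ to $x$ avoiding $s$, contradicting $s\in S_x$. I expect this claim to be the main obstacle; once it is in hand the remainder is bookkeeping.

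Finally I would assemble the pieces. Suppose some node $t\in S_x\cap S_y$ were strictly below $m$; then $t$ lies below $s$ on $S_x$, so by the claim every path to $y$ passes through $t$ and hence through $s$, forcing $s\in S_y$ and contradicting the choice of $y$. Therefore $S_x\cap S_y$ contains no node strictly below $m$. On the other hand $m\in S_x\cap S_y$, and $S_x\cap S_y$ is precisely the set of nodes on every path from $r$ to $x$ and to $y$, whose lowest element is MRCA$(x,y)$; being a subchain of $S_x$ with no element below $m$ but containing $m$, its minimum is $m$ itself. Hence MRCA$(x,y)=m=\,$MRCA$(Z)$, as required. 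I note in passing that this argument uses only that $S_x$ is a chain and never inspects the children of $m$, so none of Lemma \ref{lem::MRCAproperties} is needed.
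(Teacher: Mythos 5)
Your proof is correct, but it takes a genuinely different route from the paper's. The paper argues by an extremal choice over all of $Z$: assuming MRCA($x,y$) lies strictly below $m=$ MRCA($Z$) for every $y\in Z$, it fixes a path from $m$ to $x$, selects a taxon $z$ for which MRCA($x,z$) is highest among all pairwise MRCAs with $x$, shows that this maximal node would then lie on every path from the root to every taxon of $Z$, and so contradicts the minimality of $m$ in the set $D$ of Definition \ref{def::MRCA}. You instead work locally on the chain $S_x$ of forced ancestors of the single taxon $x$: you isolate the node $s$ immediately below $m$ on that chain, use the minimality of $m$ in $D$ only to extract one witness $y$ with $s\notin S_y$, and then prove a bottleneck claim (every forced ancestor of $x$ lying below $s$ forces passage through $s$) to conclude that $S_x\cap S_y$ has minimum exactly $m$. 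Both arguments rest on the same foundational observation made after Definition \ref{def::MRCA}, that sets of forced ancestors are totally ordered; but yours is more constructive, since it tells you where to look for a suitable $y$ (any taxon admitting a path that dodges $s$), and, as you note, it bypasses Lemma \ref{lem::MRCAproperties} entirely, whereas the paper invokes part (iii) of that lemma to place the pairwise MRCAs at or below $m$. One small piece of bookkeeping: in your final paragraph a node $t\in S_x\cap S_y$ strictly below $m$ could be $s$ itself rather than lie strictly below $s$, but that case contradicts $s\notin S_y$ immediately, so nothing is lost.
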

\begin{proof}
	Let $m$=MRCA($Z$), fix $x\in Z$ and let $P$ be a path from $m$ to $x$. By definition of MRCA, for all $y\in Z$, MRCA($x,y$) is a node in $P$ and is below or equal to $m$ by Lemma \ref{lem::MRCAproperties}. Suppose that  MRCA($x,y$) is below $m$ for all $y\in Z$. Let $z\in Z$ be such that  MRCA$(x,z)$ is above or equal to MRCA($x,y$) for all $y\in Z\smallsetminus\{z\}$. 
	
		We claim that any path from $m$ to   $y\in Z$ passes through  MRCA$(x,z)$. Suppose there exists taxon $y$  with path $P'$ from $m$ to $y$ that does not pass through  MRCA$(x,z)$.  But $P'$ must pass through MRCA($x,y$). Since MRCA($x,y$) is below  MRCA$(x,z)$, there is a path from $m$ to MRCA($x,y$) to $x$ that does not contain MRCA($x,z$). This is a contradiction. 
		
			But every path from $m$ to any $y\in Z$ passes through MRCA($x,z$), contradicting that MRCA($x,z$) is below $m$.  $\Box$\end{proof}

By this Lemma  we can characterize the MRCA($Z $) as the highest node of the form MRCA($x,y$) for some $x,y\in Z$ , or  the highest node of that form for fixed $x\in Z$. 


\subsection{Unrooted networks.}
Let $G$ be a  directed or semidirected  graph with $z$   a degree two node.  Let $x$ and $y$ be the two nodes adjacent to $z$. Then, up to isomorphism, the subgraph on  $x,y$ and $z$ must be one of the graphs shown on the left of Figure \ref{fig::supress}, which we denote by $H$. By \textit{suppressing} $z$ we mean replacing $H$ in $G$ by the graph to the right of it in Figure \ref{fig::supress}.  
\begin{figure}\begin{center}
	\includegraphics[scale=.45]{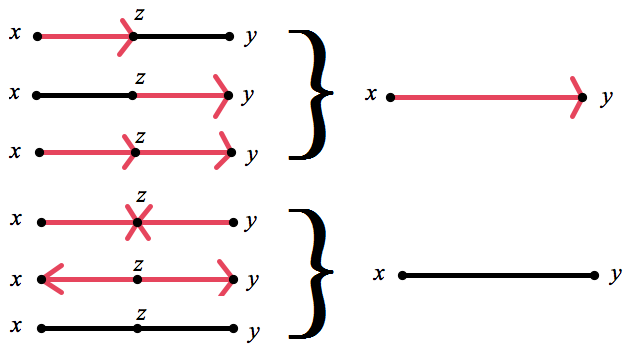}  
	\caption{On the left are all the semidirected graphs, up to isomorphism, on a degree two node $z$ and its adjacent vertices $x$ and $y$. On the right are the corresponding graphs obtained by suppressing $z$.}\label{fig::supress}
\end{center}\end{figure}	
\begin{definition}	
	Let $\mathcal{N}^+$ be a  binary topological rooted phylogenetic network on a set of taxa $X$. Then  $\mathcal{N}^-$ is the semidirected network obtained by 1) keeping only the edges and nodes below MRCA($X$); 2) removing the direction of all tree edges; 3) suppressing  MRCA($X$). We refer to $\mathcal{N}^-$ as the \textbf{topological unrooted semidirected network induced from $\mathcal{N}^+$}.
 \end{definition}	
Figure \ref{fig::root2cyc} shows an example of a network $\mathcal{N}^+$ and its induced $\mathcal{N}^-$. We now introduce a metric on $\mathcal{N}^-$ induced from one on $\mathcal{N}^+$.

\begin{definition}\label{def::urooted}
	Let $(\mathcal{N}^+,(\lambda, \gamma))$ be a metric binary rooted phylogenetic network and let  $\mathcal{N}^-$ be the topological unrooted semidirected network induced from $\mathcal{N}^+$.
	Denote by $e^*$ the edge of $\mathcal{N}^-$ introduced in place of  the edges $e_1$ and $e_2$ in $\mathcal{N}^+$ when MRCA$(X)$ is suppressed.  Define $\lambda':E(\mathcal{N}^-) \to \mathbb{R}_{>0}$ such that $\lambda(e^*)=\lambda(e_1)+\lambda(e_2)$ and $\lambda'(e)=\lambda(e)$ for  $e\in \mathcal{N}^-$, $e\neq e^*$. If $e^*$ is not   hybrid, $\gamma'=\gamma$, else let $\gamma'(h)=\gamma(h)$ for all hybrid edges of $\mathcal{N}^-$ other than $e^*$ and  $\gamma'(e^*)=\gamma(e_i)$, where $e_i$ is, by Lemma \ref{lem::MRCAproperties}, the single hybrid edge in $\{e_1,e_2\}$.  We refer to $(\mathcal{N}^-,(\lambda', \gamma'))$ as the \textbf{metric unrooted semidirected network induced from $(\mathcal{N}^+,(\lambda, \gamma))$}. 	
\end{definition}

The networks considered in this work are always induced  from a rooted binary metric phylogenetic network.  To simplify language, we refer to a (metric or topological) binary rooted phylogenetic network as a \emph{(metric or topological) rooted network} and  to a induced (metric or topological) unrooted semidirected  phylogenetic network as a  \emph{(metric or topological) unrooted network}. 

We note that not all binary semidirected graphs are topological unrooted networks, since some graphs are not compatible with suppressing the root on any rooted network. Moreover, $\mathcal{N}^-$ might be induced from several rooted networks $\mathcal{N}^+$. See Figure \ref{Fig:Induced and not}.

\begin{figure}\begin{center}
	\includegraphics[scale=.35]{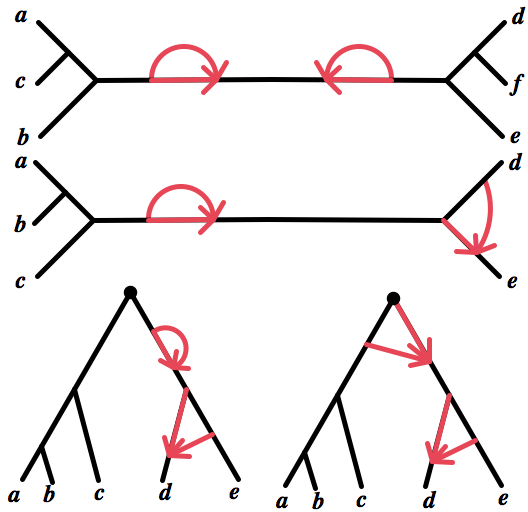}  
	\caption{The top graph is not a topological unrooted semidirected
		phylogenetic network, since its  directed edges  cannot be obtained by suppressing the root of any 6-taxon topological binary rooted phylogenetic network. The middle graph is the induced topological unrooted network from either of the bottom rooted networks, as well as others.}\label{Fig:Induced and not}
\end{center}\end{figure}

Although an unrooted  network $\mathcal{N}^-$ does not have a root specified, since  hybrid edges are directed, the suppressed MRCA($X$) of $\mathcal{N}^+$ must have been located `above' them. Thus  in $\mathcal{N}^-$, we still have a well-defined notion of which taxa are descendants of a hybrid node $v$. These are  the taxa $x$ such that  there exists a semidirected path from $v$ to $x$ in $\mathcal{N}^-$. In this case we say that \textit{$x$ descends from $v$}. 

\subsection{Induced networks on subset of taxa}
Since later arguments require   an understanding  of the behavior of the network multi-species coalescent model on a subset of taxa, we introduce some needed definitions.

\begin{definition}\label{def:inducerootnetontaxa}
	Let $\mathcal{N}^+$ be a (metric or topological) rooted network on  $X$  and let $Z\subset X$. The \textbf{induced rooted network $\mathcal{N}^+_Z$ on $Z$}  is the network obtained from  $\mathcal{N}^+$ by 1) retaining only
	edges and nodes in  paths from the root to any  taxa in $Z$; 2) suppressing all degree two nodes  except the  root; 3) in the case the root then has outdegree one,  contracting the edge incident to the root. 
\end{definition}
Note that MRCA($Z,\mathcal{N}^+_Z$)=MRCA($Z,\mathcal{N}^+$).  If $|Z|=4$ then  $\mathcal{N}^+_Z$, the  \textit{induced rooted quartet network on $Z$}, will also be  denoted  by  $\mathcal{Q}^+_Z$ to emphasize it involves only 4 taxa.

\begin{definition}\label{def:inducemrcanetontaxa}
	Let $\mathcal{N}^+$ be a (metric or topological) rooted network on  $X$  and let $Z\subset X$. The \textbf{induced MRCA network of $Z$}, denoted $\mathcal{N}^\oplus_Z$, is the rooted network obtained from $\mathcal{N}^+_Z$ by deleting everything above the MRCA($Z,\mathcal{N}^+$)
\end{definition}

In particular we note that $\mathcal{N}^\oplus_Z$ has root  MRCA($Z,\mathcal{N}^+$). If $|Z|=4$ then  $\mathcal{N}^\oplus_Z$, the  \textit{induced MRCA quartet network on $Z$}, is also denoted  by  $\mathcal{Q}^\oplus_Z$.\\ 

\begin{definition}
	Let $G$ be a semidirected graph and let $x,y$ be two nodes in $G$. A \textbf{trek} in $G$ from $x$ to $y$ is an ordered pair of semidirected
	paths $(P_1, P_2)$ where $P_1$ has terminal node $x$, $P_2$ has terminal node $y$, and both $P_1$ and $P_2$
	have starting node $v$. The node $v$  is called the \textbf{top} of the trek, denoted   top$(P_1,P_2)$. A trek $(P_1, P_2)$ is \textbf{simple} if the
	only common node among $P_1$ and $P_2$ is  $v$.  
\end{definition}

This definition is adopted from non-phylogenetic studies of statistical models on graphs, such as \cite{sullivant2010}.

\begin{definition}\label{def:inducenetontaxa}
	Let $\mathcal{N}^-$ be a (metric or topological) unrooted network on  $X$  and let $Z\subseteq X$. The \textbf{induced unrooted network $(\mathcal{N}^-)_Z$ on a set of taxa $Z$}  is the network obtained from  $\mathcal{N}^-$ by retaining only  edges in simple treks between pairs of taxa in $Z$, and then suppressing all degree two nodes.
\end{definition}

If $|Z|=4$ then  $(\mathcal{N}^-)_Z$, the  \textit{induced unrooted quartet network on $Z$}, is also denoted  by  $\mathcal{Q}^-_Z$. 


While the following statement is intuitively plausible, its proof is rather involved and thus given in the Appendix.

\begin{proposition}\label{prop::ancestralinduced}
	Let $\mathcal{N}^+$ be a (metric or topological) rooted network on $X$ and  let    $Z\subseteq X$. Then  $(\mathcal{N}^-)_Z$ and   $(\mathcal{N}^+_Z)^-$ are isomorphic.
\end{proposition}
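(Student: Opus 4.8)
The plan is to show that both $(\mathcal{N}^-)_Z$ and $(\mathcal{N}^+_Z)^-$ arise from $\mathcal{N}^+$ by selecting the \emph{same} set of edges, carrying the \emph{same} hybrid orientations, and then suppressing degree-two nodes; since suppression is confluent (each move of Figure \ref{fig::supress} is local, so the result is independent of the order), the two outputs will be isomorphic. The convenient bridge is the induced MRCA network $\mathcal{N}^\oplus_Z$ of Definition \ref{def:inducemrcanetontaxa}: one checks directly that $(\mathcal{N}^+_Z)^-$ is obtained from $\mathcal{N}^\oplus_Z$ by undirecting its tree edges and suppressing its root, using that MRCA$(Z,\mathcal{N}^+_Z)=$ MRCA$(Z,\mathcal{N}^+)$. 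Hence the edges retained on the $(\mathcal{N}^+_Z)^-$ side are exactly the edges of $\mathcal{N}^+$ lying below MRCA$(Z)$ and on some directed path from MRCA$(Z)$ to a taxon of $Z$, with their original orientations. The whole statement thus reduces to a single combinatorial characterization: an edge $e$ below MRCA$(X)$ lies on a simple trek between two taxa of $Z$ in $\mathcal{N}^-$ if and only if $e$ lies below MRCA$(Z)$ and on a directed path to a taxon of $Z$.

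For the forward direction I would first observe that a semidirected path in $\mathcal{N}^-$ ending at a leaf is \emph{unimodal}, i.e.\ has no valley: a valley would require entering a node from, and then leaving to, a parent, but a tree node has a single parent edge which cannot be reused, and a hybrid node's parent edges are directed and cannot be traversed child-to-parent. Consequently any simple trek between $x,y\in Z$ may be rewritten with its top $w$ placed at the unique peak, so that both legs become directed paths descending to $x$ and to $y$ while keeping the same edge set. A short argument then places $w$ below or equal to MRCA$(\{x,y\})$, hence below or equal to MRCA$(Z)$ by Lemma \ref{lem::MRCAproperties}(iii): since MRCA$(\{x,y\})$ lies on every root-to-$x$ and root-to-$y$ path it lies on each descending leg, and being common to two legs that meet only at $w$ it must equal $w$ or lie above it. Therefore every edge of the trek is below MRCA$(Z)$ and on a directed path to $x$ or $y$, as desired.

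For the reverse direction I would take an edge $e$ below MRCA$(Z)=m$ on a directed path $m\rightsquigarrow z$ with $z\in Z$, and invoke Lemma \ref{lem::mrcaxy} to choose $y\in Z$ with MRCA$(\{z,y\})=m$. The directed paths $m\rightsquigarrow z$ (through $e$) and $m\rightsquigarrow y$ form a trek with top $m$ through $e$, and the remaining task is to make it \emph{simple}. Here a Menger-type argument applies: MRCA$(\{z,y\})=m$ forbids any node strictly below $m$ from lying on all $m$-to-$z$ and all $m$-to-$y$ paths, since such a node, after factoring root-to-$z$ and root-to-$y$ paths through $m$, would be common to all of them and contradict minimality of the MRCA; this yields internally disjoint descending paths realizing a simple trek. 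Arranging that this simple trek still passes through the prescribed edge $e$ is the point that requires the most care, and is where I expect the main obstacle to lie.

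Finally I would confirm that the edge correspondence respects hybrid orientation — hybrid edges keep their direction under both constructions and under every suppression move of Figure \ref{fig::supress} — and dispatch the boundary bookkeeping comparing the suppression of MRCA$(X)$ performed in forming $\mathcal{N}^-$ against the suppression of MRCA$(Z)$ and the possible contraction of a pendant root edge performed in forming $\mathcal{N}^+_Z$, verifying that these act compatibly on the shared retained substructure. Once the retained edge sets and their orientations are matched by the characterization above, confluence of degree-two suppression delivers the claimed isomorphism.
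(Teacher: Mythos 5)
Your overall architecture matches the paper's: reduce the statement to an edge-level characterization, prove the forward inclusion by unimodality of semidirected paths (no valleys, since a valley node would either reuse a tree node's unique parent edge or traverse a hybrid edge against its direction), and prove the reverse inclusion by exhibiting a simple trek through each retained edge. The forward direction as you present it is essentially the paper's argument and is sound, including the placement of the trek's top below MRCA$(Z)$ via Lemma \ref{lem::MRCAproperties}(iii).

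The reverse direction, however, contains a genuine gap, and it is exactly the one you flag yourself: producing a \emph{simple} trek between taxa of $Z$ that passes through the \emph{prescribed} edge $e$. Your Menger-type observation only shows that no single node strictly below $m=\text{MRCA}(Z)$ separates $m$ from both $z$ and $y$, which at best yields \emph{some} pair of internally disjoint descending paths --- with no control over whether either of them uses $e$. The difficult configuration is when both chosen taxa lie below $e$ (which your choice of $y$ via Lemma \ref{lem::mrcaxy} does not exclude, since descendants of $e$'s child may also be reachable by routes avoiding $e$): then any trek through $e$ must climb through $e$ and descend again, and the two legs can be forced to meet repeatedly below $e$. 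This is precisely the content of the paper's Lemma \ref{lem::trekmrca}, whose proof occupies most of the Appendix: it fixes a path $P_x$ through $e$ and a path $P_y$ avoiding $e$'s child, considers the set $B$ of their common nodes below $e$, and runs a minimality argument on $|B|$ with a five-case analysis of where an auxiliary path $P^*$ can first meet $P_x\cup P_y$, repeatedly rerouting to strictly shrink $B$. Without an argument of this kind (or some substitute), the claim that every edge of $(\mathcal{N}^+_Z)^-$ lies on a simple trek in $\mathcal{N}^-$ between taxa of $Z$ is unsupported, and the proposed proof does not go through.
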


\subsection{Cycles}
Although the networks $\mathcal{N}^+$, $\mathcal{N}^-$ are acyclic (in both, the directed and semidirected settings), their undirected graphs $U(\mathcal{N}^+)$, $U(\mathcal{N}^-)$ may contain a cycle. Thus the term `cycle' may be used to unambiguously refer to cycles in the undirected graphs. We formalize this with the following definition:

\begin{definition}\label{cycle}
	Let $\mathcal{N}$ be a (metric or topological, rooted or unrooted) network. A \textbf{cycle} in $\mathcal{N}$ is a non-empty  path from a node to itself,  allowing edges to be traversed without regard to their possible direction. 
	The \textbf{size} of the cycle is the number of edges in the path. A \textbf{$k$-cycle} is a cycle of size $k$. 
\end{definition}

By \textit{suppressing a cycle} $C$ in a graph we  mean removing all edges in $C$ and identifying all nodes in $C$.

\section{Structure of level-1 networks}

The class of  phylogenetic networks is often too large to obtain  strong mathematical results, so it is common to restrict to networks that have a simpler structure, for instance, the class of \textit{level-1}  phylogenetic networks.

\begin{definition}\label{level1}Let $\mathcal{N}$ be a (rooted or unrooted) topological   network. If no two cycles in $\mathcal{N}$ share an edge, then $\mathcal{N}$ is  \textbf{level-1}.
\end{definition}
If  $\mathcal{N}$ is a level-1 network, any subnetwork or induced network of $\mathcal{N}$ is also level-1. \\

Given a hybrid node $v$,  denote the hybrid edges whose child is $v$ by $h_v$ and $h_v'$. Then  $h_v$ and $h_v'$ are called the \textit{hybrid edges of} $v$.

\begin{lemma}\label{lem:assos}
	Let $\mathcal{N}$  be a (topological or metric, rooted or unrooted) level-1 network and let $C$ be a cycle of $\mathcal{N}$. Then $C$ contains exactly one hybrid node $v$, and the associated hybrid edges $h_v$, $h_v'$. Furthermore, each node of $\mathcal{N}$ is in at most one cycle and, as a result, $v$, $h_v$ and $h_v'$ are in exactly  one cycle of $\mathcal{N}$. 
\end{lemma}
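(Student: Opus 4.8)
The plan is to prove the three assertions in order: that each node lies in at most one cycle, then that every cycle carries at least one hybrid node, and finally that it carries exactly one, with both of its hybrid edges on the cycle. Throughout I would work with a rooted network $\mathcal{N}^+$ and deduce the unrooted case from it, since every unrooted network is induced from a rooted one and the suppression of $\mathrm{MRCA}(X)$ only merges two incident edges into $e^*$; it neither creates nor destroys cycles, so cycles and their hybrid nodes correspond.

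The "at most one cycle per node" statement is purely local and uses only that $\mathcal{N}$ is binary. Every node has degree at most three. A cycle through a node $u$ uses exactly two of the edges incident to $u$, so if $u$ lay in two distinct cycles $C_1\neq C_2$, then since the level-1 condition forbids $C_1$ and $C_2$ from sharing an edge, the four edges they use at $u$ would be distinct, forcing $\deg(u)\geq 4$, a contradiction. Hence each node, and in particular each hybrid node together with its two hybrid edges, lies in at most one cycle; this already supplies the final "as a result" clause once the rest is proved.

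The core of the argument is a single tool about the directed structure: if a node $u$ is reachable from the root $r$ by two distinct directed paths, then the two in-edges of $u$ used by these paths are distinct and lie on a common cycle $C'$. To see this, trace both paths upward from $u$ until they first meet at a node $w$; the two internally disjoint directed segments from $w$ down to $u$ close into a cycle containing both chosen in-edges of $u$. I would then traverse the given cycle $C$ and, using that the directed graph underlying $\mathcal{N}^+$ is acyclic, observe that $C$ cannot be consistently oriented, so its edge directions must reverse along $C$. At a reversal where both incident cycle-edges point inward (a \emph{local sink}) the node has indegree two and is therefore a hybrid node by Definition \ref{def::network}, and its two incident cycle-edges are precisely its two hybrid edges. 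This produces at least one hybrid node on $C$ with both hybrid edges on $C$.

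Finally I would rule out any second hybrid node on $C$. Two local sinks would force two local sources, and two sources $a_1,a_2$ flanking a sink $s$ give two distinct directed $r$-to-$s$ paths, hence by the tool a cycle $C'$ through the two (on-$C$) in-edges of $s$; when $C$ has a second sink it contains strictly more than these edges, so $C'\neq C$ while sharing an edge with it, contradicting the level-1 condition. Likewise, a hybrid node $v'$ on $C$ that is not a local sink would have one hybrid edge on $C$ and one off $C$, again yielding two distinct directed $r$-to-$v'$ paths and a cycle $C'$ that shares the on-$C$ hybrid edge of $v'$ with $C$ yet contains its off-$C$ hybrid edge, so $C'\neq C$ shares an edge with $C$, again contradicting level-1. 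Thus $C$ has exactly one hybrid node $v$, both $h_v,h_v'$ lie on $C$, and by the local degree argument $v,h_v,h_v'$ lie in exactly one cycle. I expect the main obstacle to be making the cycle-extraction tool fully rigorous — in particular verifying that the extracted $C'$ genuinely differs from $C$ while provably sharing an edge with it, and handling the unrooted semidirected case, where tree edges carry no orientation, so the "local sink" notion must be phrased through hybrid edges or via the reduction to the rooted network.
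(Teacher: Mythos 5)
Your argument is correct, but it is worth noting that the paper does not actually prove this lemma at all: it simply defers to Rossello and Valiente \cite{Rossello2009}, remarking that each statement is established there "using different terminology." So what you have produced is a self-contained proof where the paper has none, and the structure of your argument is sound: the degree-count at a node (two distinct cycles through $u$ cannot share an edge by level-1, hence would force $\deg(u)\ge 4$ in a graph of maximum degree $3$) correctly gives "at most one cycle per node"; acyclicity of the directed graph forces at least one local sink on any cycle, which must be a hybrid node with both hybrid edges on the cycle; and the two exclusion arguments (a second sink, or a hybrid node with one hybrid edge off the cycle) each produce, via your path-merging tool, a cycle $C'$ that shares an edge with $C$ but differs from it, contradicting level-1. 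The two places you should tighten are exactly the ones you flag, plus one you do not: (i) the extraction of $C'$ should pick $w$ as the first node, walking backward from $u$ along $P_1$, that lies on $P_2$ — this guarantees the two descending segments are internally disjoint and, being directed paths in a DAG, simple, so their union is a genuine cycle containing both in-edges of $u$; and (ii) in the unrooted reduction, suppressing MRCA$(X)$ does not leave cycles untouched — a $k$-cycle through MRCA$(X)$ becomes a $(k-1)$-cycle (one should also note MRCA$(X)$ cannot top a $2$-cycle, by the definition of MRCA) — but this bijection between cycles still preserves hybrid nodes and the level-1 property, so the deduction goes through. Compared with citing \cite{Rossello2009}, your route buys a proof phrased entirely in the paper's own vocabulary of hybrid nodes, hybrid edges, and the level-1 condition, at the cost of having to make the standard two-paths-give-a-cycle argument rigorous.
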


The proof of each statement of this Lemma, using different terminology, is given by Rossello and Valiente \cite{Rossello2009}. 

\begin{proposition}\label{prop::structureabove}
	Let $\mathcal{N}^+$ be a   topological level-1 rooted network on   $X$.  
	The structure of all the nodes and edges above MRCA($X$) in $\mathcal{N}^+$ is a (possibly empty) chain of 2-cycles connected by edges, as depicted in   Figure \ref{fig::aboveMRCAZ}.
\end{proposition}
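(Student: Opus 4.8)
The plan is to prove a slightly more general, inductive statement: if $v$ is any node of a level-1 rooted network that has indegree at most one and lies on every directed path from the root $r$ to every taxon, then the nodes strictly above $v$, together with $v$ and all edges among them, form a chain of $2$-cycles joined by single edges, with $r$ at the top and $v$ at the bottom. The Proposition then follows by taking $v=\mathrm{MRCA}(X)$, which has indegree at most one by Lemma \ref{lem::MRCAproperties}(i) and lies on every root-to-taxon path by the definition of the MRCA. I would argue by strong induction on the number of nodes strictly above $v$. In the base case there are none, which forces $v$ to have indegree $0$, hence $v=r$ and the chain is empty.

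The first ingredient I would isolate is an elementary sublemma: if $w$ is a tree node (or the root) lying strictly above $v$, then both children of $w$ are above or equal to $v$. Indeed, taking a child $b$ and a taxon $x_b$ below it yields a path $r\to\cdots\to w\to b\to\cdots\to x_b$, which must meet $v$; since $w$ is strictly above $v$ the node $v$ cannot lie on the $r\to w$ portion (that would create a directed cycle), so $v$ lies on the $b\to x_b$ portion, giving $b$ above or equal to $v$. For the inductive step assume there is at least one node above $v$, so $v\neq r$ and, its indegree being at most one, $v$ has a unique parent $p$. I would first show $p$ is a hybrid node: if $p$ were the root or a tree node it would have a second child $b$ which, by the sublemma, is above or equal to $v$; the case $b=v$ contradicts $v$ having indegree at most one, while $b$ strictly above $v$ forces a path from $b$ into $v$ through its unique parent $p$, placing $p$ below its own child $b$ and producing a directed cycle.

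The crux of the argument, and the step I expect to be the main obstacle, is showing that the unique cycle $C$ containing the hybrid node $p$ (whose existence and uniqueness are guaranteed by Lemma \ref{lem:assos}) is in fact a $2$-cycle. First note that every node strictly above $v$ reaches $v$ only through its unique parent $p$, hence is above or equal to $p$. Now suppose $C$ has size at least $3$. By Lemma \ref{lem:assos} its only hybrid node is $p$, so $C$ contains an intermediate node $w$, necessarily a tree node, with one child on $C$ and one off-cycle child $w'$. The sublemma gives $w'$ above or equal to $v$, hence above or equal to $p$, so $w'$ reaches $p$. Combining the edge $w\to w'$ and a path $w'\to\cdots\to p$ with the arc of $C$ from $w$ to $p$ produces a cycle that meets $C$ in the node $w$ but uses the edge $w\to w'\notin C$, contradicting the level-$1$ conclusion of Lemma \ref{lem:assos} that each node lies in at most one cycle. (The borderline cases $w'=p$ and an early re-entry of the new path into $C$ are handled identically, by extracting the first return of this path to $C$.) Therefore $C$ is a $2$-cycle $t\rightrightarrows p$ consisting of a top node $t$ and the hybrid $p$ joined by two parallel edges, followed by the edge $p\to v$.

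It remains to close the induction. The node $t$ has outdegree $2$, so it is a tree node or the root and thus has indegree at most one. Since every root-to-taxon path passes through $v$, hence through $p$, hence through $t$ (the unique parent of $p$ in the $2$-cycle), the node $t$ again lies on every root-to-taxon path. Moreover the nodes strictly above $t$ are exactly those strictly above $v$ with $t$ and $p$ deleted, so there are two fewer of them, and the induction measure drops. Applying the inductive hypothesis to $t$ yields a chain of $2$-cycles from $r$ down to $t$; appending the $2$-cycle $t\rightrightarrows p$ and the edge $p\to v$ gives the desired chain from $r$ to $v$. Because each inductive step accounts for exactly the two nodes of one $2$-cycle, no stray tree nodes survive on the connecting edges, which are therefore single bridges, matching Figure \ref{fig::aboveMRCAZ}.
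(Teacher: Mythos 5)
Your proof is correct, but it runs in the opposite direction from the paper's and rests on a different key step, so a comparison is worth recording. The paper inducts on the number of edges above $m=\mathrm{MRCA}(X)$ and peels $2$-cycles off from the \emph{top}: it shows the two out-edges of the root must share a child, because otherwise one obtains three distinct paths from $r$ to $m$ beginning with three distinct edges, which violates the level-1 condition; it then deletes the resulting $2$-cycle and recurses downward. You instead peel from the \emph{bottom}: you strengthen the statement to any node $v$ of indegree at most one lying on all root-to-taxon paths, show its parent $p$ must be hybrid, and then argue that the unique cycle through $p$ is a $2$-cycle because any intermediate tree node $w$ of a larger cycle would have an off-cycle child that (by your sublemma) is again ancestral to $v$, hence to $p$, producing a second cycle through $w$ and contradicting Lemma \ref{lem:assos}. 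Both arguments are sound inductions that remove one $2$-cycle per step; yours buys a cleaner handling of the degenerate configurations (you correctly note the need to truncate the auxiliary path at its first return to the cycle, and your strengthened inductive hypothesis makes explicit why the connecting edges carry no stray tree nodes), while the paper's top-down version is shorter because it only ever has to analyze the two edges leaving the root. The one place where your write-up leans on an unstated fact is the claim that a cycle of size at least $3$ containing the hybrid node $p$ has an ``intermediate'' node that is necessarily a tree node with an off-cycle out-edge; this does follow from Lemma \ref{lem:assos} (no second hybrid node on $C$) together with the binary degree constraints, but it deserves a sentence, since it is exactly where acyclicity forces the cycle to consist of two directed paths from a single top node down to $p$.
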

\begin{proof}
	Let $m=\text{MRCA}(X)$, and denote by $r$ the root of $\mathcal{N}^+$. The proof is by induction on the number of the edges above $m$.
	If there are no edges above $m$, then $m=r$ and the result is trivially true. By Lemma \ref{lem::MRCAproperties}, one easily sees that there cannot be only $1$ or $2$ edges above $m$ in a binary phylogenetic network.
	
	Now assume the claim holds when there are at most $k$ edges above $m$ and suppose there are $k+1$ edges above $m$. Note that $r$ has outdegree 2 by the definition of  $\mathcal{N}^+$.  
	
	Suppose that edges incident to $r$ have different children,  $x$ and $y$. Note neither $x$ nor $y$ can be $m$ . The outdegree of one of $x$ or $y$ must be 2, otherwise both would be hybrid nodes, which would require $x$ above $y$ and $y$ above $x$. Without loss of generality suppose  $x$ has outdegree 2, and denote by $e_1$ and $e_2$ its out edges, and denote by $e_3$ the edge $(r,y)$.  Since every path from $r$ to a leaf goes through $m$, there are at least 3 distinct paths $P_1$, $P_2$, $P_3$ from $r$ to $m$, where $P_i$ contains $e_i$.

	  This contradicts the level-1 condition. Thus $x=y$, and the edges from $r$  form a 2-cycle.

	Now since $x$ is a hybrid node, it has outdegree 1, with child $v$.  Also,  there are $k-3$ edges above $m$ that are also below $v$.  Applying the inductive hypothesis to   $\mathcal{N}^+$ with edges above $v$ removed, the result follows.  $\Box$\end{proof}

\begin{figure}\begin{center}
	\includegraphics[scale=.35]{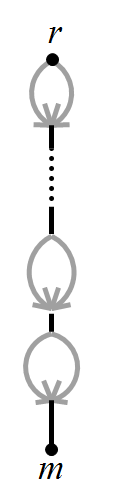}  
	\caption{In a level-1 network on $X$, the structure between the root and $m=$MRCA($X$) is a chain of two cycles. The number of two cycles in the chain could be zero.}\label{fig::aboveMRCAZ}
\end{center}\end{figure}

 Proposition \ref{prop::structureabove} applied to $\mathcal{N}^+_Z$ illustrates the structure of the common ancestry of a subset $Z$ of taxa.   When we pass to a MRCA network or an induced unrooted network,  we ``throw away"  this structure. We   show in Section \ref{Q CF sn}  that under the network multi-species coalescent model this  structure  has no effect on the formation of quartet gene trees. \\

Let $v$ be a hybrid node in a level-1 (rooted or unrooted, metric or topological) network $\mathcal{N}$ on  $X$   and let  $C_v$ be the cycle containing $v$. By removing $C_v$ from $\mathcal{N}$ we obtain a partition of $X$ according to the connected components of the resulting graph. We refer to this partition as the \textit{$v$-partition} and  its partition sets as $v$\textit{-blocks}. 

Note that each node in $C_v$ can be associated to a $v$-block. That is, a $v$-block $B_u$ is associated to a node $u$ in $C_v$ if by removing $u$ from the network, the induced partition of taxa is $\{B_u,X\smallsetminus B_u\}$. We refer to the $v$-block $B_v$, whose elements descend from $v$, as the \textit{${v}$-hybrid block}. Two distinct $v$-blocks $B_u,B_w$ are \textit{adjacent} if the nodes $u,w\in C_v$ are adjacent.

Let $\mathcal{D}=\{C_1,...,C_n\}$ be a collection of  cycles in $\mathcal{N}$.
 The partition of $X$ obtained by removing all cycles in $\mathcal{D}$ is the \textit{network partition induced by  $\mathcal{D}$} and its blocks are \textit{network blocks induced by  $\mathcal{D}$}. When  $\mathcal{D}$  is the set of all cycles  in $\mathcal{N}$ of size at least $k$, the partition is the \textit{$k$-network partition} and  its blocks are   \textit{$k$-network blocks}. The $4$-network blocks   play an important role in Section \ref{sec final}.
 
  The following is straightforward to prove.


\begin{lemma}\label{2taxnetblo}Let $\mathcal{N}$ be a level-1  (rooted or unrooted)  topological network on  $X$. Let $\mathcal{D}=\{C_1,...,C_n\}$ be a collection of  cycles in $\mathcal{N}$.
	For any two taxa $a$ and $b$  in different network blocks  induced by  $\mathcal{D}$, there exists a hybrid node $v$ of some cycle in $\mathcal{D}$ such that $a$ and $b$ are in different $v$-blocks.
\end{lemma}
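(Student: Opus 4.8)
The plan is to prove the contrapositive: assuming that $a$ and $b$ lie in the same $v_i$-block for \emph{every} cycle $C_i\in\mathcal D$ (writing $v_i$ for the unique hybrid node of $C_i$, which exists by Lemma \ref{lem:assos}), I will show that $a$ and $b$ lie in the same network block induced by $\mathcal D$. Throughout I work in the undirected graph $U(\mathcal N)$, which is connected, and I use that two taxa lie in the same network block exactly when they remain connected in $U(\mathcal N)$ after deleting the edges of all the $C_i$ (deleting the cycle \emph{nodes} as well changes nothing for the induced partition of $X$, since they carry no taxa).

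Since $U(\mathcal N)$ is connected, fix a \emph{simple} path $P$ from $a$ to $b$. The heart of the argument is to show that, under the standing assumption, $P$ meets no vertex or edge of any $C_i$; this immediately gives that all edges of $P$ survive the deletion of $\bigcup_i E(C_i)$, placing $a$ and $b$ in the same network block. To see this, fix a cycle $C_i$ and let $u\in C_i$ be the node whose associated $v_i$-block $B_u$ contains $a$ and $b$. First I would record the structural fact that, because $\mathcal N$ is level-1, the subnetwork hanging off $u$ attaches to $C_i$ only at $u$: a second point of attachment would, together with an arc of $C_i$, produce a cycle sharing an edge or vertex with $C_i$, contradicting Lemma \ref{lem:assos}. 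Since $\mathcal N$ is binary, $u$ has a single non-cycle (pendant) direction, so $B_u$ is a \emph{single} connected component of $U(\mathcal N)-u$, and $u$ is a cut vertex separating $B_u$ from every other vertex and edge of $C_i$ (which all lie on the connected arc $C_i$ with $u$ removed, together with the rest of the network).

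With this in place, the key step is the observation that $P$ cannot pass through $u$. Indeed $a,b\in B_u$ and $a,b\neq u$, so if $P$ visited $u$ it would have to leave the component $B_u$ and return to it, and the only gateway is $u$ itself; a simple path cannot revisit $u$. Hence $P$ stays inside $B_u$, and since every vertex and edge of $C_i$ other than $u$ lies on the far side of the cut vertex $u$, the path $P$ avoids all of $C_i$. As $i$ was arbitrary, $P$ avoids every cycle of $\mathcal D$, which completes the contrapositive.

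The step I expect to require the most care is the structural claim that the hanging block $B_u$ is a single connected component of $U(\mathcal N)-u$ and that $u$ is its unique attachment to $C_i$: this is exactly where the level-1 hypothesis (via Lemma \ref{lem:assos}) and binarity are essential, and it is what legitimizes the cut-vertex reasoning. Everything else is routine once this is nailed down; in particular the consistency between the single-node characterization of a $v_i$-block and the edge-deletion description of the partition needs only a sentence. An alternative, essentially equivalent, route would be to contract every cycle of $\mathcal N$ to a point to obtain a tree of blobs and read the separating cycle off the unique tree path between $a$ and $b$, but I find the direct path-surgery argument above cleaner and it avoids having to establish the tree-of-blobs statement from scratch.
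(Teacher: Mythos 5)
Your proof is correct. Note that the paper itself supplies no argument for this lemma --- it is introduced only with ``The following is straightforward to prove'' --- so there is no printed proof to compare against; your write-up is a legitimate filling-in of that gap. The contrapositive formulation, the reduction to a single simple $a$--$b$ path, and the cut-vertex claim are all sound, and the structural step you flag as delicate is exactly the right one: in a binary level-1 network each node $u$ of a cycle $C_i$ has exactly one non-cycle edge, and if the subgraph hanging off $u$ met the rest of the network anywhere other than at $u$, the resulting second cycle would share an arc of $C_i$, contradicting Lemma \ref{lem:assos}. One small caveat worth a clause in the final write-up: when $\mathcal N$ is rooted and the root lies on $C_i$, that cycle node has degree $2$ and \emph{no} pendant direction, so its associated block is empty; this does not affect your argument, since the block containing $a$ and $b$ is nonempty and hence hangs off a degree-$3$ node, but the blanket assertion that every cycle node has ``a single non-cycle (pendant) direction'' should be qualified accordingly. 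With that noted, the cut-vertex reasoning and the conclusion that the simple path avoids every cycle of $\mathcal D$ go through exactly as you describe.
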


%
%

If two taxa $a$ and $b$ are in the same network block induced by $\mathcal{D}$, then they are connected when all cycles in $\mathcal{D}$ are removed, and hence when any single cycle in $\mathcal{D}$ is removed. This comment together with Lemma \ref{2taxnetblo} yields the following.
 
\begin{corollary}\label{cor:partiofparti}
	Let $\mathcal{N}$ be a level-1 (rooted or unrooted)  topological  network on   $X$. Let $\mathcal{D}=\{C_1,...,C_n\}$ be a collection of  cycles in $\mathcal{N}$, with $v_i$  the hybrid node associated to $C_i$. The network partition induced by  $\mathcal{D}$ is the common refinement of the $v_i$-partitions for  $1\leq i\leq n$.
\end{corollary}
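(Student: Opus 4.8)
The plan is to prove the two partitions coincide by showing that they induce the same equivalence relation on $X$; that is, that two taxa $a,b\in X$ lie in a common block of the network partition induced by $\mathcal{D}$ if and only if they lie in a common block of the common refinement of the $v_i$-partitions. By the very definition of common refinement, $a$ and $b$ share a block of the common refinement precisely when, for every $i$ with $1\le i\le n$, they share a block of the $v_i$-partition. So it suffices to establish the equivalence
\[
a,b \text{ in the same network block} \iff a,b \text{ in the same } v_i\text{-block for all } i.
\]

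For the forward implication I would invoke the observation recorded immediately before the statement: if $a$ and $b$ lie in the same network block induced by $\mathcal{D}$, they remain connected after deleting all the cycles of $\mathcal{D}$, and hence a fortiori they remain connected after deleting only the single cycle $C_i$. Being connected after removing $C_i$ means exactly that $a$ and $b$ fall in the same connected component of the graph with $C_i$ removed, i.e. in the same $v_i$-block; since $i$ was arbitrary, this holds for every $i$. The reverse implication is the contrapositive of Lemma \ref{2taxnetblo}: that lemma asserts that whenever $a$ and $b$ lie in different network blocks there is some cycle $C_i\in\mathcal{D}$, with associated hybrid node $v_i$, separating them into different $v_i$-blocks; equivalently, if $a$ and $b$ are never separated by any $v_i$-partition then they must lie in the same network block. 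Combining the two implications yields the displayed equivalence, and hence the equality of the two partitions.

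The substance of both directions is a monotonicity-of-connectivity principle: deleting more cycles can only break the graph into more pieces, so the network partition refines each $v_i$-partition, while Lemma \ref{2taxnetblo} supplies the matching coarseness bound in the opposite direction. Since both ingredients are already in hand---the forward direction from the preceding remark, the reverse from Lemma \ref{2taxnetblo}---there is no genuine obstacle here; the only point demanding care is to phrase the argument in terms of the equivalence relations defined by the partitions, rather than manipulating the blocks directly, so that the appeal to the definition of common refinement is clean.
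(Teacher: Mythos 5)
Your proposal is correct and follows essentially the same route as the paper: the forward direction is the monotonicity-of-connectivity remark the paper records immediately before the corollary, and the reverse direction is the contrapositive of Lemma \ref{2taxnetblo}. You have merely spelled out the equivalence-relation bookkeeping that the paper leaves implicit.
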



Since suppressing cycles in level-1 networks does not introduce loops or multi-edges, we can define a notion of a tree  of cycles which is useful for the proof of Theorem \ref{thm::main}.

\begin{definition}\label{treeofcycles}
	Let $\mathcal{N}^-$ be a    topological unrooted level-1 network. 
	Let $\mathcal{T}$ be the graph obtained from $\mathcal{N}^-$ by	 1) removing all pendant edges, repeatedly, until  no pendant edges remain; 2) suppressing all vertices of degree two that  are not part of a cycle; 3) suppressing all cycles in the resulting level-1 network comprised  of cycles joined by some edges. We refer to $\mathcal{T}$ as the \textbf{tree of  cycles of $\mathcal{N}^-$}.
\end{definition} 

In the tree of  cycles of $\mathcal{N}^-$ certain nodes, including all the leaves, represent a cycle of the original network $\mathcal{N}^-$. The notion of tree of cycles is similar to but different from ``tree of blobs" of \cite{Gusfield2007}. In  Figure \ref{fig::treeofcyc} we see an example of a tree of cycles.

\begin{figure}\begin{center}
	\includegraphics[scale=.28]{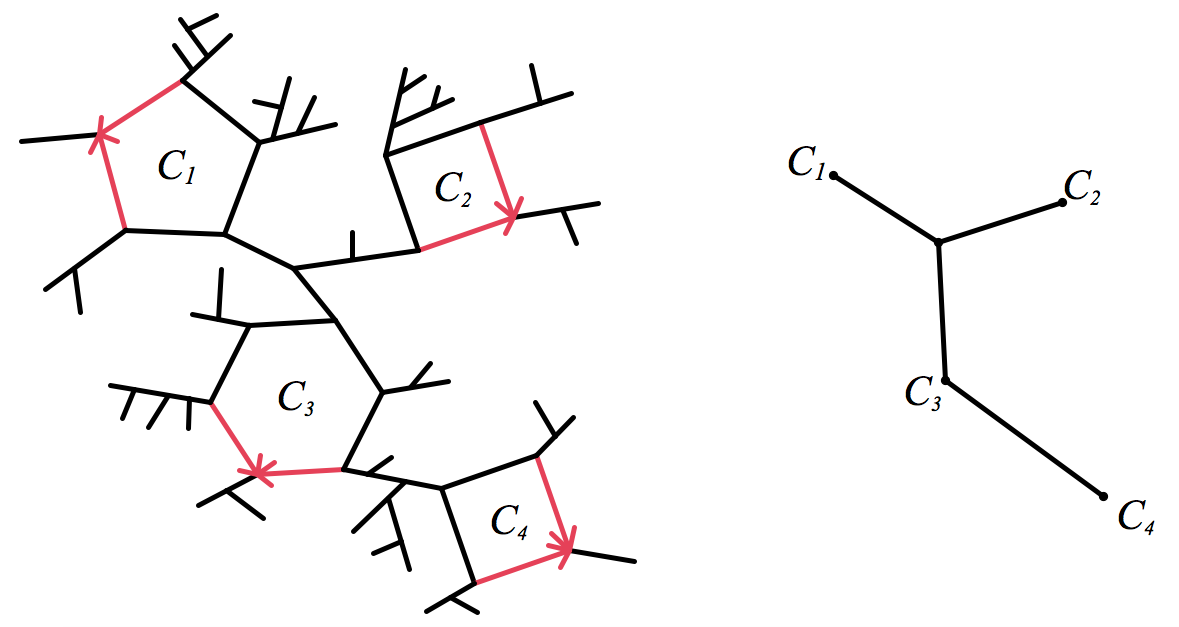}  
	\caption{(Left) A level-1 unrooted network $\mathcal{N}^-$ and (Right) the tree of  cycles of $\mathcal{N}^-$.}\label{fig::treeofcyc}
\end{center}\end{figure}

\section{The network multi-species coalescent model and quartet concordance factors.}
Coalescent theory models the formation of gene trees within populations of species. The coalescent model for a single population traces (backwards in time) the ancestries of a finite set of individual copies of a gene as the lineages \emph{coalesce} to form ancestral lineages (see \cite{Wakeley2008}). The \textit{multi-species coalescent (MSC) model}  is a generalization of the coalescent model, formulated by applying it to multiple populations connected to form a rooted population tree, or species tree. It is   commonly used to obtain the probabilities of gene trees in the presence of incomplete lineage sorting.\\

Meng and Kubatko  \cite{Meng2009} extended the MSC  by introducing phenomena such as hybridization or other horizontal gene transfer across the species-level and Nakhleh et al. further developed it \cite{Nakhleh2012,Nakhleh2016}.  This model describes any situation in which a gene lineage may ``jump" from one population to another at a specific time. The model parameters are specified by a metric binary rooted phylogenetic network as defined in Section \ref{sec def}.
Different from  models such as the structured coalescent with continuous gene flow (see \cite{Wakeley2008}), the network model approach assumes the gene transfer occurs at a single point in time along hybrid edges. We refer to this  extended version of the MSC   as the \textit{network multi-species coalescent (NMSC) model}.

The NMSC model  assumes that speciation by hybridization results in  what Meng and Kubatko refer to as a mosaic genome. 
One assumption of the  NMSC model, inherited from the MSC model, is that all gene lineages present at a specific point on the species tree behave identically above this point. More precisely, the conditional probability of any outcome of the coalescent process above this point is invariant under permutations of those lineages. This feature is  known as the \textit{exchangeability} property.


\begin{figure}\begin{center}
	\includegraphics[scale=.25]{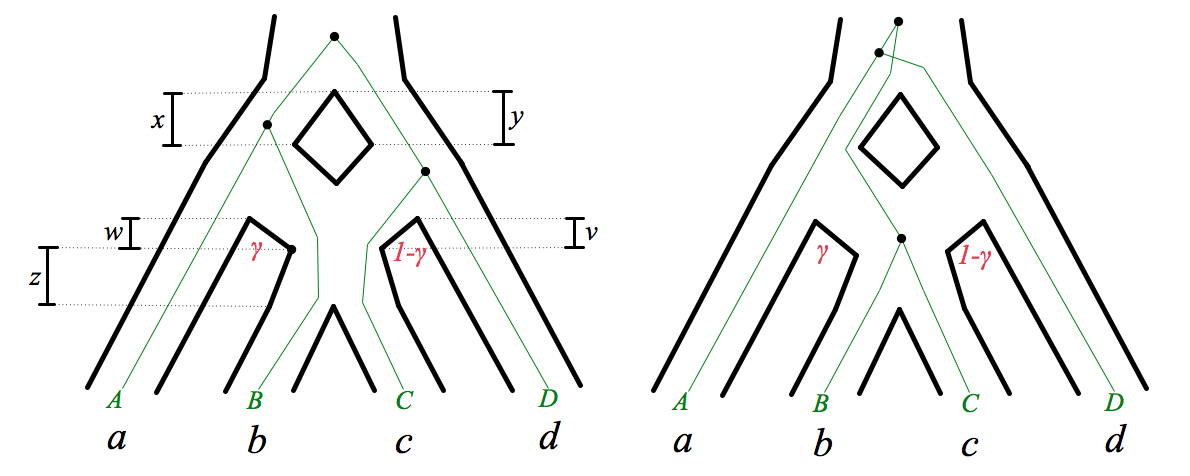}
	\caption{Two gene trees within a species network with one hybrid node.}\label{fig:NMSCgene}
\end{center}\end{figure}

\begin{example}\label{ex::nmsc}
	We illustrate how to compute the probability of a gene tree topology under the NMSC with an example.  Suppose we have the rooted metric species network given on  Figure \ref{fig:NMSCgene}. Let $A,B,C$ and $D$ be genes sampled from species $a,b,c$ and $d$ respectively. We compute the probability that a gene tree has the unrooted topology $((A,B),(C,D))$ under the NMSC model.  \\
	
	First observe that until $B$ and $C$ trace back to the edge with length $z$ there cannot be a coalescent event. In that edge these lineages cannot coalesce if the gene tree $((A,B),(C,D))$ is to be formed. The probability of no coalescence on this edge is $e^{-z}$. Now there are 4 cases, illustrated in Figure \ref{fig:exanmsc}:
		\begin{enumerate}
		\item[1)] with probability $\gamma^2$, lineages  $B$ and $C$ enter the  edge of length $w$;\\
		\item[2)] with probability $(1-\gamma)^2$,  $B$ and $C$ enter  the  edge  of length $v$;\\
		\item[3)] with probability $\gamma(1-\gamma)$, $B$ enters the edge  of length $w$ and  $C$ enters the  edge  of length $v$;\\
		\item[4)] with probability $(1-\gamma)\gamma$, $B$ enters the edge  of length $v$ and  $C$ enters the  edge  of length $w$. 
	\end{enumerate}
 
	Observe that  each case is now reduced to a standard MSC scenario with several samples per population (see \cite{Degnan2010}). Let $P_i$ the probability of observing $((A,B),(C,D))$ under the MSC of case $i$.
	Then the probability of observing $((A,B),(C,D))$ is $e^{-z}(\gamma^2 P_1+(1-\gamma)^2 P_2+\gamma(1-\gamma)P_3+\gamma(1-\gamma)P_4)$.  
	\begin{figure}\begin{center}
		\includegraphics[scale=.305]{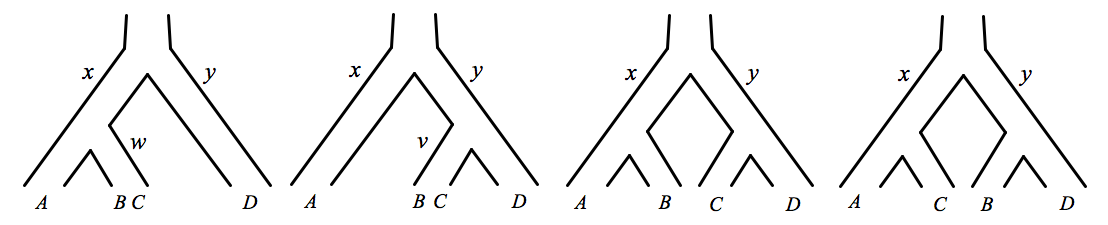}
		\caption{Cases 1-4 (Left-Right) of Example \ref{ex::nmsc}, of how lineages may behave under the NMSC model on the network of Figure \ref{fig:NMSCgene}.}\label{fig:exanmsc}
	\end{center}\end{figure}
\end{example}

Following  An\'e and Sol\'is-Lem\'us  \cite{Solis-Lemus2016},  we are interested in the probability that a species network produces various gene quartets under the NMSC. This motivates the following definition.

\begin{definition}
	Let  $\mathcal{N}^+$ be a metric rooted network on a taxon set $X$. Let $A,B,C,D$ be genes sampled from species $a,b,c,d$ respectively. Given a gene quartet $AB|CD$, the \textit{quartet concordance factor} $CF_{AB|CD}$ is the probability under the NMSC on $\mathcal{N}^+$ that a gene tree displays the quartet $AB|CD$, and  $$CF_{abcd}=(CF_{AB|CD},CF_{AC|BD},CF_{AD|BC})$$ is the ordered triplet of  concordance factors  of each quartet on the taxa $a,b,c,d$.
\end{definition}


In the particular case where $\mathcal{N}^+$ has no hybrid edges, so the network is a tree, it is known that the quartet concordance factors do not depend on the root placement \cite{Allman2011}.  For example let $a,b,c,d$ be taxa and consider any root placement in the unrooted species tree with topology $ab|cd$ and internal edge of length $t$. Then

\begin{align}\label{CFtree}
CF_{abcd}=\left(1-\frac{2}{3}e^{-t},
\frac{1}{3}e^{-t},
\frac{1}{3}e^{-t}\right).
\end{align}

As mentioned in \cite{Solis-Lemus2016}, for unrooted species networks the concordance factors do not depend on the placement of the root in the species network, as long as the root is placed in a way consistent with the direction of the hybrid edges.  This fact is fully shown in Section \ref{Q CF sn}, as we explore quartet concordance factors more fully.

\begin{definition}
	Let $\mathcal{N}^+$ be a metric rooted level-1 network on  $X$. Given a set of distinct taxa $\{a,b,c,d\}$, we define the \textbf{ordering of $CF_{abcd}$ on $\mathcal{N}^+$} as the natural decreasing order of  $CF_{AB|CD},$ $CF_{AC|BD}$, $CF_{AD|BC}$ in the real line.\\
\end{definition}

For example if $t>0$ the ordering  of the concordance factors in equation \eqref{CFtree} is given by $$CF_{AB|CD}> CF_{AC|BD}=CF_{AD|BC}.$$ Many arguments towards the main result of this work use the ordering of $CF_{abcd}$, and not its precise values.

\section{Computing quartet concordance factors}\label{Q CF sn}
 
In this section we  show how to express the  concordance factors arising on a MRCA  quartet network as a linear combination of the concordance factors arising on quartet trees. This enables us to see how the ordering of concordance factors reflects the network topology, and how the precise root location does not matter. We fully address issues that are important when the MRCA  quartet network is induced from a larger one on more taxa; these are ommited in \cite{Solis-Lemus2016}. \\

Let $\mathcal{N}^+$ be a (metric or topological) rooted level-1 network on  $X$ and let $\{a,b,c,d\}$ be a set of distinct taxa of $X$.  Then the induced unrooted network on 4 taxa $\mathcal{Q}^-_{abcd}$  is a  (metric or topological) unrooted level-1 network.  By Proposition \ref{prop::ancestralinduced},   $\mathcal{Q}^-_{abcd}$ is the same graph as  $(\mathcal{N}^+_{abcd})^-$ and $(\mathcal{N}^\oplus_{abcd})^-$. Any cycle in  $\mathcal{N}^\oplus_{abcd}=\mathcal{Q}^\oplus_{abcd}$ induces a cycle in $\mathcal{Q}^-_{abcd}$. A cycle $C$ in  $\mathcal{Q}^\oplus_{abcd}$ of size $k$, induces a cycle in $\mathcal{Q}^-_{abcd}$ of either  size $k$ (when $C$ does not contain the MRCA($a,b,c,d$)) or size $k-1$ (otherwise).  For  convenience when we refer to the size of a cycle $C$ in $\mathcal{Q}^\oplus_{abcd}$ we mean the size of the induced cycle in $\mathcal{Q}^-_{abcd}$.

\begin{lemma}\label{lem:numberofcycles}
	Let  $\mathcal{Q}^-_{abcd}$ be a  metric unrooted level-1 unrooted quartet network. The   number of $k$-cycles  in $\mathcal{Q}^-_{abcd}$  is $0$ for $k\geq 5$, at most $1$ for $k=4$ in which case there is no $3$-cycle, and at most $2$ for $k=3$. 
\end{lemma}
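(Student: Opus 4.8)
The plan is to reduce all three claims to a single Euler-characteristic (handshake) count on the tree obtained by suppressing every cycle. First I would record the structural facts I need. Since $\mathcal{Q}^-_{abcd}$ is induced from a binary network and all degree-two nodes have been suppressed, every internal node has degree exactly $3$, while the four taxa $a,b,c,d$ are its only degree-one nodes. Consequently, for any cycle $C$ of size $k$, each of its $k$ nodes is incident to exactly one edge not lying on $C$ (the other two of its three incident edges belong to $C$), so precisely $k$ edges leave $C$.

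Next I would form the graph $\mathcal{T}$ obtained from $\mathcal{Q}^-_{abcd}$ by suppressing every cycle, i.e.\ removing the edges of each cycle and identifying its nodes to a single vertex. By Lemma \ref{lem:assos} the cycles are pairwise vertex-disjoint, and by the remark preceding Definition \ref{treeofcycles} no loops or multi-edges are created; since every cycle is removed and the network is connected, $\mathcal{T}$ is a tree. Its leaves are exactly $a,b,c,d$: a non-cycle internal node keeps degree $3$, and a suppressed cycle $C_i$ of size $k_i$ becomes a vertex $w_i$ whose degree equals the number of edges leaving $C_i$, namely $k_i\ge 2$, so neither type of internal vertex can be a leaf.

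Now I would apply the identity $\sum_{v}(\deg v - 2) = -2$, valid for any tree. The four leaves contribute $-4$, and the remaining internal vertices of $\mathcal{T}$ are the suppressed cycles (with $\deg w_i - 2 = k_i - 2$) together with some number $p\ge 0$ of non-cycle tree nodes (each contributing $3-2=1$). Hence
\[
\sum_{i}(k_i - 2) + p = 2, \qquad\text{so}\qquad \sum_{i}(k_i-2)\le 2,
\]
the sum running over all cycles of size $k_i\ge 3$ (a possible $2$-cycle contributes $0$ and is harmless). Reading off this one inequality yields everything: any cycle satisfies $k_i-2\le 2$, so $k_i\le 4$ and there are no cycles of size $\ge 5$; a $4$-cycle alone exhausts the budget of $2$, so it forbids any further cycle of size $\ge 3$, giving at most one $4$-cycle and no $3$-cycle in its presence; and since each $3$-cycle contributes $1$, there can be at most two of them.

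The only real care needed is in the structural preliminaries: verifying that every internal node of $\mathcal{Q}^-_{abcd}$ has degree exactly $3$ (so that exactly $k$ edges leave a $k$-cycle) and that suppressing all cycles genuinely produces a tree whose only leaves are the four taxa. Once these are in place the Euler count does all the work, and no case analysis on the relative positions of the cycles is required.
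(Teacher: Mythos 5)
Your proof is correct, but it takes a genuinely different route from the paper's. The paper argues via the taxon partition induced by removing cycles: a $k$-cycle with hybrid node $v$ partitions $\{a,b,c,d\}$ into $k$ non-empty $v$-blocks, forcing $k\le 4$; removing two cycles of sizes $k_1,k_2\ge 3$ yields at least $k_1+k_2-2$ non-empty blocks, so $k_1+k_2\le 6$; and three such cycles give $k_1+k_2+k_3-3\le 4$, a contradiction. You instead contract every cycle to get the tree of cycles and run the handshake identity $\sum_v(\deg v-2)=-2$ on it, obtaining the single inequality $\sum_i(k_i-2)\le 2$, from which all three claims fall out at once with no case analysis on how many cycles are present or where they sit. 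The two counts are closely related (the paper's block count is recoverable from the degrees of your contracted vertices), but your packaging is more uniform and even slightly sharper: the total excess $\sum_i(k_i-2)$ over all cycles is at most $2$, with equality exactly when no non-cycle internal node survives the contraction. The only places to be a touch more careful are the structural preliminaries you yourself flag: that every internal node of $\mathcal{Q}^-_{abcd}$ has degree exactly $3$ (true because the ambient network is binary, every retained edge lies on a simple trek between taxa so no internal node can drop to degree one, and all degree-two nodes are suppressed), and that the contracted graph really is a tree --- ``every cycle is removed'' does not by itself rule out new cycles in a quotient, so the clean justification is that in a level-1 network every edge not lying on a cycle is a bridge, whence after contracting the vertex-disjoint cycles (Lemma \ref{lem:assos}) only bridges remain and the connected quotient is acyclic. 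Neither point is a gap, just something worth spelling out.
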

\begin{proof}
	Suppose that $\mathcal{Q}^-_{abcd}$ has a cycle $C=C_v$ of size $k$.  Then there is an associated partition of taxa into $k$ $v$-blocks. Trivially none of these blocks can be empty, so $k\leq 4$. 
	
Suppose that there are two cycles, a cycle  $C_1$ of size $k_1$ and $C_2$ of size $k_2$ with $k_i\geq 3$, $i=1,2$. Since $\mathcal{Q}^-_{abcd}$ is level-1, by removing these two cycles we induce a partition of the taxa into at least $k_1+k_2-2$ blocks.  None of the blocks of this partition can be empty, so $k_1+k_2-2\leq 4$.   Hence there is a most one cycle of size $4$ or at most two cycles of size 3. Moreover there cannot be  a cycle of size $3$ and  a cycle of size $4$ in the same unrooted quartet network.
	
Suppose that there are three cycles, a cycle  $C_1$ of size $k_1$,  $C_2$ of size $k_2$, and $C_3$ of size $k_3$ with $k_i\geq 3$, $i=1,2,3$. By removing these three cycles we induce a partition of the taxa into at least $k_1+k_2+k_3-3$ blocks, so $k_1+k_2+k_3-3\leq 4$ which is a contradiction since $k_i\geq 3$.
 $\Box$\end{proof}

Our arguments will depend on the number of descendants on the hybrid node of a cycle, so we introduce additional terminology. An $n$-cycle with  exactly $k$ taxa descending from the hybrid node is referred to as a $n_k$-\textit{cycle}. Figure \ref{fig:23cycles} shows  the 6 different types of $2$-, $3$-, and $4$-cycles possible in an unrooted quartet network. 

\begin{figure}\begin{center}
	\includegraphics[scale=.2 ]{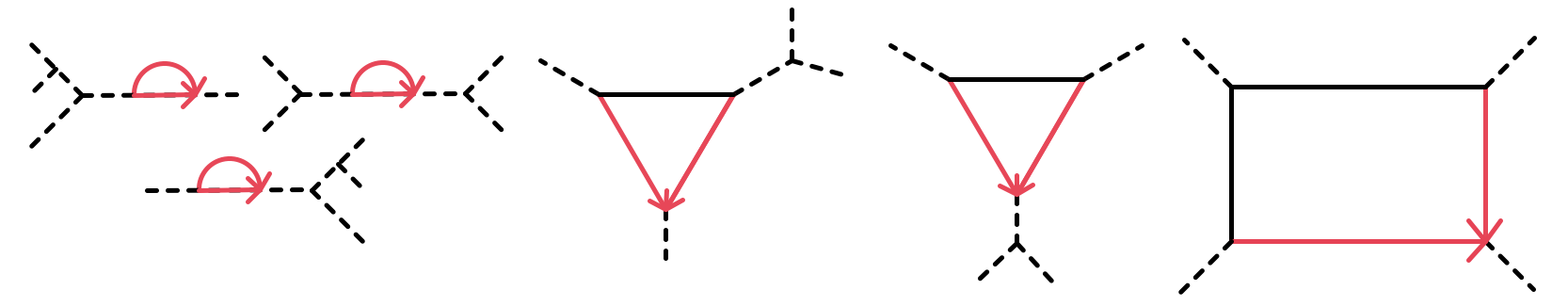}
	 \caption{(Left)  The three  types of $2$-cycles in an unrooted quartet network ($2_1$-,$2_2$- and a $2_3$-cycle); (Center)  The two  types of $3$-cycles in the unrooted quartet network ($3_1$- and a $3_2$-cycle). (Right) The only type of $4$-cycle in an unrooted quartet network (a $4_1$-cycle). The dashed lines represent subgraphs that may contain other cycles.}\label{fig:23cycles}
\end{center}\end{figure}

\begin{lemma}\label{lem::no2 3_2}Let $\mathcal{Q}^-_{abcd}$ be a  metric unrooted level-1 unrooted quartet network. Then  $\mathcal{Q}^-_{abcd}$  cannot have two $3_2$-cycles, or a $2_2$-cycle and a $4_1$-cycle.
\end{lemma}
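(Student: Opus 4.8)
The plan is to assume, for contradiction, that one of the two forbidden configurations occurs and to derive a contradiction from the sizes of the blocks into which each cycle cuts the four taxa. The basic data I would record first is that, in a quartet network, the block sizes of the $v$-partition of an $n_k$-cycle are forced by $n$ and $k$ together with the fact that there are only four taxa: a $3_2$-cycle produces block sizes $\{2,1,1\}$, with the unique size-two block equal to the hybrid block $B_v$; a $2_2$-cycle produces $\{2,2\}$; and a $4_1$-cycle produces $\{1,1,1,1\}$. Because distinct cycles of a level-1 network are node-disjoint (Lemma \ref{lem:assos}) and $\mathcal{Q}^-_{abcd}$ is connected, one cycle $C_2$ must lie entirely inside a single block $B$ of the $v_1$-partition of the other cycle $C_1$. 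The counting step I would then establish is that $|B|\ge n_2-1$, where $n_2$ is the size of $C_2$: after the degree-two nodes are suppressed in the construction of $\mathcal{Q}^-_{abcd}$ (Definition \ref{def:inducenetontaxa}) every internal node is trivalent, so each node of $C_2$ has exactly one edge off its cycle; by the level-1 (tree-of-cycles) structure only one such edge can lead toward $C_1$, and the pendant block hanging from each of the remaining $n_2-1$ nodes lies inside $B$ and contains at least one taxon.

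With this in hand the pair consisting of a $2_2$-cycle $C_1$ and a $4_1$-cycle $C_2$ is settled immediately: since $C_2$ has size four, the block of the $v_1$-partition containing it has size at least $4-1=3$, yet every block of the $\{2,2\}$ partition of $C_1$ has size two, a contradiction.

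The case of two $3_2$-cycles $C_1,C_2$ is the one I expect to be the real obstacle, since here the block count is not decisive on its own: the common refinement of two partitions of type $\{2,1,1\}$ can be the all-singletons partition, so no purely combinatorial counting argument can exclude the configuration, and the directed structure must be invoked. Because $C_2$ has size three, the block $B$ of the $v_1$-partition containing it has size at least two, and the only size-two block is the hybrid block $B_{v_1}$; hence $C_2\subseteq B_{v_1}$, so every node of $C_2$, and in particular $v_2$, descends from $v_1$. Running the identical argument with the roles of the two cycles exchanged gives $C_1\subseteq B_{v_2}$, so $v_1$ descends from $v_2$. Then $v_1$ and $v_2$ would each lie below the other, which is impossible in the acyclic rooted network $\mathcal{N}^+$ from which $\mathcal{Q}^-_{abcd}$ is induced (equivalently, it would create a semidirected cycle in $\mathcal{Q}^-_{abcd}$). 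This contradiction completes the proof.
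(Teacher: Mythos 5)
Your proof is correct and follows essentially the same strategy as the paper's: use the level-1 condition to place one cycle inside a single block of the other's partition, count taxa to pin down that block, and for the two-$3_2$-cycle case conclude that each hybrid node descends from the other, contradicting acyclicity of the underlying rooted network. The only cosmetic differences are your uniform counting bound $|B|\ge n_2-1$ and the reversed direction of containment in the $2_2$/$4_1$ case (the paper places the $2_2$-cycle's blocks inside a singleton of the $4$-cycle partition), neither of which changes the substance.
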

\begin{proof}
	Suppose $Q=\mathcal{Q}^-_{abcd}$ has  two distinct $3_2$-cycles, $C_u$ and $C_v$.  Suppose $C_u$ has $u$-hybrid block $\{a,b\}$ and $u$-blocks $\{c\}$ and $\{d\}$. If we remove $C_u$ from $Q$, by the level-1 assumption $C_v$ is in one on the connected components. This implies that  2 of the 3 $v$-blocks must be contained in one of  $\{a,b\}$,  $\{c\}$ or  $\{d\}$. This is only possible if the $v$-hybrid block is $\{c,d\}$,   and the other $v$-blocks are $\{a\}$ and $\{b\}$. Thus $Q$ must be as the network  in Figure  \ref{fig::2 3c2}, where $u$ is below $v$ and $v$ is below $u$, contradicting that $Q$ is  induced from a rooted network.
	
	Now suppose that $Q$ has a 4-cycle  and a $2_2$-cycle. The $4$-cycle induces 4 singleton blocks. By the level-1 condition  at least one of the blocks induced by the $2_2$-cycle has to be contained in  a singleton block. That is impossible since the blocks induced by the $2_2$-cycle have size 2.   $\Box$\end{proof}

\begin{figure}\begin{center}
	\includegraphics[scale=.27]{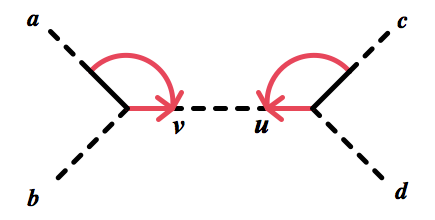}
	\caption{A graph with two $3_2$ cycles. Each dashed edge represents a chain of 2-cycles with, possibly, other cycles.}\label{fig::2 3c2}
\end{center}\end{figure}

Lemmas \ref{lem:numberofcycles} and \ref{lem::no2 3_2} determine all possible topological structures for unrooted quartet networks which are shown in Figure \ref{fig::n2s}.

\begin{figure}\begin{center}
	\includegraphics[scale=.25]{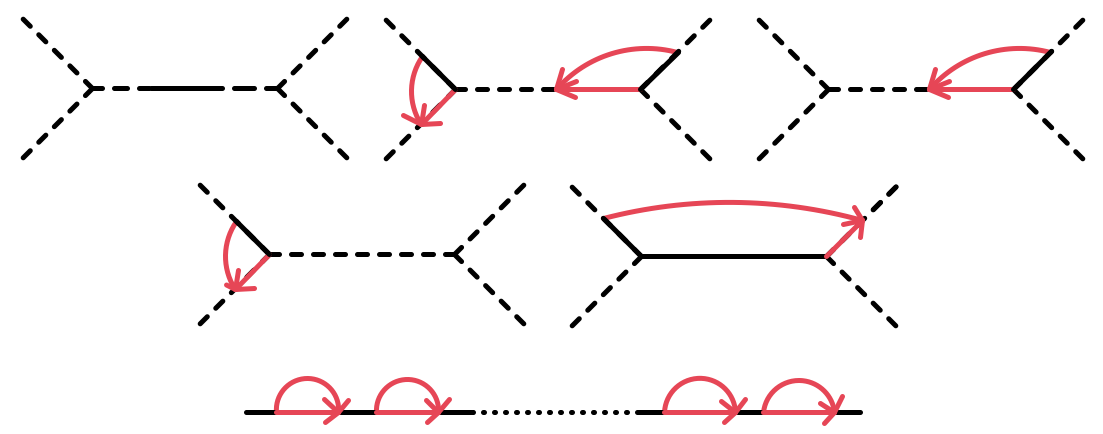}
	\caption{Possible structures for  unrooted quartet networks. Every dashed arrow represents a chain of an arbitrary number of 2-cycles, as the one in the bottom of the Figure.  The direction of these 2-cycles must be such that the obtained graph is induced from a rooted network. }\label{fig::n2s}
\end{center}\end{figure}

\subsection{Concordance factor formulas for   quartet networks}
 
Next we prove a number of ``reduction" lemmas relating concordance factors for quartet networks to those for networks with fewer cycles. This allows us to express the network concordance factors in terms of those for trees.   The following observation is useful through this section.

\begin{obs}\label{obs:quartetcoales}
	Given a rooted metric species  quartet network, under the NMSC model the first coalescent event determines the unrooted topology of a quartet gene tree.
\end{obs}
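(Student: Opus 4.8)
The plan is to reduce the statement to a simple combinatorial fact about four-leaf trees, namely that an unrooted quartet topology is completely determined by any one of its cherries. First I would recall that a gene tree on the four genes $A,B,C,D$ is a binary tree whose unrooted topology is one of the three resolutions $AB|CD$, $AC|BD$, $AD|BC$, each characterized by its unordered pair of cherries: the resolution $XY|ZW$ is precisely the one whose cherries are $\{X,Y\}$ and $\{Z,W\}$. Consequently, knowing a single cherry $\{X,Y\}$ already forces the unrooted topology to be $XY|ZW$, where $\{Z,W\}$ is the complementary pair of taxa.

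Next I would trace the coalescent process backwards in time on the rooted metric quartet network. Before any coalescence has occurred there are exactly four ancestral lineages present, one ancestral to each of $A,B,C,D$, each still carrying exactly one taxon. The first coalescent event, by definition, is the first time two of these lineages merge into a common ancestral lineage; since at that moment every lineage corresponds to a single original gene, this event necessarily joins two of the original genes, say $X$ and $Y$, and thereby creates the cherry $\{X,Y\}$ in the gene tree being formed.

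The final step is to observe that once the cherry $\{X,Y\}$ has been created, no later event can destroy it: the subsequent two coalescences only merge the lineage ancestral to $\{X,Y\}$ with the two remaining single-taxon lineages $Z$ and $W$, which affects only the rooted refinement (the placement of the root and the internal node structure) and never the cherry structure. Hence the pair of cherries of the resulting gene tree is $\{\{X,Y\},\{Z,W\}\}$ regardless of how the network is traversed above the first coalescence, so by the combinatorial fact above the unrooted topology is $XY|ZW$, determined entirely by the first coalescent event. I do not expect any genuine obstacle here; the only point requiring a little care is verifying that the first coalescence must join two original single-taxon lineages (so that it truly produces a cherry on the leaf set) and that the network structure and exchangeability governing events above that coalescence are irrelevant to the unrooted topology, which holds because they influence only the rooted resolution. $\Box$
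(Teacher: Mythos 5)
Your proposal is correct. The paper states this as an Observation without proof, treating it as evident, and your argument---the first coalescence among four single-taxon lineages necessarily creates a cherry $\{X,Y\}$, and a single cherry already determines the unrooted quartet topology $XY|ZW$---is precisely the reasoning being taken for granted.
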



\begin{figure}\begin{center}
	\includegraphics[scale=.20]{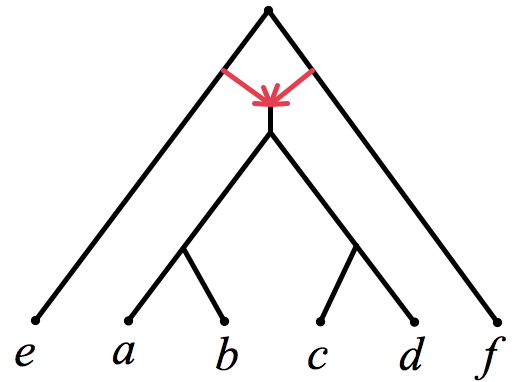}
	\caption{A  level-1 rooted  network  where the root differs from the $\text{MRCA}(a,b,c,d)$. }\label{fig::cycabove}
\end{center}\end{figure}
As illustrated in Figure \ref{fig::cycabove}, in passing from a rooted network on $X$ to a rooted induced network on $Z\subset X$, $\mathcal{N}^+_{Z}$, we may find there is a network structure above MRCA($Z$), a chain of $2$-cycles by  Proposition \ref{prop::structureabove}. \textit{A priori}, this could have an impact on the behavior of the NMSC model on $\mathcal{N}^+_Z$. For quartet concordance factors, however, this additional structure has no impact, and we effectively snip it off. Formally, we have the following.

\begin{theorem}\label{thm::MRCA}
	Let $\mathcal{N}^+$ be a level-1 rooted metric network on   $X$ and let $a,b,c,d$ be distinct taxa of $X$.
	Under the NMSC model,  $CF_{abcd}$  can be computed from the MRCA network $\mathcal{Q}^\oplus_{abcd}$.
\end{theorem}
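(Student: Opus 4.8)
The plan is to show that running the NMSC on the full induced quartet network $\mathcal{N}^+_{abcd}$ produces the same $CF_{abcd}$ as running it on $\mathcal{Q}^\oplus_{abcd}$, so that the chain of $2$-cycles lying above $\mathrm{MRCA}(a,b,c,d)$ (whose shape is pinned down by Proposition \ref{prop::structureabove}) can simply be discarded. First I would reduce from $\mathcal{N}^+$ to its induced quartet network $\mathcal{N}^+_{abcd}$: since the NMSC traces only the four sampled lineages backward in time, any edge of $\mathcal{N}^+$ lying on no directed path from the root to $\{a,b,c,d\}$ carries no ancestral lineage of the sample and cannot influence their coalescences, while suppressing degree-two nodes merely concatenates edge lengths; hence $CF_{abcd}$ is unchanged. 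This reduces the problem to comparing the process on $\mathcal{N}^+_{abcd}$ with that on the sub-network $\mathcal{Q}^\oplus_{abcd}$ obtained by deleting everything strictly above $m := \mathrm{MRCA}(a,b,c,d)$.

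The core observation is that $m$ is a cut vertex separating the four taxa from the root: by Definition \ref{def::MRCA}, $m$ lies on every directed path from the root to each of $a,b,c,d$, so any ancestral lineage that reaches a node strictly above $m$ must first pass through $m$. Concretely, concatenating a root-to-$w$ path with a lineage's backward path $w\to\cdots\to a$ yields a root-to-$a$ path, which must contain $m$; as $m$ lies strictly below $w$ it cannot be on the root-to-$w$ portion, hence lies on the lineage's path. Consequently all of the sample's ancestral lineages that have not yet coalesced are present at $m$, and every coalescence occurring at or below $m$ happens in exactly the same way in $\mathcal{N}^+_{abcd}$ and in $\mathcal{Q}^\oplus_{abcd}$, since these two networks coincide below $m$.

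I would then condition on the number $j\in\{1,2,3,4\}$ of distinct ancestral lineages emerging from the top of $\mathcal{Q}^\oplus_{abcd}$ at $m$. If $j<4$ then at least one coalescence occurred within $\mathcal{Q}^\oplus_{abcd}$; because all events below $m$ precede, in backward time, all events above $m$, the very first coalescent event lies below $m$, and by Observation \ref{obs:quartetcoales} it already fixes the unrooted quartet topology, identically for the two networks. If $j=4$, no coalescence has occurred below $m$, so the four labelled lineages $A,B,C,D$ arrive at $m$ uncoalesced and then evolve through whatever structure sits above $m$ (the chain of $2$-cycles, or in $\mathcal{Q}^\oplus_{abcd}$ simply the infinite root population). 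The exchangeability property of the NMSC makes the joint law of this remaining process invariant under permuting the four labels; since $S_4$ acts transitively on the $\binom{4}{2}=6$ pairs, each pair is equally likely to coalesce first, and the first coalescence fixes the topology (Observation \ref{obs:quartetcoales}). Hence the conditional topology distribution given $j=4$ is $\bigl(\tfrac13,\tfrac13,\tfrac13\bigr)$, regardless of the above-$m$ structure.

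Combining the two cases, $CF_{abcd}$ splits as a contribution from coalescences inside $\mathcal{Q}^\oplus_{abcd}$ plus $P(j=4)\cdot\bigl(\tfrac13,\tfrac13,\tfrac13\bigr)$, and both $P(j=4)$ and the first summand depend only on $\mathcal{Q}^\oplus_{abcd}$; this is exactly the value computed by the NMSC on $\mathcal{Q}^\oplus_{abcd}$ alone, proving the claim. The main obstacle I anticipate is making the $j=4$ step fully rigorous: one must verify that precisely the four labelled lineages reach $m$ with no accidental asymmetry, and that the exchangeability statement as phrased applies to the entire above-$m$ continuation, including its hybridization choices, so that the symmetric group action genuinely forces the uniform distribution.
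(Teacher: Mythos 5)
Your proposal is correct and follows essentially the same route as the paper: condition on whether all four lineages reach $\mathrm{MRCA}(a,b,c,d)$, note that otherwise the unrooted topology is already determined below, and use exchangeability to show that when four uncoalesced lineages arrive the conditional topology distribution is $\left(\tfrac{1}{3},\tfrac{1}{3},\tfrac{1}{3}\right)$ regardless of the chain of $2$-cycles above. The only cosmetic difference is that the paper derives this uniform distribution by comparing rooted and unrooted $4$-taxon tree shapes, whereas you use transitivity of $S_4$ on pairs together with the first-coalescence observation.
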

\begin{proof}
		In any realization of the coalescent process if there are fewer than $4$ lineages at the MRCA($a,b,c,d$) in $\mathcal{N}^+_{abcd}=\mathcal{Q}^+_{abcd}$, then a coalescent event has occurred  below and therefore the unrooted gene tree topology has been determined. Thus we condition on $4$ lineages being present at MRCA($a,b,c,d$).

	There are 2 rooted shapes for 4-taxon gene trees, the caterpillar and balanced trees. Regardless of the ancestral chain of 2-cycles  above MRCA($a,b,c,d$), conditioned on one of these shapes,  exchangeability of lineages under the coalescent tells us all labeled versions of that specific shape will have equal probability.  While the rooted shapes might have different probability,  since there is only 1 unrooted  shape, all labellings of it must be equally probable.  This is the same as if there were no ancestral cycles.  	Therefore $CF_{abcd}(\mathcal{Q}^\oplus_{abcd})=CF_{abcd}(\mathcal{Q}^+_{abcd})$.   $\Box$\end{proof}

This argument can be modified to apply to  5 taxa, but not 6 or more, since then there is more than 1 unrooted shape.  \\

Let $Q^\oplus=\mathcal{Q}^\oplus_{abcd}$ be a level-1 MRCA quartet network and let $C_v$ be a cycle in $Q^\oplus$,   with   hybrid node $v$ and hybrid edges $h_1$ and $h_2$, where  $\gamma=\gamma_{h_1}$. The following notation is used throughout this section: \\
\begin{itemize}[label=$\bullet$]
	\item  $Q_1^\oplus$ denotes the rooted quartet network  obtained from $Q^\oplus$ by removing $h_2$.
	\item  $Q_2^\oplus$ denotes the rooted quartet network  obtained from $Q^\oplus$ by removing $h_1$.
	\item $Q_0^\oplus$ denotes the rooted quartet network obtained from $Q^\oplus$ by suppressing $C_v$; if the root of $Q^\oplus$ is in $C_v$, the node obtained by identification in the suppression process is the root of $Q_0^\oplus$.  \\
\end{itemize}
Note that  $Q_i^\oplus$, for $i=1,2$ have degree 2 nodes, and thus are not  binary. This does not affect the coalescent process in any way and by suppressing such nodes we obtain a binary MRCA network. In a slight abuse of notation, we use $Q_i^\oplus$ to denote both of these networks, as needed in our arguments.\\

In  arguments  on computing concordance factors we  often need to designate how many lineages are present at a hybrid node in a realization of the coalescent process. To handle this formally, given a rooted metric species network $\mathcal{N}^+$ on $X$,   we define  the random variable $K_v$ to be  the number of lineages at node $v$, where $K_v$ takes values in $\{1,...,l_v\}$, where $l_v$ is the number of taxa below $v$. We can extend this concept to hybrid nodes in $\mathcal{N}^-$, since a hybrid node in $\mathcal{N}^-$ induces an orientation of the nodes that are descending from it. 

 Let $Q^\oplus=\mathcal{Q}^\oplus_{abcd}$ be a level-1 MRCA quartet network and let $C_v$ be a cycle in $Q^\oplus$, with   hybrid node $v$, which induces a cycle  $C'_v$ in $\mathcal{Q}^-_{abcd}$. If  $C'_v$ has size $2$, then $1\leq l_v\leq 3$; if $C_v'$ has size three, then  $1\leq l_v\leq 2$; and if $C_v'$ has size four then $l_v=1$, as shown in Figure \ref{fig:23cycles}. \\
 
We first show that cycles in $\mathcal{Q}^\oplus_{abcd}$  that induce $2_1$-cycles or $2_3$-cycles in $\mathcal{Q}^-_{abcd}$  have no impact on concordance factors.

\begin{lemma}\label{lem::formula2:1}
	Let $Q^\oplus=\mathcal{Q}^\oplus_{abcd}$ be a metric level-1 MRCA quartet network and  let $C_v$ be a cycle in $Q^\oplus$ that induces a $2_1$-cycle in $\mathcal{Q}^-_{abcd}$. Then $CF(Q^\oplus)=CF(Q_0^\oplus)$.
\end{lemma}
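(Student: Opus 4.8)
The plan is to exploit that a $2_1$-cycle has exactly one taxon, say $a$, descending from its hybrid node $v$, so $l_v=1$ and in every realization of the coalescent exactly one gene lineage $\alpha$ enters $C_v$ from below, with $K_v=1$. A lone lineage cannot coalesce, so no coalescent event can occur along either hybrid edge of $v$. Throughout I would invoke Observation \ref{obs:quartetcoales}, reducing the computation of $CF_{abcd}$ to the distribution of the first (most recent) coalescence, and I would repeatedly use the exchangeability property of the NMSC.

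First I would treat the principal case, in which the root of $Q^\oplus$ does not lie on $C_v$. Then $C_v$ is a genuine $2$-cycle whose two hybrid edges $h_1,h_2$ both run from a common top node $t$ down to $v$, and suppressing $C_v$ merely identifies $t$ with $v$. Tracing $\alpha$ backwards it reaches $v$, selects $h_1$ with probability $\gamma$ or $h_2$ with probability $1-\gamma$, and in either case arrives at $t$ with no coalescence; the lengths $\lambda(h_1),\lambda(h_2)$ are irrelevant, since a single lineage experiences no events. Thus $\alpha$ traverses the cycle deterministically (with total probability $\gamma+(1-\gamma)=1$) and without events, while $Q^\oplus$ and $Q_0^\oplus$ agree everywhere outside the cycle region. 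The coalescent processes therefore couple, induce the same distribution on gene tree topologies, and yield $CF(Q^\oplus)=CF(Q_0^\oplus)$.

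The step I expect to be the main obstacle is the remaining case, where the root $r$ lies on $C_v$: by the size convention this is a $3$-cycle through $r$ inducing a $2_1$-cycle once $\text{MRCA}(a,b,c,d)$ is suppressed. Now the two hybrid edges of $v$ lead to distinct nodes, the root $r$ and an intermediate tree node $x$, so $\alpha$ may reach $x$ and meet the lineages from $\{b,c,d\}$ on the branch $x \to r$ earlier than it would after suppression; a crude edge-length comparison suggests the distributions differ, and the heart of the argument is to show the discrepancy cancels. I would condition on the number $j$ of lineages among $\{b,c,d\}$ present on reaching $x$. If $j<3$ the first coalescence has already occurred among $b,c,d$ strictly below $x$, identically to $Q_0^\oplus$. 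If $j=3$, then whether or not $\alpha$ accompanies them on $x\to r$, exchangeability makes the first coalescence a uniformly chosen pair among $\{bc,bd,cd\}$ on the branch, or a uniform pair among all four lineages should they instead meet at $r$; under the bijection $cd\leftrightarrow ab|cd$, $bd\leftrightarrow ac|bd$, $bc\leftrightarrow ad|bc$ this gives the split distribution $(1/3,1/3,1/3)$, precisely what $Q_0^\oplus$ produces when its four lineages meet at the root. Matching the conditional split distributions for every $j$ then gives $CF(Q^\oplus)=CF(Q_0^\oplus)$, independently of $\gamma$.
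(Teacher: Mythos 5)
Your argument is correct and follows essentially the same route as the paper: the case where the root is off $C_v$ is dispatched by noting a lone lineage cannot coalesce on the cycle, and the case where the root lies on $C_v$ is handled by conditioning on whether a coalescence has already occurred before the three non-hybrid lineages reach the intermediate tree node. The only substantive difference is in the final step of the hard case, where the paper passes through $Q_1^\oplus$ and $Q_2^\oplus$ and cites Proposition 11 of Allman et al.\ to equate concordance factors of trees differing only in a root-adjacent edge length, while you establish the same invariance directly by exchangeability, showing the conditional topology distribution is $\left(\tfrac{1}{3},\tfrac{1}{3},\tfrac{1}{3}\right)$ in every undecided sub-case.
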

\begin{proof}
	Let $K=K_v$. Since $C_v$ induces a $2_1$-cycle in $\mathcal{Q}^-_{abcd}$, $P(K=1)=1$.
	Then
	\begin{align*}
	CF(Q^\oplus)&=P(K=1)CF(Q^\oplus\mid K=1)\\
	&=P(K=1)[\gamma CF(Q_1^\oplus\mid K=1) +(1-\gamma)CF(Q_2^\oplus\mid K=1)]\\
	&=\gamma CF(Q_1^\oplus)+(1-\gamma)CF(Q_2^\oplus)
	\end{align*}
	
	If the root of $Q^\oplus$ is not in $C_v$,  no lineages can coalesce on the edges that differ in  $Q_1^\oplus$ and $Q_2^\oplus$. Thus,  $$CF(Q_1^\oplus)=CF(Q_2^\oplus)=CF(Q_0^\oplus),$$
	and the claim is established in this case.
	
	Now suppose the root $r$ of $Q^\oplus$ is in $C_v$, and $C_v$ has nodes $r$, $u$, $v$, and edges $(r,v)$, $(r,u)$, $(u,v)$. Without loss of generality suppose that the taxon below $v$  is $d$.  Since $u$ is a tree node it has another descendant $y$. Note that $Q_1^\oplus$ and $Q_2^\oplus$ have the same topology, moreover, they just differ in the edge length from the root to $y$. Define a random variable $K'$, by $K'=1$ if  there has been a coalescent event before $a$, $b$, and $c$ trace back to $y$ and $K'=0$ otherwise. If $K'=1$, the unrooted topology has been determined and thus $$CF(Q_1^\oplus\mid K'=1)=CF(Q_2^\oplus\mid K'=1)=CF(Q_0^\oplus\mid K'=1).$$
	Also,  by  Proposition 11  in \cite{Allman2011},
	 $$CF(Q_1^\oplus\mid K'=0)=CF(Q_2^\oplus\mid K'=0)=CF(Q_0^\oplus\mid K'=0).$$
	  Thus $CF(Q^\oplus)=CF(Q_0^\oplus).$   $\Box$\end{proof}
\begin{lemma}\label{lem::formula2:3} 
	Let $Q^\oplus=\mathcal{Q}^\oplus_{abcd}$ be a  metric level-1 MRCA quartet network and let $C_v$ be a cycle in $Q^\oplus$, that induces a $2_3$-cycle in $\mathcal{Q}^-_{abcd}$. Then $CF(Q^\oplus)=CF(Q_0^\oplus)$.
\end{lemma}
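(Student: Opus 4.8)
The plan is to condition on the number of lineages present at the hybrid node $v$ and to exploit the symmetry of the three descendant taxa. Since $C_v$ induces a $2_3$-cycle, exactly three taxa, say $a,b,c$, descend from $v$, while $d$ lies in the other $v$-block; accordingly $l_v=3$ and $K:=K_v$ takes values in $\{1,2,3\}$. Writing $CF(Q^\oplus)=\sum_{k=1}^3 P(K=k)\,CF(Q^\oplus\mid K=k)$, I would first observe that the coalescences fixing the value of $K$ all occur strictly below $v$, in a portion of the network identical in $Q^\oplus$ and in $Q_0^\oplus$: suppressing $C_v$ only merges $v$ with the other cycle node and deletes $h_1,h_2$, leaving everything below $v$ untouched. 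Hence $P(K=k)$ is the same for both networks, and it suffices to match the conditional concordance factors $CF(\cdot\mid K=k)$.

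For $k\in\{1,2\}$ the argument is immediate. Here $K<3$ means at least one coalescence among $A,B,C$ has already happened below $v$. Because $d$ lies in the other block, the lineage $D$ cannot meet any of $A,B,C$ until it reaches the root $r$, which sits above $v$ and above the whole cycle; thus the coalescence recorded below $v$ is in fact the \emph{first} coalescent event overall, and by Observation \ref{obs:quartetcoales} it already determines the unrooted gene tree topology. Since this event takes place below $v$, identically in $Q^\oplus$ and $Q_0^\oplus$, we obtain $CF(Q^\oplus\mid K=k)=CF(Q_0^\oplus\mid K=k)$ for $k=1,2$.

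The case $k=3$ is the crux. Here $A,B,C$ arrive at $v$ as three distinct lineages; in $Q^\oplus$ they pass through the $2$-cycle (each independently choosing $h_1$ or $h_2$, possibly coalescing on those edges or at the upper cycle node) before continuing to $r$, whereas in $Q_0^\oplus$ they proceed directly to $r$. In either network nothing distinguishes $A$, $B$, and $C$ from one another in this region, so by exchangeability the first coalescence among them, should it occur before $D$ becomes available at $r$, is equally likely to be any of $\{A,B\},\{A,C\},\{B,C\}$, producing $AB|CD$, $AC|BD$, $AD|BC$ each with probability $1/3$. If instead all three reach $r$ uncoalesced, then four lineages sit above the root and the first of the $\binom{4}{2}=6$ pairs to coalesce is uniform, again assigning probability $1/3$ to each unrooted topology. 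Either way $CF(Q^\oplus\mid K=3)=CF(Q_0^\oplus\mid K=3)=(1/3,1/3,1/3)$. Summing over $k$ against the common distribution of $K$ yields $CF(Q^\oplus)=CF(Q_0^\oplus)$.

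The main obstacle is precisely the $k=3$ analysis: one must argue carefully that the cycle, though it changes \emph{when} and \emph{whether} $A,B,C$ coalesce relative to the root, cannot bias \emph{which} pair coalesces first, and that both regimes (an early coalescence in or below the cycle versus a deep coalescence of all four lineages above the root) produce exactly the uniform topology distribution. I expect the exchangeability of the three hybrid-block lineages to be the decisive ingredient, exactly as it was in the proof of Theorem \ref{thm::MRCA}.
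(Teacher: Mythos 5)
Your proof follows essentially the same route as the paper: condition on $K=K_v\in\{1,2,3\}$, dispose of $K\le 2$ because a coalescence below $v$ has already fixed the unrooted topology, and use exchangeability of the three hybrid-block lineages to get the uniform vector $(1/3,1/3,1/3)$ when $K=3$. One small imprecision: your dichotomy for $K=3$ (``first coalescence among $A,B,C$ before $D$ is available at $r$'' versus ``all three reach $r$'') assumes $D$ only becomes available at the root, which fails when $C_v$ contains the MRCA in $Q^\oplus$ (so that $d$ attaches at the non-hybrid cycle node and can coalesce with a lineage on a cycle edge); the paper instead splits on whether $D$ participates in the first coalescent event, which is exhaustive in every configuration, though in either formulation exchangeability of $A,B,C$ is what delivers the conclusion.
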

\begin{proof}
	Let $K=K_v$, so  $K$ takes values in $\{1,2,3\}$. 
	Therefore 
	\begin{align}
	CF(Q^\oplus)&= P(K=1)CF(Q^\oplus\mid K=1)+P(K=2)CF(Q^\oplus\mid K=2)\notag\\
	&\quad+P(K=3)CF(Q^\oplus\mid K=3).\label{eq::QK}
	\end{align}
 
	If $K=1$ or 2 then at least one coalescent event has occurred,  so the unrooted gene tree topology  is already determined, and 
	\begin{align*}
	CF(Q^\oplus\mid K=k)=CF(Q_0^\oplus\mid K=k)\text{ for $k=1,2$.}
	\end{align*}

	The case $K=3$ requires more argument. Without loss of generality suppose that the three taxa descending from $v$ are $a$, $b$, and $c$. Denote by $\mathfrak{D}$ the random variable defined by $\mathfrak{D}=1$ if the lineage $d$ is involved in the first coalescent event and $\mathfrak{D}=0$ otherwise. Thus
	\begin{align}\label{eq::2_3mrca}
	CF(Q^\oplus\mid K=3) & = P(\mathfrak{D}=1)CF(Q^\oplus\mid K=3,\mathfrak{D}=1) \notag \\
	&\quad+P(\mathfrak{D}=0)CF(Q^\oplus\mid K=3,\mathfrak{D}=0). 
	\end{align}
	
	If $d$ is in the first coalescent event, by the exchangeability property of the NMSC, $a,b$ or $c$ are equally likely to be the other lineage involved in that event.This is the same as if the cycle was suppressed, so
	$$CF(Q^\oplus\mid K=3,\mathfrak{D}=1)=\left(\frac{1}{3},\frac{1}{3},\frac{1}{3}\right)=CF(Q_0^\oplus\mid K=3,\mathfrak{D}=1)$$
	If $d$ is not in the first coalescent event, this event involves  only two of $a,b,$ and $c$, with each pair  equally likely by exchangeability. 
	 This is also the same as if the cycle was suppressed, so
	$$CF(Q^\oplus\mid K=3,\mathfrak{D}=0)=\left(\frac{1}{3},\frac{1}{3},\frac{1}{3}\right)=CF(Q_0^\oplus\mid K=3,\mathfrak{D}=0)$$

 	Thus  by  equations \eqref{eq::QK} and \eqref{eq::2_3mrca},  $CF(Q^\oplus)=CF(Q_0^\oplus)$.  $\Box$\end{proof}

Together, the preceding Lemmas yield the following.
\begin{corollary}\label{cor:2isgood}
	Let $Q^\oplus=\mathcal{Q}^\oplus_{abcd}$ be a  metric level-1 MRCA quartet network and let $\widetilde{Q}^\oplus$ be the MRCA network obtained by suppressing all cycles that induce either  $2_3$- or a $2_1$-cycles in $\mathcal{Q}^-_{abcd}$. Then $CF(Q^\oplus)=CF(\widetilde{Q}^\oplus)$.
\end{corollary}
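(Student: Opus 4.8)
The plan is to prove Corollary~\ref{cor:2isgood} by induction on the number of cycles in $Q^\oplus$ that induce $2_1$- or $2_3$-cycles in $\mathcal{Q}^-_{abcd}$, using Lemmas~\ref{lem::formula2:1} and~\ref{lem::formula2:3} as the single-step reductions. If there are no such cycles, then $\widetilde{Q}^\oplus=Q^\oplus$ and the claim is trivial. For the inductive step, suppose $Q^\oplus$ has at least one cycle $C_v$ inducing either a $2_1$- or a $2_3$-cycle in $\mathcal{Q}^-_{abcd}$. The relevant Lemma (\ref{lem::formula2:1} if $C_v$ is a $2_1$-cycle, \ref{lem::formula2:3} if it is a $2_3$-cycle) gives $CF(Q^\oplus)=CF(Q_0^\oplus)$, where $Q_0^\oplus$ is obtained from $Q^\oplus$ by suppressing $C_v$.

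The key observation I would then emphasize is that suppressing one such cycle does not disturb the structure of the remaining cycles, so that the induction hypothesis applies to $Q_0^\oplus$. Concretely, since $\mathcal{N}$ (and hence $Q^\oplus$) is level-1, by Lemma~\ref{lem:assos} no two cycles share an edge or a node, so suppressing $C_v$ leaves every other cycle intact as a cycle of $Q_0^\oplus$. Moreover, suppressing a $2$-cycle in a level-1 network introduces no loops or multi-edges (as noted just before Definition~\ref{treeofcycles}), so $Q_0^\oplus$ is again a metric level-1 MRCA quartet network on $\{a,b,c,d\}$, with strictly fewer cycles inducing $2_1$- or $2_3$-cycles. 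The set of cycles of $Q_0^\oplus$ that induce $2_1$- or $2_3$-cycles in the associated induced unrooted network is exactly the set of such cycles of $Q^\oplus$ other than $C_v$.

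Applying the induction hypothesis to $Q_0^\oplus$ yields $CF(Q_0^\oplus)=CF(\widetilde{Q_0^\oplus})$, where $\widetilde{Q_0^\oplus}$ is obtained by suppressing all $2_1$- and $2_3$-inducing cycles of $Q_0^\oplus$. Since suppressing $C_v$ first and then the remaining such cycles produces the same network as suppressing all of them at once, we have $\widetilde{Q_0^\oplus}=\widetilde{Q}^\oplus$. Combining with $CF(Q^\oplus)=CF(Q_0^\oplus)$ from the base reduction gives $CF(Q^\oplus)=CF(\widetilde{Q}^\oplus)$, completing the induction.

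The only point requiring genuine care, and the step I would treat as the main obstacle, is verifying that the suppression operation is well-behaved enough for the induction to close cleanly: namely that suppressing one $2$-cycle does not change which of the \emph{other} cycles induce $2_1$- or $2_3$-cycles in the induced unrooted quartet network, and that the order of suppression is immaterial. Both follow from the level-1 disjointness of cycles (Lemma~\ref{lem:assos}) and the fact that the $v$-partition of taxa attached to a cycle $C_v$ depends only on the taxa and the placement of $C_v$, not on whether far-away $2$-cycles have been suppressed; I would state this explicitly rather than leave it implicit, since it is what guarantees $\widetilde{Q}^\oplus$ is independent of the suppression order and equals the fully reduced network.
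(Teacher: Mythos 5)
Your proposal is correct and matches the paper's (implicit) argument: the paper simply states that Lemmas \ref{lem::formula2:1} and \ref{lem::formula2:3} ``together yield'' the corollary, and your induction on the number of $2_1$- and $2_3$-inducing cycles is exactly the iteration being left unstated. Your added care about suppression preserving the level-1 structure and the classification of the remaining cycles is a reasonable filling-in of that gap, not a departure from the paper's route.
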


While $2_1$- and $2_3$-cycles have no impact on concordance factors, things are not quite so simple for other types of cycles. 
 
\begin{lemma}\label{lem::formula2:2}Let $Q^\oplus=\mathcal{Q}^\oplus_{abcd}$ be a  metric level-1 MRCA quartet network and let $C_v$ be a cycle in $Q^\oplus$, that induces a $2_2$-cycle in $\mathcal{Q}^-_{abcd}$. Then $$CF(Q^\oplus)=\gamma^2CF(Q_1^\oplus)+(1-\gamma)^2CF(Q_2^\oplus)+2\gamma(1-\gamma)CF(Q_0^\oplus).$$
\end{lemma}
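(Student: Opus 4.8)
The plan is to condition on the number of gene lineages that reach the hybrid node $v$, exactly as in the proofs of Lemmas \ref{lem::formula2:1}--\ref{lem::formula2:3}, and then to condition further on which of the two parallel hybrid edges each such lineage traverses. Set $K=K_v$. Because $C_v$ induces a $2_2$-cycle, exactly two taxa, say $a$ and $b$, descend from $v$, so $K$ takes values in $\{1,2\}$. The crucial preliminary remark is that the subnetwork strictly below $v$ is identical in $Q^\oplus$, $Q_0^\oplus$, $Q_1^\oplus$, and $Q_2^\oplus$; hence the distribution of $K$ is the same in all four networks, and I may condition on $K$ consistently across them.

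First I would dispose of the case $K=1$. Here the lineages from $a$ and $b$ have already coalesced below $v$, and since $c$ and $d$ are not below $v$ this is the first coalescent event; by Observation \ref{obs:quartetcoales} the unrooted quartet topology is already determined. Consequently the conditional concordance factor is insensitive to anything happening at or above $v$, so
\[
CF(Q^\oplus\mid K=1)=CF(Q_0^\oplus\mid K=1)=CF(Q_1^\oplus\mid K=1)=CF(Q_2^\oplus\mid K=1).
\]

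Next, for $K=2$ I would condition on the independent choices each of the two lineages makes at $v$: a lineage follows $h_1$ with probability $\gamma$ and $h_2$ with probability $1-\gamma$. This yields three sub-cases, with probabilities $\gamma^2$ (both take $h_1$), $(1-\gamma)^2$ (both take $h_2$), and $2\gamma(1-\gamma)$ (they split). When both lineages take $h_1$ the realization is exactly one on $Q_1^\oplus$, and likewise both-$h_2$ gives $Q_2^\oplus$. The delicate point, which I expect to be the main obstacle, is the split sub-case: I must argue that it coincides with $Q_0^\oplus$. Since the two lineages travel up distinct edges $h_1$ and $h_2$, on which no other lineages are present, they cannot coalesce before reaching the top node $w$ of $C_v$, so they arrive at $w$ as two separate lineages. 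This is precisely the configuration obtained by suppressing $C_v$ (identifying $w$ and $v$), whence $CF(Q^\oplus\mid K=2,\textup{split})=CF(Q_0^\oplus\mid K=2)$. Combining the three sub-cases gives
\[
CF(Q^\oplus\mid K=2)=\gamma^2CF(Q_1^\oplus\mid K=2)+(1-\gamma)^2CF(Q_2^\oplus\mid K=2)+2\gamma(1-\gamma)CF(Q_0^\oplus\mid K=2).
\]

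Finally I would reassemble. Writing each $CF(Q_i^\oplus)=P(K=1)\,CF(Q_i^\oplus\mid K=1)+P(K=2)\,CF(Q_i^\oplus\mid K=2)$ and using the $K=1$ identity together with $\gamma^2+(1-\gamma)^2+2\gamma(1-\gamma)=1$, the $K=1$ contributions to the right-hand side collapse to the single term $P(K=1)\,CF(Q^\oplus\mid K=1)$, while the $K=2$ contributions reproduce $P(K=2)\,CF(Q^\oplus\mid K=2)$ by the displayed identity above. Adding these recovers $CF(Q^\oplus)$, establishing the claimed formula. The only genuinely new ingredient relative to the earlier reduction lemmas is the split-equals-suppression argument; everything else is bookkeeping with the law of total probability.
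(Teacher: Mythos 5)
Your decomposition by $K=K_v$ and then by the hybrid-edge choices of the two lineages is exactly the paper's strategy, and your treatment of the $K=1$ case and of the sub-cases $\gamma^2$, $(1-\gamma)^2$ is fine. The gap is in the split sub-case: you assume both hybrid edges of $v$ emanate from a single ``top node $w$'' of $C_v$, so that the two split lineages ride up parallel edges in isolation and reunite at $w$. That is only true when the root of $Q^\oplus$ does \emph{not} lie on $C_v$. The lemma's hypothesis is that $C_v$ \emph{induces} a $2_2$-cycle in $\mathcal{Q}^-_{abcd}$, and by the paper's convention this includes the case where $C_v$ is a $3$-cycle of $Q^\oplus$ containing the root $r=\mathrm{MRCA}(a,b,c,d)$, with nodes $r,u,v$ and hybrid edges $(r,v)$ and $(u,v)$ having \emph{distinct} parents. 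In that configuration, when the two lineages split, one arrives at $r$ but the other arrives at $u$ and must then traverse the edge $(u,r)$, where it coexists with the lineages descending to the other two taxa through $u$'s other child $y$; it can coalesce with one of them before the two split lineages ever meet. So ``split $=$ suppressed cycle'' is not immediate there, and your assertion that the two edges carry ``no other lineages'' until reunion fails.

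The paper handles this case separately: it conditions additionally on $K_y$ (the number of lineages surviving at $y$), and in the sub-case $K=K_y=2$ it identifies the two split scenarios with the $4$-taxon trees $T_c$ and $T_d$ of Figure \ref{fig:TdTc}, then invokes Propositions 3 and 11 of \cite{Allman2011} to show $CF(T_c\mid K_y=2)=CF(T_d\mid K_y=2)=CF(Q_0^\oplus\mid K=2,K_y=2)$, using that $Q_0^\oplus$ collapses to the tree $((a,b)\colon l,(c,d)\colon 0)$. To complete your proof you would need either to add this tree-comparison argument for the root-on-cycle case, or to show explicitly that the case cannot arise --- and it can, so the former is required. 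Everything else in your write-up, including the final reassembly via $\gamma^2+(1-\gamma)^2+2\gamma(1-\gamma)=1$, is correct.
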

\begin{proof}
	Let $K=K_v$ with values in $\{1,2\}$, so that
	\begin{align*}
	CF(Q^\oplus)=P(K=1)CF(Q^\oplus\mid K=1)+P(K=2)CF(Q^\oplus\mid K=2).
	\end{align*}

 Suppose the root $r$ of $Q^\oplus$ is not in $C_v$, so $C_v$ is also a $2_2$-cycle in $Q^\oplus$. Note that 
	\begin{align*} 
	CF(Q^\oplus\mid K=2)&=\gamma^2CF(Q_1^\oplus\mid K=2)+(1-\gamma)^2CF(Q_2^\oplus\mid K=2)\\
	&\quad +2\gamma(1-\gamma)CF(Q_0^\oplus\mid K=2).
	\end{align*}
 Thus we will express  $CF(Q^\oplus\mid K=1)$ in a similar fashion.  If $K=1$  the gene tree topology has been determined before the lineages enter $v$. Thus $CF(Q_i^\oplus\mid K=1)=CF(Q\mid K=1)$  for $i\in\{0,1,2\}$ and
	\begin{align}\label{cfeq1}
	CF(Q^\oplus\mid K=1)&=\gamma^2CF(Q_1^\oplus\mid K=1)+(1-\gamma)^2CF(Q_2^\oplus\mid K=1) \notag\\
	&\quad+2\gamma(1-\gamma)CF(Q_0^\oplus\mid K=1);
	\end{align}
	  by summing the result holds when $r$ is not in $C_v$. 
	
	Now suppose that $r$ is in  $C_v$, and $C_v$ has nodes $r$, $v$, $u$. Without loss of generality suppose that the taxa below $v$ are $c$ and $d$.  Since $u$ is a tree node it has another descendant $y$. Define a random variable $K_y$  to be the number of lineages at $y$.  Note that $K$ and $K_y$ are independent, with values in $\{1,2\}$. If  either $K$ or $K_y$ is 1, one coalescent event has occurred and the unrooted gene tree topology has been determined so $CF(Q_i^\oplus\mid K=1\text{ or }K_y=1)$ are equal for $i\in\{0,1,2\}$, and
	\begin{align} \label{eq::2_2mrca}
	CF(Q^\oplus\mid K=1\text{ or }K_y=1)&=\gamma^2CF(Q_1^\oplus\mid K=1\text{ or }K_y=1)\notag\\
	&\quad+(1-\gamma)^2CF(Q_2^\oplus\mid K=1\text{ or }K_y=1)\notag\\
	&\quad+2\gamma(1-\gamma)CF(Q_0^\oplus\mid K=1\text{ or }K_y=1).
	\end{align}
	Now suppose that $K$ and $K_y$ are both 2. Let $T_c$ and $T_d$ be the trees shown on Figure \ref{fig:TdTc}.
	Therefore
	\begin{align*}
	CF(Q^\oplus\mid K=2,K_y=2)&=\gamma^2CF(Q_1^\oplus\mid K=2,K_y=2)\\
	&\quad+(1-\gamma)^2CF(Q_2^\oplus\mid K=2,K_y=2)\\
	&\quad+\gamma(1-\gamma)CF(T_c\mid K_y=2)\\
	&\quad+\gamma(1-\gamma)CF(T_d\mid K_y=2). 
	\end{align*}
	
	By Proposition 3  in \cite{Allman2011}, $CF(T_d\mid K_y=2)=CF(T_c\mid K_y=2)$, and in fact they equal $CF(Q_0^\oplus\mid K=2,K_y=2)$. This is because in $Q_0^\oplus$ the suppression of the cycle identifies the nodes $r$, $u$, and $v$, so conditioned on $K=2$, $K_y=2$ we may view the coalescent process on $Q_0^\oplus$ as that in the $4$-taxon tree $((a,b):l,(c,d):0)$ where $l$ is the length of $(u,y)$. By Proposition 11  in \cite{Allman2011}, $CF(T_c\mid K_y=2)=CF(Q_0^\oplus\mid K=2,K_y=2)$.  Therefore  
		\begin{multline*}
	CF(Q^\oplus\mid K=2,K_y=2)=\gamma^2CF(Q_1^\oplus\mid K=2,K_y=2)\\
	+(1-\gamma)^2CF(Q_2^\oplus\mid K=2,K_y=2) 
	2\gamma(1-\gamma)CF(Q_0^\oplus\mid K=2,K_y=2).
	\end{multline*}
	
	This together with equation \eqref{eq::2_2mrca} implies the claim.  $\Box$\end{proof}

\begin{figure}\begin{center}	 
	\includegraphics[scale=.22]{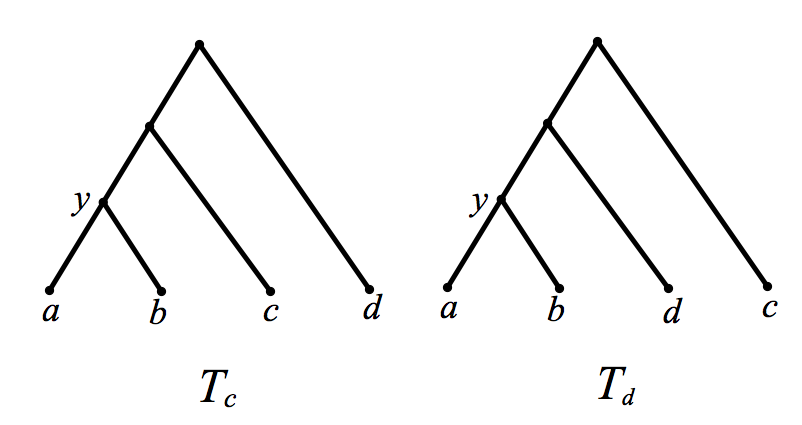} \caption{ The two trees $T_d$ and $T_c$ in the proof of Lemma \ref{lem::formula2:2}, obtained when $K=2$, $K_y=2$ and the lineages $c$ and $d$ trace  different hybrid edges.}\label{fig:TdTc}
\end{center}\end{figure}

\begin{lemma}\label{lem::formula43:1}Let $Q^\oplus=\mathcal{Q}^\oplus_{abcd}$ be a metric  level-1 MRCA quartet network and let $C_v$ be a cycle in $Q^\oplus$, that induces either  a 4-cycle or a $3_1$-cycle in $\mathcal{Q}^-_{abcd}$.  Then $$CF(Q^\oplus)=\gamma CF(Q_1^\oplus)+(1-\gamma)CF(Q_2^\oplus).$$
\end{lemma}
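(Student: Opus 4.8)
The plan is to exploit the defining feature shared by $4$-cycles and $3_1$-cycles: exactly one taxon descends from the hybrid node $v$, so $l_v = 1$. As recorded just before the statement, this forces the number of lineages reaching $v$ to satisfy $K_v = 1$ with probability $1$, and in particular \emph{no} coalescent event can occur at $v$ itself. This is exactly what distinguishes the present case from the $2_2$-cycle of Lemma \ref{lem::formula2:2}: there two lineages could meet at $v$, producing the extra $Q_0^\oplus$ term, whereas here the single lineage simply selects a hybrid edge, yielding a clean two-term convex combination with no $Q_0^\oplus$ contribution.

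Concretely, I would let $H$ record which hybrid edge the unique lineage at $v$ traces back through, with $H=1$ for $h_1$ and $H=2$ for $h_2$. By the definition of the hybridization parameter, and because this choice is made by a single lineage independently of everything below $v$ (which is common to $Q^\oplus$, $Q_1^\oplus$ and $Q_2^\oplus$), we have $P(H=1) = \gamma_{h_1} = \gamma$ and $P(H=2) = 1-\gamma$. Conditioning on $H$ gives
$$CF(Q^\oplus) = \gamma\, CF(Q^\oplus \mid H=1) + (1-\gamma)\, CF(Q^\oplus \mid H=2),$$
so the lemma reduces to establishing $CF(Q^\oplus \mid H=i) = CF(Q_i^\oplus)$ for $i=1,2$.

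For this last identification, which is the heart of the argument, I would observe that given $H=i$ the unused hybrid edge $h_{3-i}$ is traversed by no lineage, so deleting it changes nothing about the realized coalescent process; since $Q_i^\oplus$ is by definition $Q^\oplus$ with $h_{3-i}$ removed (and degree-two nodes suppressed, which only sums consecutive edge lengths and leaves coalescent probabilities untouched), the two conditional processes coincide. The only point needing care, and the closest thing to an obstacle, is confirming that this holds irrespective of whether the root of $Q^\oplus$ lies on $C_v$ — the case that forced separate treatment in Lemmas \ref{lem::formula2:1} and \ref{lem::formula2:2}. But the difficulty there stemmed entirely from coalescence among several lineages at or above the hybrid node; with $l_v=1$ no such coalescence occurs, so the choice of $h_i$ fixes the entire remaining structure the lineage encounters, and the argument goes through uniformly in the root's position. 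I would note in passing that removing $h_{3-i}$ cannot degenerate the root, since each side of the cycle retains a pendant taxon (all $v$-blocks being nonempty), so the root stays binary and no spurious contraction occurs.
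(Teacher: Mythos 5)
Your proposal is correct and follows essentially the same route as the paper: since only one taxon descends from $v$, $P(K_v=1)=1$, and conditioning on which hybrid edge that single lineage traces (with probabilities $\gamma$ and $1-\gamma$) immediately gives $CF(Q^\oplus)=\gamma\, CF(Q_1^\oplus)+(1-\gamma)\, CF(Q_2^\oplus)$. The extra remarks about the root's position and suppression of degree-two nodes are careful but not needed beyond what the paper's one-line computation already encodes.
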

\begin{proof}
	Letting $K=K_v$, then $P(K=1)=1$. Thus,
	\begin{align*}
	CF(Q^\oplus)&=P(K=1)CF(Q^\oplus\mid K=1)\\
	&=P(K=1)(\gamma CF(Q_1^\oplus\mid K=1)+(1-\gamma)CF(Q_2^\oplus\mid K=1))\\
	&=\gamma CF(Q_1^\oplus)+(1-\gamma)CF(Q_2^\oplus).
	\end{align*}
 $\Box$\end{proof}

 It remains to consider  a $3_2$-cycle. For this case it helps to introduce new terminology. 	Let $G$ be a semidirected graph and $v$ be a node in $G$ with indegree 2 and outdegree 0. Let $h_v$ and $h_v'$ be the edges incident to $v$ and let $u$ and $u'$ the parent nodes in $h_v$ and $h_v'$ respectively. We refer to \textit{disjointing} $h_v$ and $h_v'$ from $v$ as the process of 1) deleting $v$ from $G$; 2) introducing nodes $w$ and $w'$; 3) introducing directed edges $(u,w)$ and $(u',w')$.

Let $Q^\oplus=\mathcal{Q}^\oplus_{abcd}$ be a  metric level-1 MRCA quartet network, and $C_v$ a cycle in $Q^\oplus$, that induces a $3_2$-cycle in $\mathcal{Q}^-_{abcd}$. Without loss of generality suppose that $a$ and $b$ are the taxa below $v$.   Let $Q_a^\oplus$ be the network  obtained from $Q^\oplus$ by 1) deleting  everything below $v$; 2) disjointing $h_1$ and $h_2$ from $v$; 3) labeling a leaf that is currently unlabeled  by $a$ and the other  unlabeled leaf by $b$. We construct $Q_b^\oplus$ by swapping the labels $a$ and $b$ in $Q_a^\oplus$. Figure \ref{fig::disjoint} depicts an particular example of this.

\begin{figure}\begin{center}	 
	\includegraphics[scale=.18]{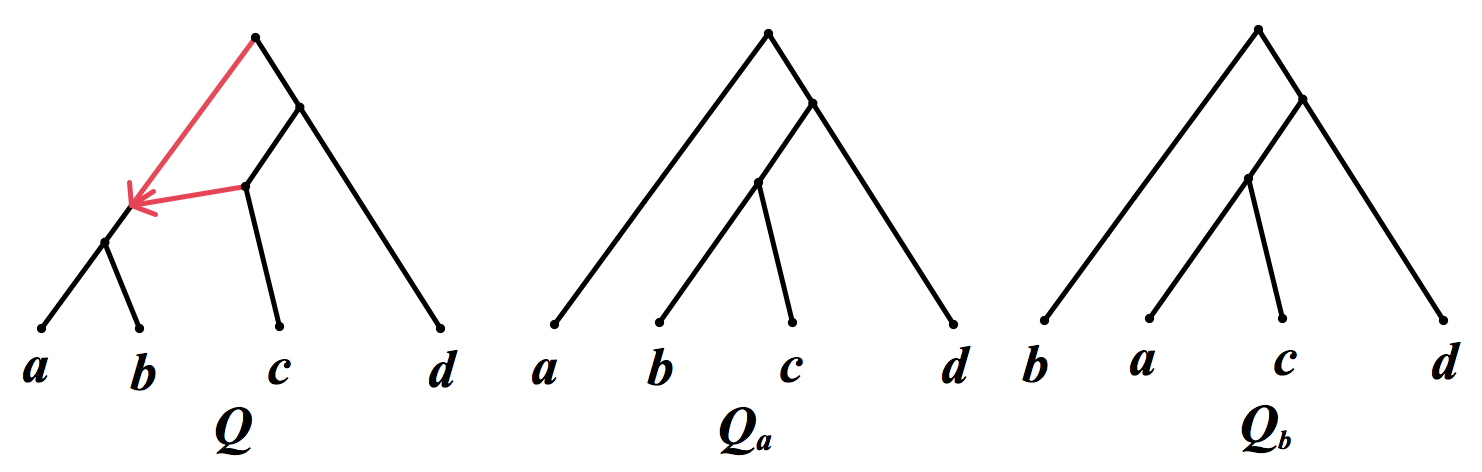} \caption{ A MRCA quartet $Q^\oplus$ with a cycle $C$ that induces a $3_2$-cycle in the unrooted quartet and  the   graphs obtained by deleting everything below the hybrid node,  disjointing, and labeling the leaves.}\label{fig::disjoint}
\end{center}\end{figure}

\begin{lemma}\label{lem::formula3:2}
	Let $Q^\oplus=\mathcal{Q}^\oplus_{abcd}$ be a  metric level-1 MRCA quartet network,  $C_v$ be a cycle in $Q^\oplus$, that induces a $3_2$-cycle in $\mathcal{Q}^-_{abcd}$ and let $K=K_v$. Suppose that the two taxa below $v$ are $a$ and $b$, then
		\begin{align*}
		CF(Q^\oplus)&=\gamma^2CF(Q_1^\oplus)+(1-\gamma)^2CF(Q_2^\oplus)\\
		&\quad+P(K=1)2\gamma(1-\gamma)CF(Q_0^\oplus\mid K=1)\\
		&\quad +P(K=2)\gamma(1-\gamma) [CF(Q_a^\oplus)+CF(Q_b^\oplus)].
		\end{align*}
\end{lemma}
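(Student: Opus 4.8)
The plan is to condition on the number of lineages $K = K_v \in \{1,2\}$ present at the hybrid node $v$, and then on the independent choices each such lineage makes between the hybrid edges $h_1$ and $h_2$ (entering $h_1$ with probability $\gamma$ and $h_2$ with probability $1-\gamma$, exactly as the two lineages $B,C$ behave in Example \ref{ex::nmsc}). Writing
$$CF(Q^\oplus) = P(K=1)\,CF(Q^\oplus \mid K=1) + P(K=2)\,CF(Q^\oplus \mid K=2),$$
I would analyze the two conditional terms separately and recombine them at the end. The organizing observation is that the coalescent process strictly below $v$ is identical in $Q^\oplus$, $Q_1^\oplus$, and $Q_2^\oplus$ (these networks differ only in the hybrid structure at and above $v$), so the three share the same distribution of $K_v$; in particular the value $P(K=1)$ is common to all of them.

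For $K=1$, the two taxa $a,b$ below $v$ have already coalesced, and since they are the only taxa below $v$ this must be the first coalescent event overall; by Observation \ref{obs:quartetcoales} the unrooted topology is then fixed as $ab|cd$. Hence $CF(Q_i^\oplus \mid K=1)$ equals the same value for every $i\in\{0,1,2\}$, regardless of which hybrid edge the single lineage follows. Using $\gamma^2 + (1-\gamma)^2 + 2\gamma(1-\gamma) = 1$, I would rewrite this common conditional factor in the bookkeeping form $\gamma^2 CF(Q_1^\oplus\mid K=1) + (1-\gamma)^2 CF(Q_2^\oplus\mid K=1) + 2\gamma(1-\gamma)CF(Q_0^\oplus\mid K=1)$, purely so that its contribution will merge cleanly with the $K=2$ part.

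For $K=2$, the two lineages $a$ and $b$ reach $v$ without coalescing and each independently picks a hybrid edge: with probability $\gamma^2$ both enter $h_1$, contributing $CF(Q_1^\oplus \mid K=2)$; with probability $(1-\gamma)^2$ both enter $h_2$, contributing $CF(Q_2^\oplus \mid K=2)$; and with probability $2\gamma(1-\gamma)$ they split. The heart of the argument is the split case: when $a$ and $b$ take different hybrid edges, the remaining process begins from two uncoalesced lineages seated at the depth of $v$ on the two hybrid edges, with everything below $v$ now irrelevant. This is precisely the state produced by the disjointing construction, so $a\to h_1,\ b\to h_2$ corresponds to $CF(Q_a^\oplus)$ and $a\to h_2,\ b\to h_1$ to $CF(Q_b^\oplus)$, each occurring with probability $\gamma(1-\gamma)$. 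Thus $CF(Q^\oplus \mid K=2) = \gamma^2 CF(Q_1^\oplus\mid K=2) + (1-\gamma)^2 CF(Q_2^\oplus\mid K=2) + \gamma(1-\gamma)\bigl[CF(Q_a^\oplus)+CF(Q_b^\oplus)\bigr]$.

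Finally I would substitute both conditional expressions back and, using that $P(K=k)$ agrees across $Q^\oplus, Q_1^\oplus, Q_2^\oplus$, apply the law of total probability $P(K=1)CF(Q_i^\oplus\mid K=1) + P(K=2)CF(Q_i^\oplus\mid K=2) = CF(Q_i^\oplus)$ for $i=1,2$ to collapse the $\gamma^2$- and $(1-\gamma)^2$-weighted terms into $\gamma^2 CF(Q_1^\oplus)$ and $(1-\gamma)^2 CF(Q_2^\oplus)$; the surviving $Q_0^\oplus$ term comes only from $K=1$ and the $Q_a^\oplus,Q_b^\oplus$ terms only from $K=2$, giving exactly the claimed identity. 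I expect the main obstacle to be rigorously justifying the split case: that disjointing $h_1,h_2$ from $v$ and relabeling genuinely yields networks whose (unconditional) concordance factors equal the conditional factors of $Q^\oplus$ given $K=2$ and the respective edge assignment. This requires checking that the disjointing operation assigns the correct edge lengths so the two lineages start at the depth of $v$, and that, conditioned on $K=2$, the sub-structure below $v$ plays no further role, so that independence of the hybrid-edge choices and exchangeability legitimately decouple the two lineages into the two disjointed networks.
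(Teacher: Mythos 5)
Your proposal follows essentially the same route as the paper's proof: condition on $K=K_v\in\{1,2\}$, handle $K=1$ by noting the topology is already determined (so the convex combination in the form of equation \eqref{cfeq1} holds trivially), and handle $K=2$ by splitting on the independent hybrid-edge choices of the two lineages, with the mixed case giving $CF(Q_a^\oplus)$ and $CF(Q_b^\oplus)$ via the disjointing construction. The argument is correct and matches the paper's decomposition step for step.
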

\begin{proof}
	By hypothesis $K$ takes values in $\{1,2\}$ and 
	$$CF(Q^\oplus)=P(K=1)CF(Q^\oplus\mid K=1)+P(K=2)CF(Q^\oplus\mid K=2).$$
	If $K=1$ the unrooted tree topology has been determined and  $CF(Q^\oplus\mid K=1) $ is given by the expression in equation \eqref{cfeq1}.  If $K=2$,
	\begin{align*}
	CF(Q^\oplus\mid K=2)&=\gamma^2 CF(Q_1^\oplus\mid K=2)+(1-\gamma)^2CF(Q_2^\oplus\mid K=2)\\
	&\quad + \gamma(1-\gamma) CF(Q_a^\oplus)+\gamma(1-\gamma)CF(Q_b^\oplus).
	\end{align*}
	Therefore,
	\begin{align*}
	CF(Q^\oplus)&=P(K=1)(\gamma^2CF(Q_1^\oplus\mid K=1)+(1-\gamma)^2CF(Q_2^\oplus\mid K=1)\\
	&\quad+2\gamma(1-\gamma)CF(Q_0^\oplus\mid K=1)\\
	&\quad+P(K=2)[\gamma^2CF(Q_1^\oplus\mid K=2)+(1-\gamma)^2CF(Q_2^\oplus\mid K=2)\\
	&\quad+\gamma(1-\gamma) CF(Q_a^\oplus)+\gamma(1-\gamma)CF(Q_b^\oplus)],
	\end{align*}
	which yields the claim.
 $\Box$\end{proof}


These  Lemmas together imply that concordance factor for rooted quartet networks actually depend only on the unrooted network. This is formalized in the following.

\begin{proposition}\label{prop::CFunrooted}
	Let $Q=\mathcal{Q}^\oplus_{abcd}$ and $\widetilde{Q}=\widetilde{\mathcal{Q}}^\oplus_{abcd}$ be metric  level-1 MRCA quartet networks which induce the same unrooted network $\mathcal{Q}^-_{abcd}=\tilde{\mathcal{Q}}^-_{abcd}$. Then $CF(Q)=CF(\widetilde{Q})$.
\end{proposition}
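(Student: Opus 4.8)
The plan is to induct on the number of cycles of the common unrooted network $\mathcal{Q}^-_{abcd}$. In the base case $\mathcal{Q}^-_{abcd}$ has no cycles, so it is a quartet tree; then $Q$ and $\widetilde{Q}$ are two rooted versions of the \emph{same} unrooted quartet tree with the same internal edge length, and equation \eqref{CFtree} (that is, the root-independence of tree concordance factors from \cite{Allman2011}) gives $CF(Q)=CF(\widetilde{Q})$ immediately. By Lemma \ref{lem:assos} the cycles of $\mathcal{Q}^-_{abcd}$ are intrinsic and each carries a unique hybrid node, so they correspond to cycles of $Q$ and of $\widetilde{Q}$; this is what makes the induction well-posed.

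For the inductive step I would fix one cycle of $\mathcal{Q}^-_{abcd}$ and let $C_v$, $\widetilde{C}_v$ be the corresponding cycles of $Q$, $\widetilde{Q}$, sharing the same hybrid node $v$, hybrid edges $h_1,h_2$, and hybridization parameter $\gamma$. Since $C_v$ and $\widetilde{C}_v$ induce the same unrooted cycle, they have the same type among $2_1,2_2,2_3,3_1,3_2,4_1$, so a single reduction lemma (one of Lemmas \ref{lem::formula2:1}--\ref{lem::formula3:2}) applies to both. First I would argue that every coefficient occurring in these formulas is intrinsic to $\mathcal{Q}^-_{abcd}$ and hence identical for $Q$ and $\widetilde{Q}$: the powers of $\gamma$ and $1-\gamma$ are determined by the hybridization parameter of the unrooted network, and in the $3_2$ case the probabilities $P(K_v=1),P(K_v=2)$ are governed by the coalescent on the subtree of taxa descending from $v$, whose topology and edge lengths lie below $v$ and are unaffected by root placement. (By Corollary \ref{cor:2isgood} one may first suppress all $2_1$- and $2_3$-cycles, trimming the case analysis, though this is not essential.)

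Next I would match up the reduced networks. Each of $Q_0,Q_1,Q_2$ (and, in the $3_2$ case, $Q_a,Q_b$) is produced by a local surgery on the directed/hybrid structure of the cycle — deleting a hybrid edge, suppressing the cycle, or disjointing-and-relabeling — and each has strictly fewer cycles than $Q$. The point to verify is that these surgeries commute with passage to the unrooted network: $(Q_i)^-$ and $(\widetilde{Q}_i)^-$ coincide, being the unrooted network obtained by the corresponding root-independent surgery on $\mathcal{Q}^-_{abcd}$, where Proposition \ref{prop::ancestralinduced} lets me pass freely between rooted-then-unrooted induction and direct unrooted induction. Granting this, the induction hypothesis yields $CF(Q_i)=CF(\widetilde{Q}_i)$ for every $i$, and since the reduction formula combines these with identical coefficients, $CF(Q)=CF(\widetilde{Q})$.

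The main obstacle I expect is exactly this compatibility claim: making rigorous that each rooted reduction $Q\mapsto Q_i$ descends to a well-defined, root-independent operation on $\mathcal{Q}^-_{abcd}$, so that the two root placements yield reduced networks with a common unrooted image. The delicate configurations are those in which the root lies inside the cycle for one of $Q,\widetilde{Q}$ but not the other — so that the \emph{rooted} cycle sizes differ by one, as noted before Lemma \ref{lem:numberofcycles}, and $Q_0$ acquires its root by node identification in one case but not the other — together with the $3_2$ case, where I must check that the disjointing-and-relabeling construction of $Q_a,Q_b$ produces the same unrooted quartet trees irrespective of where the root sat.
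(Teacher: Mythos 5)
Your proposal is correct and follows essentially the same route as the paper: induction on the number of cycles of the common unrooted network, with the base case handled by root-independence of tree concordance factors and the inductive step applying the reduction lemmas to one cycle and observing that the reduced networks share the same unrooted metric structure and the same coefficients. In fact you explicitly flag the compatibility of the reduction surgeries with unrooting (including the root-in-cycle and $3_2$ configurations), a point the paper's proof passes over with only the assertion that the reduced networks ``have the same unrooted metric structure.''
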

\begin{proof}
	We prove this by induction on    the number of cycles in   $\mathcal{Q}^-_{abcd}$. When  there are no cycles in   $\mathcal{Q}^-_{abcd}$,  $Q$ and $\widetilde{Q}$ are trees, and by  Proposition 3  in \cite{Allman2011},  $CF(Q)=CF(\widetilde{Q})$. Assume now the result is true when there are fewer than $k+1$ cycles and  that  $\mathcal{Q}^-_{abcd}$ has $k+1$ cycles. Let $C_v$ be a cycle in $\mathcal{Q}^-_{abcd}$ with hybrid edges $h_1$ and $h_2$, by Lemmas \ref{lem::formula2:1}, \ref{lem::formula2:3}, \ref{lem::formula2:2}, \ref{lem::formula43:1}, and \ref{lem::formula3:2}, we can express the concordance factors of $Q$ and $\widetilde{Q}$ in terms of  networks with one fewer cycle. 
	Note that these networks for $Q$ and $\widetilde{Q}$ have the same   unrooted metric structure. Thus by  the induction hypothesis $CF(\widetilde{Q}_i)=CF(Q_i)$, for $i=0,1,2$, and therefore $CF(\widetilde{Q})=CF(Q)$.  $\Box$\end{proof}	

\begin{corollary}\label{cor::quartetisgood}
	Let $\mathcal{N}^+$ be a level-1 rooted metric network on   $X$ and let $a,b,c,d$ be distinct taxa of $X$.
	Under the NMSC,  $CF_{abcd}=CF(\mathcal{Q}^\oplus_{abcd})$  can be computed from the unrooted network $\mathcal{Q}^-_{abcd}$. 
\end{corollary}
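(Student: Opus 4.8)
The plan is to combine the two principal results already established in this section. By Theorem \ref{thm::MRCA} we have $CF_{abcd}=CF(\mathcal{Q}^\oplus_{abcd})$, so the concordance factor depends only on the induced MRCA quartet network and the ancestral chain of $2$-cycles above the MRCA plays no role. It then remains to show that $CF(\mathcal{Q}^\oplus_{abcd})$ is determined by the unrooted network $\mathcal{Q}^-_{abcd}$ alone, which is precisely the content of Proposition \ref{prop::CFunrooted}.

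Concretely, I would first invoke Theorem \ref{thm::MRCA} to reduce the computation to $\mathcal{Q}^\oplus_{abcd}$. By construction (and Proposition \ref{prop::ancestralinduced}), $\mathcal{Q}^\oplus_{abcd}$ induces the unrooted network $\mathcal{Q}^-_{abcd}$. Given \emph{any} two MRCA quartet networks that induce this same unrooted network, Proposition \ref{prop::CFunrooted} asserts that they yield equal concordance factors. Since $\mathcal{Q}^\oplus_{abcd}$ is one such network, its concordance factor is an invariant of $\mathcal{Q}^-_{abcd}$, so it may be computed from any valid rooting of $\mathcal{Q}^-_{abcd}$ into an MRCA network. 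To carry out such a computation explicitly, one selects any rooting consistent with the hybrid edge directions and then applies the reduction Lemmas \ref{lem::formula2:1}, \ref{lem::formula2:3}, \ref{lem::formula2:2}, \ref{lem::formula43:1}, and \ref{lem::formula3:2} repeatedly to write $CF(\mathcal{Q}^\oplus_{abcd})$ as a convex combination of concordance factors on quartet trees, each given by the tree formula \eqref{CFtree}.

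There is essentially no obstacle here, since the substantive work lies in Theorem \ref{thm::MRCA} and Proposition \ref{prop::CFunrooted}. The only point I would state carefully is that $\mathcal{Q}^-_{abcd}$ admits at least one valid MRCA-network rooting; this is immediate, because $\mathcal{Q}^-_{abcd}$ arose as the induced unrooted network of $\mathcal{Q}^\oplus_{abcd}$ to begin with. Hence the corollary follows by chaining the two results together.
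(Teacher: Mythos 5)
Your proposal is correct and matches the paper's intent exactly: the paper leaves this corollary without an explicit proof precisely because it follows by chaining Theorem \ref{thm::MRCA} with Proposition \ref{prop::CFunrooted}, which is what you do. Your additional remarks about the explicit reduction via the Lemmas and the tree formula \eqref{CFtree} mirror the algorithmic steps (i)--(v) the paper gives immediately after the corollary.
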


We  indicate how to compute the concordance factors of a MRCA network $\mathcal{Q}_{abcd}^\oplus$ from the unrooted quartet network $Q=\mathcal{Q}^-_{abcd}$ without having to introduce a root.  For $Q=\mathcal{Q}^-_{abcd}$  a unrooted metric level-1 quartet network,   where using Corollary  \ref{cor::quartetisgood} we define $CF(Q)=CF(\mathcal{Q}_{abcd}^\oplus)$ :

\begin{enumerate}
	\item[i)]  Let $Q'$ be the graph obtained from $Q$ by suppressing all $2_3$- and $2_1$- cycles. By Corollary \ref{cor:2isgood}, $CF(Q)=CF(Q')$. If $Q$ has a $4$-cycle go to step (ii), otherwise go to step (iii).
	\item[ii)]  By Lemma \ref{lem:numberofcycles} and Lemma \ref{lem::no2 3_2} there are no $3_1$-, $3_2$- or $2_2$-cycles in $Q$, and thus none in $Q'$. Then $Q'$ only has a $4$-cycle so apply Lemma \ref{lem::formula43:1} to $Q'$. Since $Q'_1$ and $Q'_2$ are quartet trees,  use the formula in equation \eqref{CFtree}. 
	\item[iii)]  There are at most two $3_1$-cycles in $Q'$. Choose one arbitrarily and apply Lemma \ref{lem::formula43:1}.   If $Q'_1$ and $Q'_2$ still have a $3_1$-cycle,  apply Lemma \ref{lem::formula43:1} again to $Q'_1$ and $Q'_2$.  
	\item[iv)] We have now expressed    concordance factors of $Q$ in terms of concordance factors of unrooted quartet networks with no $2_1$-,$2_3$-,$3_1$, or $4$-cycles. Apply Lemma \ref{lem::formula2:2}  to these networks, by for instance  choosing a $2_2$-cycle  with smallest graph theoretical distance from its hybrid node to a leaf, repeating until no 2-cycle remains.
	\item[v)] We have now an expression of the concordance factors of $Q$ in terms of concordance factors of unrooted  quartet networks with at most one $3_2$-cycle. Apply Lemma \ref{lem::formula3:2}.  Then we have removed all cycles, and   the concordance factors are now in terms of unrooted quartet trees. The formula of equation \eqref{CFtree} completes the calculation. \\
\end{enumerate}


The use of these Lemmas and Theorem is illustrated by a few examples.

\begin{example}\label{ex:2_2}
	Consider the unrooted quartet network  shown in Figure \ref{fig::2c2t}. By  Lemma \ref{lem::formula2:2}, with $x_i=e^{-t_i}$, the quartet concordance factors are given by:
	\begin{figure}\begin{center}
		\includegraphics[scale=.3]{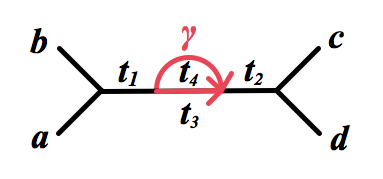}\\
		\caption{An unrooted quartet with a single $2_2$-cycle.}\label{fig::2c2t}
	\end{center}\end{figure}
	\begin{align}
	CF_{AB|CD}&=\left(1-\gamma\right)^2\left(1-\frac{2}{3}x_1x_2x_3\right)+2\gamma\left(1-\gamma\right)\left(1-\frac{2}{3}x_1x_2\right)\notag\\
	&\quad+\gamma^2\left(1-\frac{2}{3}x_1x_2x_4\right), \notag\\
	CF_{AC|BD}&=CF_{AD|BC}  \label{eq::2c2}\\
	&=\left(1-\gamma\right)^2\left(\frac{1}{3}x_1x_2x_3\right)+2\gamma\left(1-\gamma\right)\left(\frac{1}{3}x_1x_2\right)+\gamma^2\left(\frac{1}{3}x_1x_2x_4\right). \notag
	\end{align}
\end{example}

\begin{example}\label{ex:3_1}
	Consider the unrooted quartet network shown in Figure \ref{fig::3c1t}. By  Lemma \ref{lem::formula43:1}, with $x_i=e^{-t_i}$, the quartet concordance factors are given by:
	\begin{figure}\begin{center}
		\includegraphics[scale=.3]{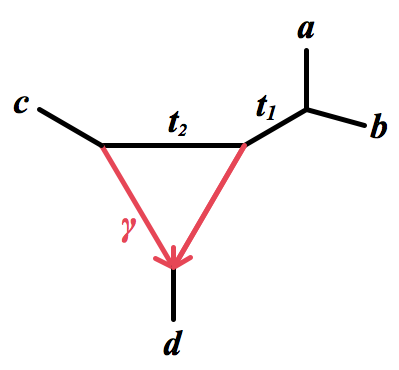}\\
		\caption{An unrooted quartet with a single $3_1$-cycle.}\label{fig::3c1t}
	\end{center}\end{figure}
	\begin{align}
	CF_{AB|CD}&= \left(1-\gamma\right)\left(1-\frac{2}{3}x_1\right)+\gamma\left(1-\frac{2}{3}x_1x_2\right)  ,\notag\\
	CF_{AC|BD}&= CF_{AD|BC}= \left(1-\gamma\right)\left(\frac{1}{3}x_1\right)+\gamma\left(\frac{1}{3}x_1x_2\right). \label{eq::3c1} 
	\end{align}
\end{example}

\begin{example}\label{ex:4_1}
	Consider the unrooted quartet  network shown in Figure \ref{fig::4c1t}. By Lemma \ref{lem::formula43:1}, with $x_i=e^{-t_i}$, the quartet concordance factors  are given by:
	\begin{figure}\begin{center}
		\includegraphics[scale=.25]{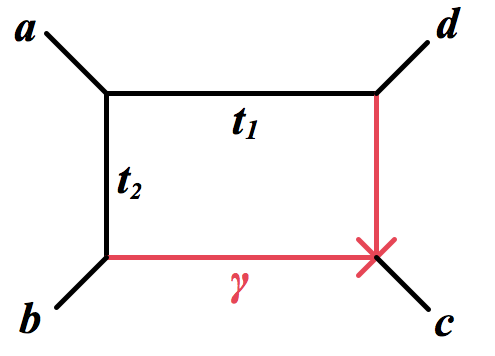}\\
		\caption{An unrooted quartet with a single $4_1$-cycle.}\label{fig::4c1t}
	\end{center}\end{figure}
	\begin{align}
	CF_{AB|CD}=& \left(1-\gamma\right)\left(1-\frac{2}{3}x_1\right)+\gamma\left(\frac{1}{3}x_2\right),\notag\\
	CF_{AC|BD}=& \left(1-\gamma\right)\left(\frac{1}{3}x_1\right)+\gamma\left(\frac{1}{3}x_2\right),\label{eq::4cyc}\\
	CF_{AD|BC}=& \left(1-\gamma\right)\left(\frac{1}{3}x_1\right)+\gamma\left(1-\frac{2}{3}x_2\right)   .\notag
	\end{align}	
\end{example}

\begin{example}\label{ex:3_2}
	Consider the unrooted quartet network shown in Figure \ref{fig::3c2t}. Note that $CF(Q_0\mid K=1)=(1,0,0)$. 
	By  Lemma \ref{lem::formula3:2}, with $x_i=e^{-t_i}$, the quartet concordance factors are given by:
	\begin{figure}\begin{center}
		\includegraphics[scale=.3]{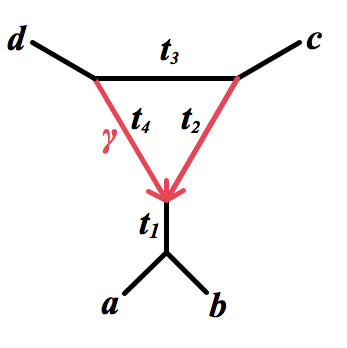}\\
		\caption{An unrooted quartet with a single $3_2$-cycle.}\label{fig::3c2t}
	\end{center}\end{figure}
	\begin{align}
	CF_{AB|CD}&=\left(1-\gamma\right)^2\left(1-\frac{2}{3}x_1x_2\right)+2\gamma\left(1-\gamma\right)\left(1-x_1+\frac{1}{3}x_1x_3\right)\notag\\
	&\quad +\gamma^2\left(1-\frac{2}{3}x_1x_4\right),\notag\\
	CF_{AC|BD}&=CF_{AD|BC}\label{eq::3cyc}\\
	&=\left(1-\gamma\right)^2\left(\frac{1}{3}x_1x_2\right)+\gamma\left(1-\gamma\right)x_1\left(1-\frac{1}{3}x_3\right)+\gamma^2\left(\frac{1}{3}x_1x_4\right).\notag
	\end{align}
\end{example}

\section{The Cycle property}\label{C section}

In this section we focus  on   the ordering  by magnitude of the concordance factors. 
\begin{proposition}\label{prop:Cfspreserv}Let $Q=\mathcal{Q}^-_{abcd}$ be a  metric unrooted level-1 quartet network with no $3_2$-cycle.
	The ordering of $CF_{abcd}(Q)$  is  the ordering of $CF_{abcd}(Q')$  where $Q'$ is obtained from $Q$ by suppressing all $2$-cycles and all $3_1$-cycles.
\end{proposition}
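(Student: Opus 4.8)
The plan is to track the two pieces of information that make up the ``ordering'' of a quartet concordance factor: which of the three splits carries the largest value, and the fact, evident from \eqref{CFtree}, that a quartet tree always produces a CF of the palindromic form $(p,q,q)$ with $p\ge q$, the dominant coordinate being the tree's displayed split. I will show that suppressing $2$-cycles and $3_1$-cycles keeps $CF(Q)$ in this palindromic form with the dominant coordinate in the same position, so the ordering is preserved.

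First I would dispose of the cheap cases. By Corollary \ref{cor:2isgood}, suppressing all $2_1$- and $2_3$-cycles leaves $CF$ unchanged and hence preserves the ordering, so I may assume $Q$ has none of these. If $Q$ contains a $4$-cycle, then by Lemmas \ref{lem:numberofcycles} and \ref{lem::no2 3_2} it has no $3$-cycle and no $2_2$-cycle, so after the previous step there are no $2$-cycles or $3_1$-cycles left to suppress and $Q'=Q$; the claim is trivial. Hence the substantive case is that $Q$ has no $4$-cycle, so, using the hypothesis that $Q$ has no $3_2$-cycle, its only remaining cycles are $2_2$- and $3_1$-cycles and $Q'$ is a tree.

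I would then induct on the number of $2_2$- and $3_1$-cycles. The base case has none, so $Q=Q'$ is a tree and the palindromic form gives the result. For the inductive step, pick such a cycle $C_v$ and apply Lemma \ref{lem::formula2:2} (for a $2_2$-cycle) or Lemma \ref{lem::formula43:1} (for a $3_1$-cycle) to write $CF(Q)$ as a convex combination, with strictly positive weights since $\gamma\in(0,1)$, of $CF(Q_0^\oplus),CF(Q_1^\oplus),CF(Q_2^\oplus)$ (only $CF(Q_1^\oplus),CF(Q_2^\oplus)$ in the $3_1$ case). Each of these networks is a level-$1$ quartet network with one fewer $2_2$/$3_1$-cycle, and removing a hybrid edge or suppressing a cycle creates no new cycles, so none of them acquires a $4$- or $3_2$-cycle. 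By the inductive hypothesis each summand has the palindromic form $(p_i,q_i,q_i)$ with $p_i\ge q_i$. The crucial point is that all of $Q_0^\oplus,Q_1^\oplus,Q_2^\oplus$ display the same quartet split as $Q'$, so these forms are aligned in the same coordinate; granting this, the combination is again $(\bar p,\bar q,\bar q)$ with $\bar p\ge\bar q$ and dominant coordinate equal to the split of $Q'$. Since the two minor coordinates are always exactly equal, no tie is ever broken the wrong way, and the ordering of $CF(Q)$ agrees with that of $CF(Q')$.

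The heart of the matter, and the step I expect to be the main obstacle, is the structural claim that resolving a hybrid edge or suppressing $C_v$ does not change the displayed split. For a $2_2$-cycle this is immediate: the hybrid block $\{c,d\}$ stays grouped below $v$ and its complement $\{a,b\}$ stays grouped in $Q_0^\oplus,Q_1^\oplus,Q_2^\oplus$ alike, forcing the split $ab|cd$ in each. For a $3_1$-cycle I would argue from the $v$-block structure: the three $v$-blocks consist of the hybrid singleton together with a singleton and a size-$2$ block drawn from the remaining three taxa, and deleting either hybrid edge and suppressing the two resulting degree-$2$ nodes leaves the hybrid taxon pendant on the path joining the other two blocks. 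Thus the displayed split is determined solely by which pair forms the size-$2$ block and is the same for $Q_1^\oplus$ and $Q_2^\oplus$, and likewise for the suppressed $Q_0^\oplus$; since the groupings of the other taxa are untouched, the full reduction of each of the three networks to a tree yields the same split as $Q'$, closing the induction. I would note in passing that this aligned-split phenomenon genuinely fails for a $4_1$-cycle, whose two resolutions display different splits as Example \ref{ex:4_1} shows, which is precisely why $4$-cycles are left in place rather than suppressed.
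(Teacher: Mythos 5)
Your proposal is correct and follows essentially the same route as the paper: reduce away the $2_1$- and $2_3$-cycles via Corollary \ref{cor:2isgood}, dispose of the $4$-cycle case, and then induct on the number of remaining $2_2$- and $3_1$-cycles using the convex-combination formulas of Lemmas \ref{lem::formula2:2} and \ref{lem::formula43:1}, noting that $Q_0$, $Q_1$, $Q_2$ share the topology of $Q'$ so their orderings align. Your explicit justification that the displayed split is unchanged under each reduction, and your remark on why this fails for $4_1$-cycles, merely spell out what the paper asserts in one line ("$Q_0$, $Q_1$ and $Q_2$ have the same topology"), so no substantive difference remains.
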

\begin{proof}
	By Corollary \ref{cor:2isgood}, $CF(Q)=CF(Q^*)$, where $Q^*$ is obtained from $Q$ by suppressing all $2_1$- and $2_3$-cycles. Therefore  we can assume $Q$ has no $2_1$- or  $2_3$-cycles. If $Q$ has a 4-cycle, it has no $3_1$- and no $2_2$-cycles and the claim is established.

	So suppose $Q$ has only $2_2$-cycles and $3_1$-cycles.  We proceed  by induction in the number of cycles, with the base case of 0 cycles trivial. 
	Assume the result  is true for unrooted quartet networks with $k$ $3_1$- and $2_2$-cycles and suppose $Q$ has $k+1$. Picking one cycle and applying one of Lemmas   \ref{lem::formula2:2} or \ref{lem::formula43:1} to $Q$, we can express the concordance factors of $Q$  as a convex combination of $CF(Q_0)$, $CF(Q_1)$ and $CF(Q_2)$. Note that $Q_0$, $Q_1$ and $Q_2$  have the same topology and by induction hypothesis,  $CF(Q_0)$, $CF(Q_1)$ and $CF(Q_2)$ have the same ordering as the concordance factors of  $Q'_0$, $Q'_1$ and $Q'_2$ respectively, the networks obtained after suppressing all  $2_2$- and $3_1$-cycles from  $Q_0$, $Q_1$ and $Q_2$. Since $Q'_0$, $Q'_1$, $Q_2$ and $Q'$ are trees with the same topology,  their concordance factors have the same ordering by equations \eqref{CFtree}. Thus $CF(Q_0)$, $CF(Q_1)$ and $CF(Q_2)$ have the same ordering, and ergo so does $CF(Q)$.  $\Box$\end{proof}

One consequence of Proposition \ref{prop:Cfspreserv} is that  for any  unrooted metric level-1 quartet network  $Q$ without a $3_2$- or a $4$-cycle, the ordering of the concordance factors is the same as the ordering of the concordance factors of a quartet tree. That is, the  two smallest elements of the concordance factors are equal. When this happens  we say that $Q$ is \textit{treelike}, since we could use equations \eqref{CFtree} to find a quartet tree with appropriate edge lengths and concordance factors equal to $CF(Q)$. However, not all unrooted quartet networks are treelike.  

\begin{example}\label{ex::cpbc}
	Let $\mathcal{Q}^-_{abcd}$ be the unrooted $3_2$-cycle quartet in Figure \ref{fig::3c2t}, where $\gamma=\frac{1}{2}$, $t_1=-\log\left(\frac{6}{7}\right)$, $t_2=-\log\left(\frac{6}{7}\right)$, $t_3=-\log\left(\frac{1}{14}\right)$ and $t_4=-\log\left(\frac{13}{14}\right)$.  By  the equations  in \eqref{eq::3cyc} we observe that  the concordance factors   are:
	\begin{align*}
	CF_{AB|CD}= \frac{32}{98},\text{ } CF_{AC|BD}=   \frac{33}{98},\text{ } CF_{AD|BC}=  \frac{33}{98}.
	\end{align*}
\end{example}

This motivates the following definition.

\begin{definition}\label{def::Cp}
	Let $\mathcal{N}^+$ be a metric rooted level-1  network on  $X$.
	We say that a set of four distinct taxa $s=\{a,b,c,d\}$ satisfies the \textbf{Cycle property} if  $\mathcal{Q}^-_s$ is not treelike, that is, if the two smallest values of $CF_s=CF(\mathcal{Q}^-_s)$ are not equal.
\end{definition}

The Cycle property is best understood geometrically. Denote by $\Delta_{2}$ the $2$-dimensional probability simplex,  the set of points in $ \mathbb{R}^3$ with nonnegative entries adding to 1.  
Observe that $CF_{abcd}\in\Delta_{2}$ for any distinct taxa $a,b,c,d$.   Figure \ref{fig:simplex1CP3} (left)   depicts the simplex where the black lines are the  points where the Cycle property is not satisfied; that is, the treelike unrooted quartet networks are those with concordance  factors   $(x,y,z)$ satisfying $x>\frac{1}{3}$, $y=z$ or  $y>\frac{1}{3}$, $x=z$ or  $z>\frac{1}{3}$, $x=y$. All points off these segments satisfy the Cycle property. \\
\begin{figure}\begin{center}
\includegraphics[scale=.27]{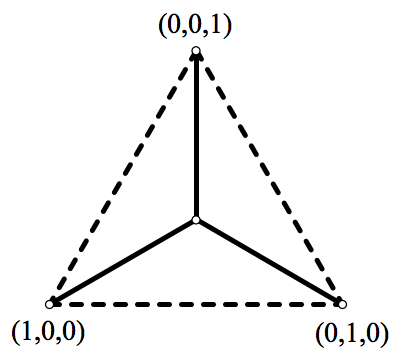}\includegraphics[scale=.27]{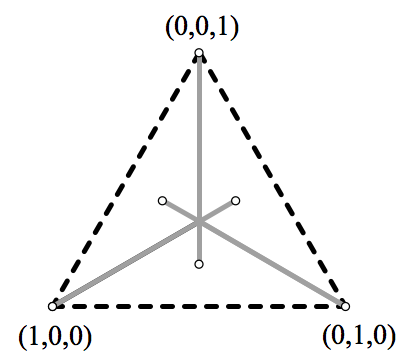}\includegraphics[scale=.27]{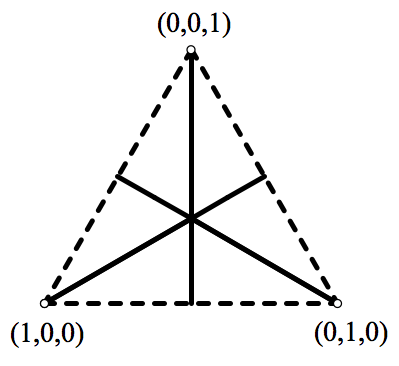} \caption{On the left a planar projection of the simplex $\Delta_2$, where the black lines represent concordance factors that are treelike. In the center, the  gray segments in $\Delta_2$ represent all the  concordance factors arising from unrooted quartet networks with a $3_2$-cycle. On the right,  the black lines represent the variety $V((x-z)(y-z)(x-y),x+y+z-1)$, these are  all concordance factors not satisfying the $BC$ property. }\label{fig:simplex1CP3}
\end{center}\end{figure}


\begin{proposition}\label{prop:3c2}Let $Q=\mathcal{Q}^-_{abcd}$ be a  metric unrooted level-1 quartet network with a $3_2$-cycle. Then $CF(Q)$  lies in 
	the  set $I$ defined by $x>\frac{1}{6}$, $y=z$ or  $y>\frac{1}{6}$, $x=z$ or  $z>\frac{1}{6}$, $x=y$, shown on the middle of  Figure \ref{fig:simplex1CP3}. Furthermore, for any point $(x,y,z)$ in this set there is such a $Q$ with $(x,y,z)=CF(Q)$.
\end{proposition}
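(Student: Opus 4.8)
The plan is to treat the two assertions separately: I would reduce the membership claim to an induction on the number of cycles whose base case is the explicit formula of Example~\ref{ex:3_2}, and prove surjectivity by a continuity argument on that same single-cycle network. The first thing to record is the shape of $I$. It is the union of three line segments, the $\ell$-th being the locus in $\Delta_2$ where the two coordinates other than the $\ell$-th are equal and the $\ell$-th coordinate exceeds $1/6$. Each such segment is convex, but their union $I$ is not, and this is the point the whole argument must respect. Writing $\{a,b\}$ for the taxa descending from the hybrid node of the $3_2$-cycle, I claim $CF(Q)$ always lands on the single segment $S_{ab}=\{(x,y,z)\in\Delta_2: y=z,\ x>1/6\}$, where $x=CF_{AB|CD}$.

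For membership I would induct on the number of cycles of $Q$. By Lemma~\ref{lem:numberofcycles} a $3_2$-cycle forbids any $4$-cycle, and by Lemma~\ref{lem::no2 3_2} it forbids a second $3_2$-cycle, so apart from the $3_2$-cycle every cycle is a $2$-cycle or a $3_1$-cycle. After deleting the $2_1$- and $2_3$-cycles, which do not change $CF$ by Corollary~\ref{cor:2isgood}, the base case is the single $3_2$-cycle of Figure~\ref{fig::3c2t}. Here Equation~\eqref{eq::3cyc} gives $CF_{AC|BD}=CF_{AD|BC}$ outright, and bounding the three brackets in the formula for $CF_{AB|CD}$ from below --- using $1-\tfrac23x_1x_2>\tfrac13$, $1-\tfrac23x_1x_4>\tfrac13$, and $1-x_1+\tfrac13x_1x_3\ge 0$ --- yields $CF_{AB|CD}>\tfrac13\big[(1-\gamma)^2+\gamma^2\big]\ge \tfrac16$, the last inequality because $\gamma(1-\gamma)\le \tfrac14$. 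Thus $CF(Q)\in S_{ab}$ in the base case.

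For the inductive step I would choose any cycle of $Q$ other than the $3_2$-cycle and apply Lemma~\ref{lem::formula2:2} or Lemma~\ref{lem::formula43:1}, expressing $CF(Q)$ as a convex combination of $CF(Q_0),CF(Q_1),CF(Q_2)$. Each $Q_i$ is again a $3_2$-cycle quartet with hybrid block $\{a,b\}$ but one fewer cycle, so by the inductive hypothesis each $CF(Q_i)$ lies on $S_{ab}$. This is where the non-convexity of $I$ would ordinarily be the main obstacle: a convex combination of points of $I$ need not lie in $I$. It is defused precisely because all three points lie on the \emph{same} segment $S_{ab}$, which is convex, so their combination stays in $S_{ab}\subseteq I$. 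This same-segment observation is the crux of the membership half; the sharp lower bound $CF_{AB|CD}>1/6$ is the other essential (but routine) computation.

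Finally, for surjectivity it suffices, after relabeling taxa so the chosen segment becomes $S_{ab}$, to realize every point of $S_{ab}$ by a single $3_2$-cycle network. The map $(\gamma,x_1,x_2,x_3,x_4)\mapsto CF_{AB|CD}$ of Equation~\eqref{eq::3cyc} is continuous on the connected set $(0,1)^5$; it tends to $1/6$ as $\gamma\to\tfrac12$, $x_1,x_2,x_4\to 1$, $x_3\to 0$, and to $1$ as $x_1\to 0$. By the intermediate value theorem its image is the full open interval $(1/6,1)$, and since the other two coordinates are forced to equal $(1-CF_{AB|CD})/2$, every interior point of $S_{ab}$ is attained. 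I would note only that the simplex vertex $x=1$ arises as a degenerate limit rather than being attained by a metric network.
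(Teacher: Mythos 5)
Your proposal is correct and follows essentially the same route as the paper: reduce to the single $3_2$-cycle network of Figure \ref{fig::3c2t} by expressing $CF(Q)$ as a convex combination over networks with one fewer cycle (Corollary \ref{cor:2isgood} and Lemmas \ref{lem::formula2:2}, \ref{lem::formula43:1}), observe that all resulting points lie on the one convex segment determined by the hybrid block, and use continuity/connectedness of the explicit formula \eqref{eq::3cyc} for surjectivity. The only cosmetic difference is that you bound $CF_{AB|CD}>\frac{1}{6}$ by a direct inequality while the paper computes the exact extreme $\left(\frac{2}{12},\frac{5}{12},\frac{5}{12}\right)$; your remark that the vertex $(1,0,0)$ is only a limit point matches the paper's use of an open segment.
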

\begin{proof}
	Let $s=\{a,b,c,d\}$ be a set of four distinct taxa and suppose that $\mathcal{Q}^-_{s}$  contains only a $3_2$-cycle,  as in Figure \ref{fig::3c2t}. Then $CF(\mathcal{Q}^-_{s})$ is given by the equations  \eqref{eq::3cyc} with $x_i=e^{-t_i}$, and in particular $CF_{AC|BD}=CF_{AD|BC}$.  To maximize $CF_{AD|BC}$ in  \eqref{eq::3cyc}, let $t_i\to  0$ for $i\in\{1,2,4\}$ and $t_3\to \infty$ to  obtain  a quadratic polynomial in $\gamma$,
	$$CF_{AD|BC}\to \frac{1}{3}(1-\gamma)^2+\gamma(1-\gamma)+\frac{1}{3}\gamma^2,$$
	
	whose maximum value is $\frac{5}{12}$ and it is attained at $\gamma=\frac{1}{2}$. For these values, we obtain $CF(\mathcal{Q}^-_{s})\to\left(\frac{2}{12},\frac{5}{12},\frac{5}{12}\right)$. To minimize $CF_{AD|BC}$ it is enough to  let $t_1\to\infty$, so $CF(\mathcal{Q}^-_{s})\to\left(1,0,0\right)$. 
	
	Let $\mathcal{L}$ be the open line segment with endpoints $\left(1,0,0\right)$ and $\left(\frac{2}{12},\frac{5}{12},\frac{5}{12}\right)$. Since $CF(\mathcal{Q}^-_{s})$ is continuous in $t_i$ and $\gamma$, its image is a connected set on the line $(x,y,y)$ containing points arbitrarily close to the endpoints of $\mathcal{L}$. Thus the image of $CF(\mathcal{Q}^-_{s})$ is $\mathcal{L}$. Permuting taxon names shows every point in the set $I$ is a concordance factor for a network with a $3_2$-cycle.
	
	Now suppose $\mathcal{Q}^-_{s}$ has a $3_2$ cycle with $a,b$ descending from the hybrid node, and possibly  other cycles. We may suppress all $2_1$- and $2_3$-cycles by Corollary \ref{cor:2isgood} without affecting $CF(\mathcal{Q}^-_{s})$. By Lemmas  \ref{lem::formula2:2} and \ref{lem::formula43:1}, we may remove $2_2$- and $3_1$-cycles by expressing $CF(\mathcal{Q}^-_{s})$ as a convex sum of networks with a $3_2$-cycle, but one fewer cycle. Thus  $CF(\mathcal{Q}^-_{s})$ is a convex sum of points in $\mathcal{L}$, which lies in $\mathcal{L}$.  $\Box$\end{proof}

%

In the supplementary materials of \cite{Solis-Lemus2016} it is  stated that an unrooted quartet network $Q_{abcd}$ with a $3_2$-cycle can be always reduced to an unrooted quartet tree with some adjustment in the edge lengths. This is not true in general; that is, when $\{a,b,c,d\}$ satisfies the Cycle property it is not treelike. However, Proposition \ref{prop:3c2} indicates that sometimes unrooted quartet networks with $3_2$-cycles are treelike.\\

 To conclude this section, we show the Cycle property can give positive information about a network.\\

\begin{proposition}\label{prop::cpcycle}
	Let $\mathcal{Q}^-_s$ be an unrooted   level-1 quartet network on  a set of taxa  $s=\{a,b,c,d\}$. If $s$ satisfies the Cycle property, the unrooted quartet network $\mathcal{Q}^-_s$ contains either a $3_2$-cycle or a 4-cycle. 
\end{proposition}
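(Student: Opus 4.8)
The plan is to prove the contrapositive: if $\mathcal{Q}^-_s$ contains neither a $3_2$-cycle nor a $4$-cycle, then $s$ fails the Cycle property, i.e.\ $\mathcal{Q}^-_s$ is treelike. This reframes the proposition as the assertion that, once these two cycle types are excluded, all surviving cycles are exactly the ones that Proposition~\ref{prop:Cfspreserv} allows us to suppress without disturbing the ordering of $CF_s$, leaving a quartet tree.

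First I would invoke Lemma~\ref{lem:numberofcycles} together with the classification of cycle types in Figure~\ref{fig:23cycles} to enumerate the cycles that can remain. By Lemma~\ref{lem:numberofcycles} there are no $k$-cycles for $k\geq 5$; since we have excluded the $4$-cycle, every $3$-cycle present (at most two of them) must be a $3$-cycle, and having also excluded $3_2$-cycles, each is a $3_1$-cycle. The only other cycles possible in a quartet network are the $2$-cycles of types $2_1$, $2_2$, and $2_3$. Hence under our assumption $\mathcal{Q}^-_s$ is assembled entirely from $2$-cycles and $3_1$-cycles.

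Next I would apply Proposition~\ref{prop:Cfspreserv}. Because $\mathcal{Q}^-_s$ has no $3_2$-cycle, that proposition gives that the ordering of $CF_s=CF(\mathcal{Q}^-_s)$ agrees with the ordering of $CF(Q')$, where $Q'$ is obtained from $\mathcal{Q}^-_s$ by suppressing every $2$-cycle and every $3_1$-cycle. By the enumeration of the previous step, suppressing precisely these cycles removes all cycles, so $Q'$ is a quartet tree. Equation~\eqref{CFtree} then shows that the two smallest entries of $CF(Q')$ are equal, so $Q'$ is treelike; since $CF_s$ has the same ordering, its two smallest entries coincide as well. Thus $\mathcal{Q}^-_s$ is treelike and $s$ does not satisfy the Cycle property, establishing the contrapositive.

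The argument carries almost no computational weight, since the substantive work is packaged into Proposition~\ref{prop:Cfspreserv}. The one point I would treat with care, and regard as the main (modest) obstacle, is the bookkeeping in the second paragraph: one must confirm that the complete list of quartet-network cycle types furnished by Lemma~\ref{lem:numberofcycles} leaves nothing outside the reach of Proposition~\ref{prop:Cfspreserv} after $3_2$- and $4$-cycles are removed, so that the suppressed network $Q'$ is genuinely cycle-free rather than merely having fewer cycles. This is what guarantees the final reduction to a tree and hence to equation~\eqref{CFtree}.
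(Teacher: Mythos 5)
Your proposal is correct and follows essentially the same route as the paper, which proves the contrapositive in one line by citing Proposition~\ref{prop:Cfspreserv}: if $\mathcal{Q}^-_s$ has neither a $3_2$-cycle nor a $4$-cycle, its concordance factors have the ordering of a tree's. Your more detailed bookkeeping of the remaining cycle types via Lemma~\ref{lem:numberofcycles} is exactly the reasoning the paper leaves implicit.
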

\begin{proof}
 Proposition \ref{prop:Cfspreserv} shows that if $\mathcal{Q}^-_s$ has  neither a $3_2$-cycle nor a 4-cycle, the concordance factors of  $\mathcal{Q}^-_s$ are those of a tree. 
 $\Box$\end{proof}

\section{The Big Cycle property}\label{BC section}
In this section we investigate how to detect 4-cycles in a network from quartet concordance factors.\\  


Even though the Cycle property give us some information about an unrooted quartet network, it is not sufficient to tell us what the unrooted quartet network is.  This is shown by the following  Example, where a 4-cycle network  lead to identical concordance factors as those in Example \ref{ex::cpbc}.

\begin{example}\label{ex::cpbc4}
	Let  $\widetilde{Q}^-_{abcd}$ be the $4$-cycle unrooted quartet in Figure \ref{fig::4c1t}, where $\gamma=\frac{1}{2}$, $t_1=-\log\left(\frac{48}{49}\right)$ $=t_2$. By  the equations in \eqref{eq::4cyc}   the concordance factors are:
	\begin{align*}
	CF_{AB|CD}= \frac{32}{98},\text{ } CF_{AC|BD}=   \frac{33}{98},\text{ } CF_{AD|BC}=  \frac{33}{98},
	\end{align*}
	These agree with those of $\mathcal{Q}^-_{abcd}$ in Example \ref{ex::cpbc}.
\end{example} 

  This motivates the following definition.

\begin{definition}\label{bigcycle}
	Let $\mathcal{N}^+$ be a  metric rooted level-1 network on   $X$. We say that a  subset of four distinct taxa $\{a,b,c,d\}\subset X$ satisfies the  \textbf{Big Cycle}  property (denoted \textbf{ $BC$}) if  all the entries of $CF_{abcd}$ are different.
	 
	Let $\{a,b,c,d\}$ be a subset of taxa satisfying the $BC$ property. Denote by ${q^{BC}_{abcd}}$  the unrooted quartet corresponding to the smallest entry of $CF_{abcd}$. 
\end{definition}
For example, if $CF_{AB|CD}<CF_{AC|BD}<CF_{AD|BC}$, then
${q^{BC}_{abcd}}= AB|CD$.

Note that if $s$ satisfies the $BC$ property then $s$ satisfies the Cycle property but the Cycle property is weaker than the Big Cycle property.

\begin{proposition}\label{prop:Bc4cycle1dir}
	Let $\mathcal{Q}^-_s$ be an unrooted  level-1 quartet  network on  a set of taxa  $s=\{a,b,c,d\}$. If $s$ satisfies the $BC$ property, then the unrooted quartet network $\mathcal{Q}^-_s$ contains a 4-cycle. 
\end{proposition}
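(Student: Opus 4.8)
The plan is to prove the statement by elimination, leaning entirely on the two preceding propositions, so that almost no new computation is required. The key observation is that the $BC$ property is strictly stronger than the Cycle property: if all three entries of $CF_{abcd}$ are distinct, then in particular its two smallest entries are unequal, so $s$ satisfies the Cycle property. Consequently Proposition \ref{prop::cpcycle} applies and tells us that $\mathcal{Q}^-_s$ must contain either a $3_2$-cycle or a $4$-cycle. The entire content of the proof is then to rule out the $3_2$-cycle alternative.

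To exclude the $3_2$-cycle case, I would invoke Proposition \ref{prop:3c2}. That proposition asserts that whenever $\mathcal{Q}^-_s$ contains a $3_2$-cycle, its concordance factor $CF(\mathcal{Q}^-_s)$ lies in the set $I$ (the three gray segments of Figure \ref{fig:simplex1CP3}), and every point of $I$ has two of its three coordinates equal. But the $BC$ property demands that all three entries of $CF_{abcd}$ be pairwise distinct. This is a direct contradiction, so $\mathcal{Q}^-_s$ cannot contain a $3_2$-cycle. Combining this with the dichotomy from Proposition \ref{prop::cpcycle} forces $\mathcal{Q}^-_s$ to contain a $4$-cycle, which is the claim.

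The one point deserving care — and the closest thing to an obstacle — is making sure that the characterization in Proposition \ref{prop:3c2} genuinely covers the general situation, namely a quartet network that has a $3_2$-cycle \emph{together with} possibly other cycles, rather than only the pure single-$3_2$-cycle network of Figure \ref{fig::3c2t}. Here I would emphasize that Proposition \ref{prop:3c2} already handles exactly this: its proof uses Corollary \ref{cor:2isgood} to suppress $2_1$- and $2_3$-cycles and the reduction Lemmas \ref{lem::formula2:2} and \ref{lem::formula43:1} to peel off the $2_2$- and $3_1$-cycles, writing $CF(\mathcal{Q}^-_s)$ as a convex combination of points of the segment $\mathcal{L}\subset I$, hence lying in $I$. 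So no additional argument about coexisting cycles is needed; the needed reduction is inherited wholesale. (One may also note in passing that Lemma \ref{lem:numberofcycles} already forbids a $3$-cycle and a $4$-cycle from occurring together in a quartet network, so the two outcomes of Proposition \ref{prop::cpcycle} are genuinely exclusive, though this is not strictly necessary for the argument.)

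In summary, the proof is a two-line deduction: $BC\Rightarrow$ Cycle property $\Rightarrow$ ($3_2$-cycle or $4$-cycle) by Proposition \ref{prop::cpcycle}; and $3_2$-cycle $\Rightarrow$ two equal coordinates by Proposition \ref{prop:3c2}, which is incompatible with $BC$; hence a $4$-cycle must be present. I would write it as a short contradiction argument rather than grinding through any of the concordance factor formulas, since all of the analytic work has already been discharged in Propositions \ref{prop::cpcycle} and \ref{prop:3c2}.
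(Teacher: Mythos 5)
Your proposal is correct and matches the paper's proof exactly: the paper likewise notes that $BC$ implies the Cycle property, invokes Proposition \ref{prop::cpcycle} to get the $3_2$-cycle-or-$4$-cycle dichotomy, and rules out the $3_2$-cycle via Proposition \ref{prop:3c2} since that would force two equal coordinates in $CF_s$. Your added remarks on why Proposition \ref{prop:3c2} covers coexisting cycles are sound but not needed beyond what the cited results already supply.
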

\begin{proof} 	
	By Proposition \ref{prop::cpcycle},  $\mathcal{Q}^-_s$ contains  either a $3_2$-cycle or  a 4-cycle, and by 
	  Proposition \ref{prop:3c2},  $\mathcal{Q}^-_s$ cannot have a $3_2$-cycle.    $\Box$\end{proof}

A converse of Proposition \ref{prop:Bc4cycle1dir} also holds, provided we include an assumption of generic parameters.
 
\begin{proposition}\label{prop:BCgenerically}
	Let $\mathcal{N}^+$ be a metric rooted level-1 on $X$ with $|X|\geq 4$.  Let $\{a,b,c,d\}\subset X$ such  that $\mathcal{Q}^-_{abcd}$ has a $4$-cycle. Then $\{a,b,c,d\}$ satisfies the Cycle property. Moreover, for generic numerical parameters on $\mathcal{N}^+$,  $\{a,b,c,d\}$ satisfies the $BC$ property. That is, for all numerical parameters except those in a set of measure zero, the $BC$ property holds.
\end{proposition}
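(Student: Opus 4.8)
The plan is to reduce to the case of a single $4$-cycle and then read everything off the explicit formula \eqref{eq::4cyc}. First I would invoke Lemma \ref{lem:numberofcycles} and Lemma \ref{lem::no2 3_2}: since $\mathcal{Q}^-_{abcd}$ has a $4$-cycle, it can have no $3$-cycle and no $2_2$-cycle, so its only other cycles are $2_1$- and $2_3$-cycles. By Corollary \ref{cor:2isgood} these may be suppressed without altering $CF_{abcd}$, so $CF_{abcd}$ equals the concordance factor of a network whose sole cycle is a $4_1$-cycle. After relabeling so that this cycle is the one of Figure \ref{fig::4c1t}, the entries of $CF_{abcd}$ are given by \eqref{eq::4cyc}, with $x_i=e^{-t_i}\in(0,1)$ and $\gamma\in(0,1)$.

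For the Cycle property I would simply compute the pairwise differences from \eqref{eq::4cyc}:
\begin{align*}
CF_{AB|CD}-CF_{AC|BD}&=(1-\gamma)(1-x_1),\\
CF_{AD|BC}-CF_{AC|BD}&=\gamma(1-x_2).
\end{align*}
Both right-hand sides are strictly positive because $\gamma\in(0,1)$ and $x_1,x_2\in(0,1)$. Hence $CF_{AC|BD}$ is strictly the smallest of the three entries, so the two smallest values of $CF_{abcd}$ cannot be equal and the Cycle property holds unconditionally.

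For the Big Cycle property it remains to separate the other two entries, i.e.\ to show $CF_{AB|CD}\neq CF_{AD|BC}$ for generic parameters. Their difference is
\[
g=(1-\gamma)(1-x_1)-\gamma(1-x_2),
\]
and the $BC$ property fails exactly on $\{g=0\}$. I would regard $g$ as a function of the numerical parameters $(\lambda,\gamma)$ of $\mathcal{N}^+$; since the induced quartet parameters depend real-analytically on these, $g$ is real-analytic on the connected open parameter domain. To see $g\not\equiv 0$, fix every parameter except the hybrid parameter $\gamma$ of the $4$-cycle: then $g$ is the affine function $\gamma\mapsto (1-x_1)-\gamma\bigl[(1-x_1)+(1-x_2)\bigr]$, whose slope is strictly negative, so it vanishes for at most one value of $\gamma$. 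A real-analytic function on a connected open set that is not identically zero has zero set of Lebesgue measure zero; thus $\{g=0\}$ is a null set, and off it all three entries of $CF_{abcd}$ are distinct, giving the $BC$ property.

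The main obstacle is this last step: one must ensure that non-vanishing of $g$ at the level of the reduced quartet parameters $(\gamma,x_1,x_2)$ actually yields a measure-zero failure set in the parameter space of the larger network $\mathcal{N}^+$. Preimages of null sets under arbitrary smooth maps need not be null, so the argument genuinely relies on analyticity together with the fact that the $4$-cycle's hybrid parameter $\gamma$ persists as a free coordinate after the $2_1$- and $2_3$-cycles are suppressed. Verifying that this coordinate survives the reduction, and that the reduced edge lengths remain positive analytic functions of the original metric, is the technical heart of the proof.
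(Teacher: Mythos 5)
Your proposal is correct, and the reduction to the single $4_1$-cycle of Figure \ref{fig::4c1t} via Lemmas \ref{lem:numberofcycles}, \ref{lem::no2 3_2} and Corollary \ref{cor:2isgood} is exactly the paper's first step. Where you diverge is in how genericity is established. The paper factors the concordance factor map as $\Theta_{\mathcal{N}^-} \xrightarrow{\nu_s}\Theta'_{s}\xrightarrow{\eta}\chi_{s}\xrightarrow{f}\Delta_2$ and pulls back the bad variety $V=V((x-z)(y-z)(x-y),\,x+y+z-1)$ through each factor separately: $f$ is polynomial with image not contained in $V$, so $f^{-1}(V)$ lies in a proper subvariety and is null; $\eta$ is biholomorphic and $\nu_s$ is linear and surjective, so null sets pull back to null sets. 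You instead work with the single composite function $g=(1-\gamma)(1-x_1)-\gamma(1-x_2)$, observe it is real-analytic in the original parameters, show $g\not\equiv 0$ by an explicit affine-in-$\gamma$ computation, and invoke the fact that a nonzero real-analytic function on a connected open set has a null zero set. Both arguments are sound, and the obstacle you flag --- that preimages of null sets under arbitrary smooth maps need not be null, so one must know how the reduced parameters $(x_1,x_2,\gamma)$ sit inside $\Theta_{\mathcal{N}^+}$ --- is precisely what the paper's explicit factorization (linearity and surjectivity of $\nu_s$, biholomorphy of $\eta$) is there to control; your analyticity route needs only that the reduced lengths are sums of original edge lengths and that the cycle's $\gamma$ is an honest coordinate, which holds for the same reason. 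Two further points in your favor: your explicit differences $CF_{AB|CD}-CF_{AC|BD}=(1-\gamma)(1-x_1)>0$ and $CF_{AD|BC}-CF_{AC|BD}=\gamma(1-x_2)>0$ give a cleaner verification of the Cycle property than the paper's bare appeal to equations \eqref{eq::4cyc}, and they reduce the $BC$ question to the single equation $g=0$ rather than the full three-factor variety $V$. What the paper's approach buys in exchange is the explicit geometric picture of Figure \ref{fig:inversesimplex} (the image of $f$ as a shaded triangle in $\Delta_2$), which is reused elsewhere in Sections \ref{BC section} and later.
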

\begin{proof}
	
	Let $s=\{a,b,c,d\}\subset X$ be such  that $\mathcal{Q}^-_{s}$ has a $4$-cycle. Without loss of generality suppose that  $c$ is the descendant of the hybrid node and the hybrid block $\{c\}$ of $\mathcal{Q}^-_{s}$ is adjacent to the $v$-blocks containing $a$ and $b$. Since $\mathcal{N}^-$ is level-1, the only other possible cycles in $\mathcal{Q}^-_{s}$ are $2_1$ or $2_3$-cycles. By Corollary \ref{cor:2isgood}, $CF(\mathcal{Q}^-_{s})=CF(Q')$, where $Q'$ is the network obtained after suppressing all cycles other than the $4$-cycle. Note that $Q'$ is the network shown in Figure \ref{fig::4c1t}, and by equations \eqref{eq::4cyc},  $CF(Q')$  depends only on the length of the non-hybrid edges in the $4$-cycle and the $\gamma$ parameter  of the hybrid edges of $\mathcal{Q}^-_{s}$. Moreover, equations \eqref{eq::4cyc} show that $\{a,b,c,d\}$ satisfies the Cycle property.
	
	When $\mathcal{Q}^-_{s}$ is obtained from $\mathcal{N}^-$, the lengths  of the edges of  $\mathcal{Q}^-_{s}$  are the sum of edge lengths from $\mathcal{N}^-$. Let $\Theta_{\mathcal{N}^-}=(0,\infty)^m\times [0,1]^h$ be the numerical parameter space for $\mathcal{N}^-$ and let $\Theta'_{s}=(0,\infty)^2\times [0,1].$ 
	Thus we can define a map $\nu_s:\Theta_{\mathcal{N}^-}\to \Theta'_s$ such that for any metric $(\lambda, \gamma)$ of $\mathcal{N}^-$, $\nu_s((\lambda, \gamma))$ encodes the edge length of the non-hybrid edges in the $4$-cycle and the $\gamma$ parameter of the hybrid edges. In particular this map is linear and surjective. 
	
	With $\chi_{s}=(0,1)^2\times [0,1]$, let $\eta:\Theta'_{s}\to \chi_{s}$ be defined as $\eta(l_1,l_2,\gamma)=(e^{-l_1},e^{-l_2},\gamma)$, so $\eta$ is  a biholomorphic function.   Defining $f:\chi_{s}\to \Delta_2$   by 
	$$f((L_1,L_2,\gamma))=(1-\gamma)(1-2L_1/3,L_1/3,L_1/3)+\gamma(L_2/3,L_2/3,1-2L_2/3),$$
	the quartet concordance factor map can be viewed as a composition
	$$\Theta_{\mathcal{N}^-} \xrightarrow{\nu_s}\Theta'_{s}\xrightarrow{\eta}\chi_{s}\xrightarrow{f}\Delta_2.$$

	It is straightforward to see that the image of $f$ restricted to $\gamma=0$ and $\gamma=1$ is the red (skewed) and blue (vertical) segments shown on the right of Figure \ref{fig:inversesimplex}. 

	Let  $V=V((x-z)(y-z)(x-y),x+y+z-1)$, that is, let $V$ be the algebraic variety composed of the points on which $(x-z)(y-z)(x-y)$ and $x+y+z-1$ are zero, as depicted  on the right of Figure \ref{fig:simplex1CP3}. Observe that $V$ is the points in $\Delta_2$ that, if interpreted as concordance factors,  would \emph{not} satisfy the $BC$ property. 

	Since $f$ is a polynomial map whose image is not contained in $V$, the pre-image of $V$ under $f$ is contained in a proper sub-variety of $\chi_s$, and therefore $f^{-1}(V)$ has measure zero in $\chi_{s}$. Since $\eta$ is biholomorphic, then $\eta^{-1}(f^{-1}(V))$ has measure zero. Since $\nu$ is linear surjective, then  $\nu^{-1}(\eta^{-1}(f^{-1}(V)))$ has measure zero. 
	Thus generic points in $\Theta_{\mathcal{N}^-}$ are mapped to concordance factors satisfying the $BC$ property.  $\Box$\end{proof}

 To better understand the geometry of the map $f$ in this proof, let   $s=\{a,b,c,d\}$ be a subset of  four distinct taxa satisfying the $BC$ property.  Figure  \ref{fig:inversesimplex} depicts the subset of $\chi_{s}$ that is mapped by $f$ to those segments of the shaded triangle inside $\Delta_2$. The interior of $\chi_{s}$ is mapped to the interior of the shaded triangle. 


\begin{figure}\begin{center}
	 
	\includegraphics[scale=.3]{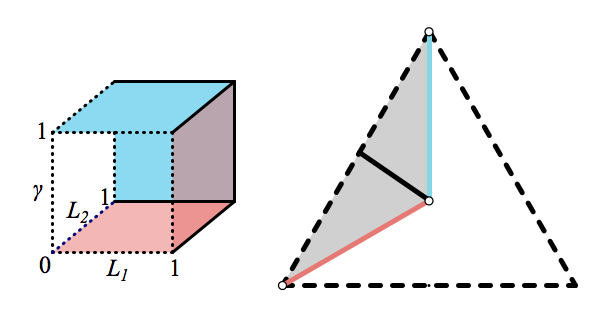}
	\caption{The function $f$ maps the cube $\chi_s$ (left) to $\Delta_2$ (right).  The blue facets (rear and top) of the cube are mapped by $f$ to the blue (vertical) segment  and the red facets (bottom and right) to the red  (skewed) segment.  The full cube is mapped onto the shaded triangle, giving all concordance factors with a 4-cycle as in Figure \ref{fig::4c1t}. The three line segments, two on the boundary of and one within the shaded triangle, are comprised of points not satisfying the $BC$ property.}\label{fig:inversesimplex}
 
\end{center}\end{figure}



The following Theorem follows immediately from Proposition  \ref{prop:BCgenerically} and Proposition \ref{prop:Bc4cycle1dir}.

\begin{theorem}\label{thm::twocor}
	Let $\mathcal{N}^+$ be a metric rooted level-1   network    on  $X$ with $|X|\geq 4$ and  $\{a,b,c,d\}\subset X$. For generic numerical parameters, $\{a,b,c,d\}$ satisfies the $BC$ property if and only if  $\mathcal{Q}^-_{abcd}$ has a $4$-cycle. 
\end{theorem}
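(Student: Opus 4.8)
The plan is to obtain the biconditional directly by combining the two preceding propositions, each supplying one implication; the sentence preceding the statement already signals that the theorem is a formal consequence of Proposition~\ref{prop:Bc4cycle1dir} and Proposition~\ref{prop:BCgenerically}, so the work is to verify that the two quantifier conventions align.

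For the forward implication, I would suppose $\{a,b,c,d\}$ satisfies the $BC$ property, so that all three entries of $CF_{abcd}$ are distinct. This is precisely the hypothesis of Proposition~\ref{prop:Bc4cycle1dir}, whose conclusion is that $\mathcal{Q}^-_{abcd}$ contains a $4$-cycle, and I would simply cite it. For orientation I would recall the mechanism behind that proposition: distinctness of all three entries forces the two smallest to be unequal, i.e.\ the Cycle property holds, so by Proposition~\ref{prop::cpcycle} the quartet network has a $3_2$-cycle or a $4$-cycle; the $3_2$-cycle case is ruled out because Proposition~\ref{prop:3c2} constrains $CF(\mathcal{Q}^-_{abcd})$ to the set $I$ in which two of the three coordinates coincide, contradicting $BC$. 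Thus a $4$-cycle must be present, and I would emphasize that this direction requires no genericity hypothesis at all.

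For the reverse implication, I would suppose $\mathcal{Q}^-_{abcd}$ has a $4$-cycle, which is exactly the hypothesis of Proposition~\ref{prop:BCgenerically}; that proposition asserts that for all numerical parameters outside a set of measure zero the $BC$ property holds, and I would cite it to conclude. I would not reprove its content, but would recall its shape: after suppressing the harmless $2_1$- and $2_3$-cycles via Corollary~\ref{cor:2isgood}, the concordance factor is governed by the explicit formulas \eqref{eq::4cyc} in two edge lengths and $\gamma$, and the locus where $BC$ fails is the variety $V\bigl((x-z)(y-z)(x-y),\,x+y+z-1\bigr)$; since the polynomial concordance-factor map does not have image contained in $V$, its preimage is a proper subvariety and hence of measure zero, a property preserved when pulled back through the biholomorphic map $\eta$ and the linear surjection $\nu_s$.

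The two implications together yield the theorem. Because both supporting propositions are already established, there is no genuine obstacle remaining at this stage; the single point demanding care is the asymmetry of the genericity quantifier, since the forward direction holds for \emph{every} choice of parameters while the reverse holds only \emph{generically}. I would therefore state the equivalence under the standing ``for generic numerical parameters'' hypothesis, noting that this is consistent because the generic set on which the reverse implication holds is a subset of full measure, on which the forward implication also continues to hold.
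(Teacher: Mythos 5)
Your proposal is correct and matches the paper's own argument, which states that the theorem ``follows immediately from Proposition~\ref{prop:BCgenerically} and Proposition~\ref{prop:Bc4cycle1dir}'' --- exactly the two citations you give for the reverse and forward implications respectively. Your additional remark on the asymmetry of the genericity quantifier (unconditional forward direction, generic reverse direction) is a correct and worthwhile clarification that the paper leaves implicit.
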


Theorem  \ref{thm::twocor} and Proposition  \ref{prop::cpcycle}, yield the following.

\begin{corollary}\label{cor::Cpgen}
	Let $\mathcal{N}^-$ be a metric unrooted level-1 network on  $X$ and let $s=\{a,b,c,d\}$ be a set of distinct taxa in $X$. Then if $s$ satisfies the Cycle property but not the $BC$ property for generic parameters,  then $\mathcal{Q}^-_{s}$ contains a $3_2$-cycle. 
\end{corollary}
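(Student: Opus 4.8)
The plan is to deduce the statement directly from Proposition \ref{prop::cpcycle} and Theorem \ref{thm::twocor}, treating the cycle structure of $\mathcal{Q}^-_s$ as a fixed topological datum while regarding the Cycle and $BC$ properties as parameter-dependent conditions on the concordance factors. The crucial conceptual point, which I would make explicit, is that the topological type of $\mathcal{Q}^-_s$ does not change as the metric parameters on $\mathcal{N}^-$ vary, so it is legitimate to mix a hypothesis stated ``for generic parameters'' with a topological conclusion.

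First I would invoke Proposition \ref{prop::cpcycle}. Since $s$ satisfies the Cycle property for generic parameters, it satisfies it for at least one choice of parameters, and hence $\mathcal{Q}^-_s$ must contain either a $3_2$-cycle or a $4$-cycle. This narrows the possible topology to exactly two cases, and it remains only to eliminate the $4$-cycle case.

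Next I would rule out the $4$-cycle using Theorem \ref{thm::twocor}. Suppose toward a contradiction that $\mathcal{Q}^-_s$ contains a $4$-cycle. Then by Theorem \ref{thm::twocor} the $BC$ property holds for generic parameters; that is, the locus in $\Theta_{\mathcal{N}^-}$ on which $BC$ fails has measure zero. But the hypothesis asserts that $BC$ fails generically, i.e. the locus on which $BC$ holds has measure zero. These two statements cannot both be true on a positive-measure parameter space, since the loci where $BC$ holds and where it fails are disjoint and together exhaust $\Theta_{\mathcal{N}^-}$. This contradiction forces $\mathcal{Q}^-_s$ to have no $4$-cycle, and combined with the previous step it must therefore contain a $3_2$-cycle.

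The only genuine subtlety—and the step I would be most careful to phrase correctly—is the bookkeeping of the ``generic'' quantifier, namely reconciling the two measure-zero statements coming from the hypothesis and from Theorem \ref{thm::twocor}. Once one observes that the topological type of $\mathcal{Q}^-_s$ is fixed independently of the metric, exactly one of the two mutually exclusive co-measure-zero conclusions can hold, and the result follows as an immediate logical deduction with no computation required.
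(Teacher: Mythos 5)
Your proposal is correct and follows exactly the route the paper intends: Proposition \ref{prop::cpcycle} narrows the topology to a $3_2$-cycle or a $4$-cycle, and Theorem \ref{thm::twocor} eliminates the $4$-cycle since it would force the $BC$ property generically, contradicting the hypothesis. The paper states the corollary as an immediate consequence of these two results without writing out the argument, so your version simply makes explicit the same deduction, including the careful handling of the two incompatible co-measure-zero statements.
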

 
The converse of Corollary \ref{cor::Cpgen} does not hold, as pointed out by Proposition \ref{prop:3c2}.\\
 
If a set of 4 taxa satisfy the $BC$ property, we can deduce some finer  information about the 4-cycle on the unrooted quartet network and a larger network, as proved in the following.
 
\begin{proposition}\label{prop:isbc}Let $\mathcal{N}^-$ be a metric unrooted level-1 network on   $X$ and let $\{a,b,c,d\}\subseteq X$   satisfy the $BC$ property, so $\mathcal{Q}^-_{abcd}$ contains a $4$-cycle $C_v$. Then  ${q^{BC}_{abcd}}=AC|BD$ if and only the $v$-blocks of $\mathcal{Q}^-_{abcd}$ containing $a$ and $c$ are not adjacent.
\end{proposition}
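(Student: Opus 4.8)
The plan is to reduce to the explicit $4$-cycle quartet network of Figure \ref{fig::4c1t}, read its concordance factors off \eqref{eq::4cyc}, and show that the smallest entry — hence $q^{BC}_{abcd}$ — is always the quartet split that pairs the taxa sitting at opposite (that is, non-adjacent) nodes of the cycle. First I would reduce to a single $4$-cycle. Since $\{a,b,c,d\}$ satisfies the $BC$ property, Proposition \ref{prop:Bc4cycle1dir} gives a $4$-cycle $C_v$ in $\mathcal{Q}^-_{abcd}$, and because $\mathcal{N}^-$ is level-$1$ the only other cycles possible in $\mathcal{Q}^-_{abcd}$ are $2_1$- or $2_3$-cycles (Lemmas \ref{lem:numberofcycles} and \ref{lem::no2 3_2}). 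These do not lie on $C_v$, so by Corollary \ref{cor:2isgood} they may be suppressed without changing $CF_{abcd}$ and without altering $C_v$ or the $v$-blocks attached to its four nodes. Thus I may assume $\mathcal{Q}^-_{abcd}$ is exactly the network of Figure \ref{fig::4c1t}. Its four cycle nodes carry the four singleton $v$-blocks; in a $4$-cycle each node has two neighbors and a single non-neighbor, so \emph{non-adjacent} means \emph{opposite}, and the four taxa split into two opposite pairs: the hybrid taxon together with the taxon at the node facing it, and the two neighboring taxa.

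Next I would pin down $q^{BC}_{abcd}$. Reading off Figure \ref{fig::4c1t}, the two opposite pairs are $\{a,c\}$ and $\{b,d\}$, so the opposite-pairs split is $AC|BD$. From \eqref{eq::4cyc} one computes
\begin{align*}
CF_{AB|CD}-CF_{AC|BD} &= (1-\gamma)(1-x_1),\\
CF_{AD|BC}-CF_{AC|BD} &= \gamma(1-x_2),
\end{align*}
both nonnegative since $x_i=e^{-t_i}\le 1$, and both strictly positive under the $BC$ hypothesis. Hence $CF_{AC|BD}$ is the strict minimum, so $q^{BC}_{abcd}=AC|BD$, the split pairing the two opposite (non-adjacent) pairs. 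The remaining two splits $AB|CD$ and $AD|BC$ are precisely the two parental quartet trees produced by Lemma \ref{lem::formula43:1} together with \eqref{CFtree}, each pairing adjacent blocks.

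Finally I would make the conclusion label-independent and deduce the equivalence. Any $4$-cycle quartet network arises from Figure \ref{fig::4c1t} by relabeling $\{a,b,c,d\}$, and such a relabeling permutes the three entries of $CF_{abcd}$ and the three quartet labels in the same way; therefore the minimal entry always corresponds to the split pairing the two opposite-node pairs, regardless of how $\{a,b,c,d\}$ are assigned to the nodes. Thus $q^{BC}_{abcd}$ is always this opposite-pairs split. Consequently $q^{BC}_{abcd}=AC|BD$ exactly when $\{a,c\}$ and $\{b,d\}$ are the two opposite pairs, i.e. when $a$ and $c$ occupy opposite nodes of $C_v$, i.e. when the $v$-blocks containing $a$ and $c$ are not adjacent, which is the claim.

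The main obstacle I anticipate is the identification of $q^{BC}_{abcd}$ with the opposite-pairs split. This requires correctly reading the cycle geometry off Figure \ref{fig::4c1t} (which taxon is the hybrid descendant and which faces it), verifying through the two displayed differences from \eqref{eq::4cyc} that this split is \emph{strictly} minimal, and then arguing that the identification depends only on the combinatorics of the $4$-cycle and not on the particular assignment of $\{a,b,c,d\}$ to its nodes. Once the reduction to the pure $4$-cycle is in place and the relabeling symmetry is noted, the remaining steps are bookkeeping.
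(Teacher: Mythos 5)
Your proposal is correct and follows essentially the same route as the paper: suppress the $2_1$- and $2_3$-cycles via Corollary \ref{cor:2isgood} to reduce to the pure $4$-cycle network of Figure \ref{fig::4c1t}, then read the ordering off equations \eqref{eq::4cyc}. You simply make explicit the differences $CF_{AB|CD}-CF_{AC|BD}=(1-\gamma)(1-x_1)$ and $CF_{AD|BC}-CF_{AC|BD}=\gamma(1-x_2)$ and the relabeling symmetry, which the paper leaves implicit.
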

\begin{proof}
	Let $Q=\mathcal{Q}^-_{abcd}$. Since $\mathcal{N}^-$ is level-1 the only possible cycles in $Q$,  other than $C_v$, are $2_1$ and $2_3$-cycles. Let $Q'$ be the network obtained after suppressing all $2_1$ and $2_3$-cycles, so $Q'$ has only a four cycle. By Corollary \ref{cor:2isgood},  $CF(Q)=CF(Q')$.  Thus by equations \eqref{eq::4cyc}, we obtain the desired result.  $\Box$\end{proof}

\begin{lemma}\label{cfs}Let $\mathcal{N}^-$ be a metric unrooted level-1 network on $X$ with generic numerical parameters. There exists  $\{a,b,c,d\}\subseteq X$ satisfying the $BC$ property if and only if $\mathcal{N}^-$ contains a cycle $C_v$ of size $k\geq 4$ with one of these taxa is in the hybrid block, and the others in distinct $v$-blocks on $\mathcal{N}^-$.
\end{lemma}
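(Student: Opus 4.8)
The plan is to reduce the statement to a purely topological equivalence and then feed it into the results already proved about the $BC$ property. By Proposition \ref{prop:Bc4cycle1dir}, any $\{a,b,c,d\}$ satisfying $BC$ forces a $4$-cycle in $\mathcal{Q}^-_{abcd}$, with no genericity needed; conversely, by Proposition \ref{prop:BCgenerically}, whenever $\mathcal{Q}^-_{abcd}$ has a $4$-cycle the set $\{a,b,c,d\}$ satisfies $BC$ for all parameters outside a set of measure zero. As $X$ has only finitely many $4$-element subsets, the union of these exceptional sets is still measure zero, so for generic parameters on $\mathcal{N}^-$ the set $\{a,b,c,d\}$ satisfies $BC$ if and only if $\mathcal{Q}^-_{abcd}$ has a $4$-cycle. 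It therefore suffices to prove the topological equivalence: $\mathcal{Q}^-_{abcd}$ contains a $4$-cycle if and only if $\mathcal{N}^-$ has a cycle $C_v$ of size $k\geq 4$ with one of $a,b,c,d$ in the hybrid block $B_v$ and the other three in distinct non-hybrid $v$-blocks. I identify $\mathcal{Q}^-_{abcd}$ with $(\mathcal{N}^-)_{abcd}$ using Proposition \ref{prop::ancestralinduced} throughout.

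For the forward implication, suppose $(\mathcal{N}^-)_{abcd}$ has a $4$-cycle; by Lemma \ref{lem:numberofcycles} it is a $4_1$-cycle, so exactly one taxon descends from its hybrid node. Since the induced network is built by retaining edges on simple treks and then suppressing degree-two nodes, and this operation preserves the level-1 structure while only contracting or destroying cycles, the $4$-cycle is the image of a single cycle $C_v$ of $\mathcal{N}^-$; its unique hybrid node $v$ (Lemma \ref{lem:assos}) maps to the hybrid node of the $4$-cycle. Contraction cannot increase cycle size, so $C_v$ has size $k\geq 4$. The four attachment nodes of the $4$-cycle are four distinct nodes of $C_v$, hence lie in four distinct $v$-blocks, each holding exactly one of the taxa, and the taxon descending from the hybrid node lies in $B_v$. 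This is exactly the asserted configuration.

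The reverse implication is where the real work lies. Assume $\mathcal{N}^-$ has a cycle $C_v$ of size $k\geq 4$ with $c$ in $B_v$ and $a,b,d$ in three distinct non-hybrid $v$-blocks. Orient $C_v$ using the hybrid edges, so that $v$ sits at the bottom with its two hybrid edges entering from nodes $p$ and $q$, and a well-defined top node $t$ from which the two tree paths descend to $p$ and to $q$. I must show that each of the four arcs of $C_v$ between consecutive attachment nodes survives in $(\mathcal{N}^-)_{abcd}$, so that after suppressing degree-two nodes these arcs close up into a $4$-cycle, necessarily a $4_1$-cycle with descendant $c$. The step I expect to be the main obstacle is that, a priori, a simple trek between $c$ and another taxon could run down only the shorter side of $C_v$, leaving the opposite side and the top arc uncovered and thereby opening the cycle into a tree. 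The resolution exploits that $v$ is hybrid: between $c$ and each of $a,b,d$ there are two simple treks, one with top above $p$ descending through the edge $p\to v$ and one with top above $q$ descending through $q\to v$. Taking the top to be $t$ yields a simple trek whose two legs run down opposite sides of $C_v$ and hence cover the whole top arc, while the remaining three arcs are covered by treks among $a,b,d$ and by the short-side treks to $c$. I would check this arc by arc, taking the trek top to be the higher of two consecutive same-side attachment nodes and the node $t$ for the pair of taxa flanking the top; the only edges that drop out lie in unoccupied blocks hanging off $C_v$ or inside the degree-two chains later suppressed.

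Combining the two implications with the reduction of the first paragraph gives the lemma: for the fixed generic parameters, a $4$-taxon subset satisfying $BC$ exists precisely when a $4$-taxon subset whose induced quartet network has a $4$-cycle exists, and the topological equivalence identifies the latter with the existence of a cycle $C_v$ of size $k\geq 4$ in $\mathcal{N}^-$ carrying one taxon in $B_v$ and three more in distinct $v$-blocks. The essential idea is the dual pair of treks through the two hybrid edges, which keeps both sides of every large cycle intact under induction.
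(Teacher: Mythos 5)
Your proposal is correct and takes essentially the same route as the paper: the paper's proof simply invokes Theorem \ref{thm::twocor} (your first paragraph reassembles it from Propositions \ref{prop:Bc4cycle1dir} and \ref{prop:BCgenerically}, with the same finite-union-of-null-sets remark) and then asserts the topological correspondence between a $4$-cycle in $\mathcal{Q}^-_{abcd}$ and a cycle $C_v$ of size $k\geq 4$ in $\mathcal{N}^-$ in a single line. The only difference is that you supply the verification, which the paper leaves implicit, that the cycle actually survives the passage to the induced quartet network; your key observation—that since $v$ is hybrid, both arcs of $C_v$ from any attachment node down to $v$ are semidirected paths, so each arc lies on a simple trek from $c$ to another chosen taxon and hence every edge of $C_v$ is retained—is exactly the right justification (and in fact a single non-hybrid attachment node paired with $c$ already covers the whole cycle, so your arc-by-arc bookkeeping and the appeal to a distinguished top node $t$ are not needed).
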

\begin{proof}
	Suppose that $\mathcal{N}^-$ has a cycle of  size $k$ for some $k\geq 4$ with hybrid node $v$. Choose four taxa $\{a,b,c,d\}$, such that $a$ is in the hybrid block and $a,b,c$ and $d$ are in distinct $v$-blocks.  This set of taxa induces a unrooted quartet network with a 4-cycle, and so by Theorem \ref{thm::twocor} this set of taxa satisfies the $BC$ property for generic parameters.   
	Suppose conversely, that there exists $\{a,b,c,d\}$ satisfying the $BC$ property. By Theorem \ref{thm::twocor}, $\mathcal{Q}^-_{abcd}$ has a 4-cycle, so  $\mathcal{N}^-$ has a cycle of at least size four and one of these taxa is a descendant of the hybrid node. Since the other taxa are in distinct $v$-blocks of $\mathcal{Q}^-_{abcd}$, they must be in distinct $v$-blocks of $\mathcal{N}^-$.  $\Box$\end{proof}

For a  level-1 metric unrooted network $\mathcal{N}^-$, let $S$ be the collection of sets of 4 distinct taxa satisfying the $BC$ property and $V_H$ be the set of hybrid nodes. We observe that for any $s\in S$, there is a natural map $\psi: S\mapsto V_H$, where $\psi(s)=v$ if $v$ is the hybrid node associated to the cycle of size 4 in $\mathcal{Q}^-_s$. In this case we say that  $s$  \textit{determines}   the hybrid node $v$. 

\begin{lemma}\label{3tax}
	Let $\mathcal{N}^-$ be a metric unrooted level-1 network and let $\{a,b,c,d\}$ and $\{a,b,c,e\}$  be  subsets of the taxa satisfying the $BC$ property.  The set  $\{a,b,c,d\}$ determines $v$ if and only if  $\{a,b,c,e\}$  determines $v$.
\end{lemma}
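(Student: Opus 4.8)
The plan is to reduce the biconditional to the single statement that the two $4$-element sets determine the \emph{same} hybrid node. Since $\psi$ is a function on $BC$-sets, ``$\{a,b,c,d\}$ determines $v$'' is equivalent to ``$v=\psi(\{a,b,c,d\})$'', so the stated equivalence holds for every node $v$ precisely when $\psi(\{a,b,c,d\})=\psi(\{a,b,c,e\})$. Accordingly I would set $v=\psi(\{a,b,c,d\})$ and $v'=\psi(\{a,b,c,e\})$ and prove $v=v'$ by contradiction, using only the three shared taxa $a,b,c$.

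First I would unpack what ``determines'' means at the level of $\mathcal{N}^-$. By the definition of $\psi$ together with Lemma \ref{cfs} (and Proposition \ref{prop::ancestralinduced}, which identifies $\mathcal{Q}^-_s$ with $(\mathcal{N}^-)_s$), the hypothesis that $\{a,b,c,d\}$ determines $v$ means that $\mathcal{N}^-$ contains a cycle $C_v$ of size at least $4$ with hybrid node $v$ such that $a,b,c,d$ lie in four distinct $v$-blocks; in particular $a,b,c$ lie in three distinct $v$-blocks. The analogous statement for $\{a,b,c,e\}$ gives a cycle $C_{v'}$ in which $a,b,c$ lie in three distinct $v'$-blocks.

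The heart of the proof is a connectivity argument. Assume $v\neq v'$. By Lemma \ref{lem:assos} each node of $\mathcal{N}^-$ lies in at most one cycle, so $C_v$ and $C_{v'}$ are node-disjoint. The $v$-partition is obtained by deleting the edges of $C_v$, which turns each node $u\in C_v$ into a cut vertex attaching its block $B_u$; since $C_{v'}$ is connected and uses none of these deleted edges, all of $C_{v'}$ lies in the region forming a single $v$-block $B_u$. Because $a,b,c$ occupy three distinct $v$-blocks, at most one of them lies in $B_u$, so at least two of them, say $b$ and $c$, lie outside $B_u$. Now the $v'$-partition is obtained by deleting the edges of $C_{v'}$, all of which are interior to $B_u$; this deletion leaves the connectivity of everything outside $B_u$ untouched, so every taxon outside $B_u$ remains connected to the node of $C_{v'}$ nearest $u$ and hence all such taxa fall into a single $v'$-block. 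In particular $b$ and $c$ share a $v'$-block, contradicting that $a,b,c$ occupy three distinct $v'$-blocks. Therefore $v=v'$, which is the desired conclusion.

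I expect the main obstacle to be making the two ``single block'' claims fully airtight against the definitions: that a cycle node-disjoint from $C_v$ is trapped inside one $v$-block, and that deleting the edges of $C_{v'}$ (all interior to that block) forces every taxon outside the block into a common $v'$-block. Both rest on the level-1 structure, namely node-disjointness of distinct cycles and the fact that removing a cycle's edges makes each of its nodes a cut vertex, so I would record these consequences of Lemma \ref{lem:assos} explicitly before running the argument rather than leave them implicit.
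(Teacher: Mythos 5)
Your proposal is correct and follows essentially the same route as the paper: both assume the two sets determine distinct hybrid nodes, observe that by the level-1 property the second cycle lies entirely within a single $v$-block (connected component of $\mathcal{N}^-\smallsetminus E(C_v)$), conclude that at least two of $a,b,c$ lie outside that block and hence in a common block of the second cycle, and derive a contradiction with Lemma \ref{cfs}. The only cosmetic difference is that you invoke node-disjointness of the cycles where the paper uses edge-disjointness; both follow from Lemma \ref{lem:assos} and either suffices.
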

\begin{proof}
Let  $\{a,b,c,d\}$ determine $v$,  $\{a,b,c,e\}$ determine $u$, and suppose that $u\neq v$. Let $C_v$  and $C_u$ the cycles  in  $\mathcal{N}^-$ containing $v$ and $u$ respectively, so  $C_u$ and $C_v$ do not share edges. Since  $\{a,b,c,d\}$ satisfies the $BC$ property, by Lemma \ref{cfs}, $a$, $b$, $c$, and $d$ belong to different $v$-blocks, so that  in $\mathcal{N}^-\smallsetminus E(C_v)$ the taxa $a$, $b$ and $c$ are in different connected components.  Since $\mathcal{N}^-$ is level-1, $C_u$ is in one of the connected components of  $\mathcal{N}^-\smallsetminus E(C_v)$, say $\mathcal K$.  In particular note that all the taxa not in  $\mathcal K$ are in the same $u$-block. But at least two of $a,b$ and $c$ are not in  $\mathcal K$, so at least two of $a$, $b$  and $c$ are in the same $u$-block. This contradicts Lemma \ref{cfs}, so $u=v$.  $\Box$\end{proof}
 
Interestingly,  under the NMSC the ordering of quartet concordance factors is insufficient to identify the hybrid node of cycles of size $4$. For example, the networks shown in Figure \ref{fig:4cyc} all have the same ordering of their concordance factors despite different hybrid nodes. The concordance factors for all those networks have the same values:
\begin{figure}\begin{center}
	\includegraphics[scale=.28]{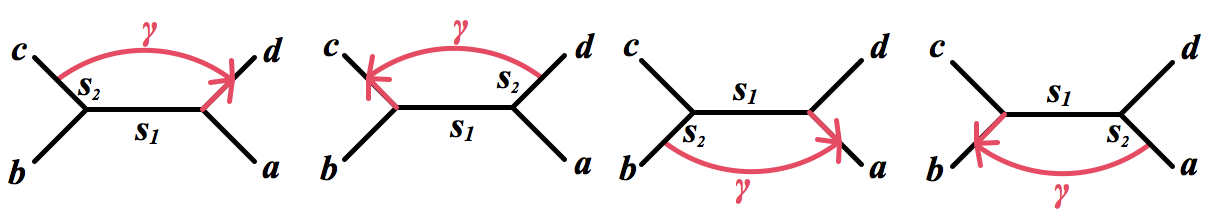}
	\caption{Four unrooted metric level-1  quartet networks with the same concordance factors.}\label{fig:4cyc}
\end{center}\end{figure}
 
\begin{align*}
CF_{ab|cd}&= (1-\gamma)\left(1-\frac{2}{3}e^{-s_1}\right)+\gamma\left(\frac{1}{3}e^{-s_2}\right)  ,\\
CF_{ac|bd}&= \left(1-\gamma\right)\left(\frac{1}{3}e^{-s_1}\right)+\gamma\left(\frac{1}{3}e^{-s_2}\right)  ,\\
CF_{ad|bc}&= \left(1-\gamma\right)\left(\frac{1}{3}e^{-s_1}\right)+\gamma\left(1-\frac{2}{3}e^{-s_2}\right)   .\\
\end{align*}
 
Figure \ref{fig:insidesimp} shows the 4-cycle network topologies drawn in the regions of $\Delta_2$ which their concordance factors fill.   In each case it does not matter which of the cycle nodes is the hybrid node; all those unrooted quartet networks give concordance factors that fill the that region.\\

\begin{figure}\begin{center}
	\includegraphics[scale=.3]{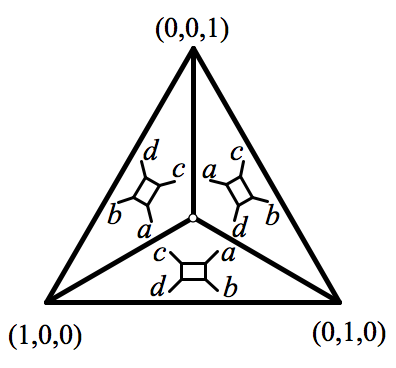}
	\caption{Each section of the simplex is depicted with an unrooted quartet network topology whose image under the concordance factor map fills that  region, independent of the placement of the hybrid node.}\label{fig:insidesimp}
\end{center}\end{figure}

\section{Identifying cycles in networks}\label{sec final}

Having  shown that the $BC$ property can detect the existence of 4-cycles in networks, for generic parameters, we are poised to prove  our main result. Our arguments now are mainly combinatorial.\\

Given a network $\mathcal{N}^+$ on $X$, let $S$ denote the set of 4-taxon subsets of $X$  satisfying the $BC$ property. 


\begin{lemma}\label{netblo}Let $\mathcal{N}^+$ be a metric rooted level-1 network on $X$. Then under the NMSC model with generic parameters the $4$-network blocks of  $\mathcal{N}^+$ can be determined from the set $S$.
\end{lemma}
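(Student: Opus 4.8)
The plan is to reduce the statement to a clean combinatorial criterion on $S$ and then show that criterion recovers the partition. Working in the induced unrooted network $\mathcal{N}^-$ (whose cycles of size at least $4$ induce the same $4$-network partition of $X$ as those of $\mathcal{N}^+$), recall that the $4$-network blocks are the blocks obtained by deleting every cycle of size at least $4$, i.e.\ the network partition induced by the collection $\mathcal{D}$ of all such cycles. I would prove the following equivalence for distinct $a,b\in X$: \emph{$a$ and $b$ lie in different $4$-network blocks if and only if some $s\in S$ contains both $a$ and $b$}. Granting this, the $4$-network blocks are exactly the classes of the relation ``$a\equiv b$ iff no member of $S$ contains the pair $\{a,b\}$'', which is manifestly computable from $S$, so the blocks are determined.

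For the easy direction, suppose $s=\{a,b,c,d\}\in S$. By Theorem \ref{thm::twocor}, for generic parameters $\mathcal{Q}^-_{abcd}$ has a $4$-cycle, and Lemma \ref{cfs} then places $a,b,c,d$ in four distinct $v$-blocks of a cycle $C_v$ of $\mathcal{N}^-$ of size at least $4$, with $C_v\in\mathcal{D}$. In particular $a$ and $b$ lie in different $v$-blocks, so by the remark preceding Corollary \ref{cor:partiofparti} (membership in a common network block forces a common $v$-block for every cycle of $\mathcal{D}$) they lie in different $4$-network blocks.

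The substantive direction is the converse, where the main work lies in constructing a witnessing $BC$ quartet. Suppose $a$ and $b$ lie in different $4$-network blocks. By Lemma \ref{2taxnetblo} applied with $\mathcal{D}$ the set of all cycles of size at least $4$, there is a hybrid node $v$ of some $C_v\in\mathcal{D}$ with $a$ and $b$ in different $v$-blocks. Since $\mathcal{N}^-$ is binary, each of the $k\geq 4$ nodes of $C_v$ carries a nonempty $v$-block, one of which is the hybrid block. I would then extend $\{a,b\}$ to a quartet $\{a,b,c,d\}$ whose members lie in four distinct $v$-blocks, one of them the hybrid block: if one of $a,b$ already lies in the hybrid block, choose $c,d$ in two further distinct $v$-blocks; otherwise choose $c$ in the hybrid block and $d$ in a fourth block. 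This is possible precisely because $C_v$ has at least four nonempty blocks. By Lemma \ref{cfs} such a quartet satisfies the $BC$ property for generic parameters, hence $\{a,b,c,d\}\in S$, establishing the equivalence.

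The principal obstacle is this last construction together with the bookkeeping that lets me pass freely between $\mathcal{N}^+$ and $\mathcal{N}^-$. I must be sure a cycle of size at least $4$ always supplies at least four distinct nonempty $v$-blocks, including the hybrid block, so that the completion to a $BC$-quartet never fails; the binary structure of $\mathcal{N}^-$ (every cycle node has a pendant subtree terminating in a leaf) is what guarantees this. I must also check that measuring cycle size in $\mathcal{N}^-$ is compatible with the $4$-network partition of $\mathcal{N}^+$, the delicate case being a cycle passing through the suppressed $\mathrm{MRCA}(X)$, which drops in size by one and must not be miscounted. Once these points are settled, the equivalence and hence the reconstruction of the $4$-network blocks from $S$ follow.
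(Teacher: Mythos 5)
Your proposal is correct and follows essentially the same route as the paper: the paper's set $Y_a=\bigcup_{\{s\in S\,\mid\,a\in s\}} s\smallsetminus\{a\}$ is exactly the set of taxa paired with $a$ in some member of $S$, and the paper proves it equals the complement of $a$'s $4$-network block using the same two ingredients you use (Lemma \ref{cfs} for one direction, Lemma \ref{2taxnetblo} plus completion to a $BC$-quartet for the other). Your extra care about why the completion to four nonempty $v$-blocks always succeeds is a point the paper passes over quickly, but the argument is the same.
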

\begin{proof}
	If $|X|<3$ there is nothing to prove. The case $|X|=4$ follows from Proposition \ref{prop:Bc4cycle1dir}, so we assume $|X|\geq 5$. 
	By Lemma \ref{cfs}, for any $\{a,b,c,d\}\in S$  each taxon $a$, $b$, $c$, $d$ must belong to a different $4$-network block. Let 
	$$Y_a=\bigcup_{   \scaleto{\{s\in S \mid a\in s\}}{9pt}} s\smallsetminus\{a\}$$
	Then $Y_a$ is the complement of the $4$-network block containing $a$. To see this,  note that for any taxon $b$  that does not belong to the  $4$-network block of $a$, by Lemma \ref{2taxnetblo}, there exists a cycle $C_v$ of size at least   $4$ such that $a$ and $b$ are in different $v$-blocks. Now choose any two different taxa $c$ and $d$,  such that all taxa $a$, $b$, $c$, $d$ are in different $v$-blocks  and one of $a$, $b$, $c$ or $d$ is in the $v$-hybrid block. Then $\{a,b,c,d\}\in S$, and thus $b\in Y_a$. 
	
	It follows that $X\smallsetminus Y_x$ is the $4$-network block containing taxon $x$. Since $x$ was arbitrary, all 4-network blocks can be determined.  $\Box$\end{proof}

\begin{lemma}\label{blov}Let $\mathcal{N}^+$ be a metric rooted level-1 network on $X$ with cycle $C_v$ of size $k_v\geq 4.$ Then for generic parameter choices, the  $v$-blocks  and the size $k_v$ can be identified from the set $S$. If $k_v\ge 5$  the $v$-hybrid block can also be identified. 
\end{lemma}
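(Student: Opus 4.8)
The plan is to extract everything from the subcollection of $BC$-sets attached to the single cycle $C_v$. I will argue throughout in the induced unrooted network $\mathcal{N}^-$: by Corollary \ref{cor::quartetisgood} the set $S$ is a function of $\mathcal{N}^-$ alone, and removing $C_v$ induces the same (nonempty) partition of $X$ in $\mathcal{N}^+$ and in $\mathcal{N}^-$, so the $v$-blocks are unambiguous and $k_v$ equals the number of them (each cycle node carries a nonempty block, since a degree-two cycle node would have been suppressed). Write $H$ for the $v$-hybrid block and let $S_v = \psi^{-1}(v) \subseteq S$ be the $4$-sets determining $v$. By Lemma \ref{cfs} — and the forward direction of its proof, which shows \emph{any} choice of representatives works — $S_v$ is exactly the set of $\{a,b,c,d\}$ with one taxon in $H$ and the other three in three distinct non-hybrid $v$-blocks; in particular every member of $S_v$ meets $H$ in a single taxon. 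The first task is therefore to recover $S_v$ from $S$.

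To isolate $S_v$ I would put an equivalence relation on $S$ generated by ``$s$ and $s'$ differ in exactly one taxon.'' Lemma \ref{3tax} says two $BC$-sets differing in one taxon determine the same hybrid node, so each class is contained in a single fiber $\psi^{-1}(u)$; thus classes can only refine the fibers, never merge them. It remains to prove the converse, that each fiber $S_v$ is a \emph{single} class, i.e. is connected under single-taxon swaps. Fixing $h_0 \in H$, any $s \in S_v$ connects in one step to the set obtained by replacing its unique $H$-taxon by $h_0$, which again lies in $S_v$; and among sets $\{h_0,y_1,y_2,y_3\}$ with the $y_i$ in distinct non-hybrid blocks one passes between any two by swapping a single $y_i$ at a time, first matching which blocks are used and then the representatives, which is possible because there are $k_v-1\ge 3$ non-hybrid blocks. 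Hence the equivalence classes of $S$ are exactly the fibers of $\psi$, and $S_v$ is the class attached to $C_v$. This connectivity verification is the main obstacle, since without it a single cycle's $BC$-sets could a priori fracture into several classes and be miscounted as distinct cycles.

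With $S_v$ identified I would recover the $v$-partition by co-occurrence: two taxa lie in the same $v$-block if and only if no member of $S_v$ contains both. One direction is immediate from the description of $S_v$, whose members meet four distinct blocks. For the converse, if $a,a'$ lie in different blocks then, using $k_v\ge 4$, I can extend $\{a,a'\}$ to a $4$-set meeting four distinct blocks one of which is $H$, giving a member of $S_v$ containing both; the same construction shows every taxon of $X$ appears in some member of $S_v$. Therefore the $v$-blocks are precisely the classes of the ``never co-occur'' relation on $X$, and $k_v$ is their number.

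Finally, for $k_v\ge 5$ I would single out $H$ as \emph{the unique $v$-block met by every member of $S_v$}. Every $s\in S_v$ meets $H$ by Lemma \ref{cfs}; conversely, for a non-hybrid block $B$, since $k_v-1\ge 4$ there are at least four blocks other than $B$ (one of them $H$), so one can build a member of $S_v$ from $H$ together with three non-hybrid blocks all distinct from $B$, producing a set in $S_v$ avoiding $B$. Thus $H$ is distinguished. This argument collapses exactly when $k_v=4$: then every member of $S_v$ must use all four blocks, so each block is met by every set and none can be singled out — in agreement with the non-identifiability of the hybrid node of a $4$-cycle illustrated in Figure \ref{fig:4cyc}.
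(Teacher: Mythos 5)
Your proof is correct, but it is organized quite differently from the paper's. The paper never reconstructs the full fiber $S_v=\psi^{-1}(v)$: it fixes a single $s=\{a,b,c,d\}\in S$ determining $v$, forms the four sets $Z_{abc},Z_{abd},Z_{acd},Z_{bcd}$ of admissible one-taxon replacements (all tied to the same hybrid node by Lemma \ref{3tax}), and reads everything off their pairwise intersections: all intersections empty forces $k_v=4$ with the $Z$'s being the blocks themselves, while for $k_v>4$ the hybrid block is the unique $Z$ disjoint from all the others, and the remaining blocks are then recovered by further targeted membership tests. You instead recover $S_v$ globally as an equivalence class of $S$ under single-taxon swaps, which costs you an extra step --- the connectivity of each fiber under such swaps, your exchange argument over $3$-subsets of the $k_v-1\ge 3$ non-hybrid blocks, which is correct and is the one genuinely new ingredient relative to the paper --- but buys cleaner invariant descriptions afterwards: the $v$-blocks as the classes of the ``never co-occur in $S_v$'' relation, and for $k_v\ge 5$ the hybrid block as the unique block met by every member of $S_v$, with the $k_v=4$ degeneracy transparently explaining why the hybrid node of a $4$-cycle cannot be singled out. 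Both arguments rest on the same two pillars, namely Lemma \ref{cfs} (including, in both cases, the fact from its proof that \emph{every} quadruple consisting of one hybrid-block taxon and three taxa in distinct non-hybrid $v$-blocks lies in $S$ generically, with genericity taken simultaneously over the finitely many quadruples) and Lemma \ref{3tax}; so neither route is more elementary, but yours is more symmetric and easier to verify block by block, at the price of the connectivity lemma.
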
 
\begin{proof}
	
	Let $\{a,b,c,d\}\in S$ and let $v$ be the hybrid node  determined by it. By Lemma \ref{cfs}, each of these taxa belongs to a different $v$-block, and hence to a different $4$-network block. 
	Denote by $A,B,C,D$ the $v$-blocks containing $a,b,c$ and $d$ respectively. 
	
	Let $Z_{abc}$ be the set of all taxa $e$ such that  $\{a,b,c,e\}\in S$. By Lemma \ref{3tax}, all such $\{a,b,c,e\}\in S$ determine the same hybrid node $v$. Consider now  $Z_{bcd},$ $Z_{acd}$ and $Z_{abd}$.   If $k_v=4$, then, by the last statement of Lemma \ref{cfs}, $Z_{abc}=D$, $Z_{bcd}=A$, $Z_{acd}=B$ and $Z_{abd}=C$, so all pairwise intersections of $Z_{abc},$ $Z_{bcd},$ $Z_{acd},$ $Z_{abd}$ are empty.   If $k_v>4$, then, again by Lemma \ref{cfs}, for some distinct taxa $i,j,k\in\{a,b,c,d\}$,  $Z_{ijk}$ is the $v$-hybrid block, and for any $l,m,n\in\{a,b,c,d\}$ with $\{l,m,n\}\neq \{i,j,k\}$,  $Z_{lmn}=(L\cup M\cup N)^c$.  Note that $Z_{ijk}\cap Z_{lmn}=\emptyset$ since one of $L,M,N$ is the $v$-hybrid block. Since $Z_{lmn}$ contains at least one $v$-block other than $A$, $B$, $C$ or $D$, for any $l',m',n'\in\{a,b,c,d\}$, with $\{l',m',n'\}\neq\{i,j,k\}$, $Z_{lmn}\cap Z_{l'm'n'}\neq \emptyset$. Hence we can determine whether $k_v> 4$ or $k_v=4$: if all pairwise intersection of $Z_{abc},$ $Z_{bcd},$ $Z_{acd},$ $Z_{abd}$ are empty then $k_v=4$, else $k_v>4$. If $k_v>4$ we can determine the hybrid block, by noting which of the sets $Z_{abc},$ $Z_{bcd},$ $Z_{acd},$ $Z_{abd}$ has empty intersection with any other set in this family. At this point we have determined either that $k_v=4$ and all $v$-blocks, or that $k_v>4$ and the hybrid block.

	In the case $k_v> 4$, without loss of generality, suppose that $A$ is the $v$-hybrid block. Let $y\notin Z_{abc}=(A\cup B\cup C)^c$, so $y$ is in one of $A$, $B$ and $C$. For some $u,w\in\{a,b,c\}$,  $s'=\{y,u,w,d\}\in S$, which shows $y$ and the taxon $g\in\{a,b,c\}\smallsetminus \{u,w\}$  are in the same $v$-block. Thus we can determine $A$, $B$ and $C$.
 
	Note that for any taxon $x$ that is not in any of $A,$ $B$ or $C$, then $s=\{a,x,b,c\}\in S$. Since $s$ determines $v$, following the steps of the last paragraph identifies the $v$-block that contains $x$. Therefore  all $v$-blocks can be determined, and thus $k_v$ as well.  $\Box$\end{proof}
 
\begin{lemma}\label{orderofcycles}
	Let $\mathcal{N}^+$ be a metric rooted level-1 network on $X$. Then for any hybrid node $v$ with $k_v\geq 4$ the order of the $v$-blocks in the cycle can be determined from the ordering of the concordance factors.
\end{lemma}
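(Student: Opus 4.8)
The plan is to reduce the reconstruction of the cyclic order of the $v$-blocks to reading, for suitable quartets, the datum $q^{BC}$ and interpreting it through Proposition \ref{prop:isbc}. Recall that if $\{a,b,c,d\}$ satisfies the $BC$ property then $\mathcal{Q}^-_{abcd}$ has a $4$-cycle, and $q^{BC}_{abcd}$ (the split corresponding to the smallest concordance factor, which is well defined generically) records which of the three splits pairs taxa lying in non-adjacent $v$-blocks of that $4$-cycle. Since ``non-adjacent'' in a $4$-cycle means diametrically opposite, each such quartet reveals the cyclic arrangement of its four blocks. The task is therefore to assemble these local cyclic arrangements into the global cyclic order of all $k_v$ blocks of $C_v$.

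First I would dispose of the case $k_v=4$. By Lemma \ref{blov} the four $v$-blocks are known; choosing one taxon from each produces, by Lemma \ref{cfs}, a quartet satisfying the $BC$ property, since exactly one chosen taxon lies in the hybrid block. Reading $q^{BC}$ and applying Proposition \ref{prop:isbc} identifies the two diagonal (non-adjacent) pairs of blocks, and a cyclic order on four blocks is determined by its pair of diagonals. Hence the cyclic order is read off from a single quartet.

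For $k_v\ge 5$, Lemma \ref{blov} furnishes all $v$-blocks together with the hybrid block $B_0$; fix a taxon $a\in B_0$. For any three distinct non-hybrid blocks with representatives $b,c,d$, the quartet $\{a,b,c,d\}$ satisfies the $BC$ property by Lemma \ref{cfs}, and its induced $4$-cycle is obtained from $C_v$ by retaining the four relevant cycle nodes and suppressing the rest (Proposition \ref{prop::ancestralinduced}), so the induced cyclic order of the four chosen blocks coincides with their order around $C_v$. Because $a$ lies in the hybrid block, the block that $q^{BC}_{abcd}$ marks as opposite to $a$ is precisely the \emph{median}, along the arc of $C_v$ obtained by cutting at $B_0$, of the three blocks containing $b,c,d$. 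One checks this is independent of the chosen representatives, since the induced $4$-cycle — and hence the ordering of its concordance factors via \eqref{eq::4cyc} — depends only on the arc lengths between the selected blocks, not on the particular taxa.

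Running over all triples of non-hybrid blocks yields, for each triple, which block is its median; this is exactly the betweenness relation of points on a line, which determines the underlying linear order up to reversal. Reattaching $B_0$ at the cut then produces the cyclic order of all $k_v$ blocks, as desired. The main obstacle I anticipate is the geometric bookkeeping in the case $k_v\ge 5$: verifying carefully that ``opposite to the hybrid block in the induced $4$-cycle'' equals ``median along $C_v$'', which rests on the compatibility of induction with suppression (Proposition \ref{prop::ancestralinduced}) and on the hybrid taxon occupying a fixed reference position in every quartet; together with confirming that the betweenness-to-order reconstruction is valid and that the genericity hypothesis guarantees the $BC$ property, hence a well-defined $q^{BC}$, for every quartet used.
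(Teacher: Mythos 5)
Your proof is correct. The data you use is identical to the paper's: quartets anchored at a representative of the $v$-hybrid block, interpreted through Proposition \ref{prop:isbc} (which reports non-adjacency of blocks in the induced $4$-cycle, and the induced $4$-cycle inherits the cyclic order of the four selected blocks from $C_v$, as you justify via Proposition \ref{prop::ancestralinduced}). Where you differ is in the combinatorial assembly. The paper performs a greedy walk: it first finds the two blocks adjacent to the hybrid block $A_1$ by observing that $A_1$ and $A_j$ are adjacent if and only if $a_1a_j|xy$ never occurs as $q^{BC}_{a_1a_jxy}$, and then iteratively extends the chain of adjacencies around the cycle. You instead extract from each quartet the median of the three non-hybrid blocks along the arc obtained by cutting $C_v$ at the hybrid block, and reconstruct the linear order from the resulting betweenness relation, which determines the order up to reversal --- enough, since a cyclic order is only defined up to reflection. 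Your route trades the paper's step-by-step adjacency bookkeeping (whose iterative criterion is stated somewhat awkwardly there) for the standard fact that a betweenness relation on a finite set determines a linear order up to reversal; both rest equally on genericity to guarantee, via Lemma \ref{cfs} and Theorem \ref{thm::twocor}, that every quartet used satisfies the $BC$ property, which you correctly flag.
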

\begin{proof}
	If $k_v=4$, the claim is established by Proposition \ref{prop:isbc}. Now suppose that $k_v>4$, so by Lemma \ref{blov} we know the $v$-hybrid block.  Let $A_1,...,A_{k_v}$ be the $v$-block partition with $A_1$ the $v$-hybrid block. Let $a_i\in A_i$ be an element of the $i$-th $v$-block. By Proposition \ref{prop:isbc},  $A_1$ and $A_j$ are adjacent if and only if $q^{BC}_{a_1a_jxy}\neq a_1a_j|xy $ for any distinct $x,y\in\{a_2,...,a_{k_v}\}\smallsetminus\{a_j\}$. Thus we can identify the two $v$-blocks adjacent to $A_1$. Suppose that such $v$-blocks are $A_p$ and $A_q$. We find the other $v$-block adjacent to $A_q$  from $\{ q^{BC}_{a_1a_pa_ja_m} \}$ for all distinct $j,m\in\{2,3,4,...,k_v\}\smallsetminus\{p,q\}$.  This is, $A_q$ and $A_j$ are adjacent if and only if $q^{BC}_{a_1a_ja_px}\neq a_1a_j|xa_p $ for any distinct $x\in\{a_2,...,a_{k_v}\}\smallsetminus\{a_p,a_q,a_j\}$ and $j\neq 1,p,q$. 	Continuing in this way, the full order of blocks around the cycle can be determined.  $\Box$\end{proof}
We reach the main result.

\begin{theorem}\label{thm::main}
	Let $\mathcal{N}^+$ be a metric rooted level-1 network on $X$.  Then under the NMSC model, for generic parameters, the collection of orderings of  quartet concordance factors identifies the unrooted semidirected topological network $\widetilde{\mathcal{N}}$ obtained from $\mathcal{N}^-$ by suppressing all 2- and 3-cycles, and directions of hybrid edges in 4-cycles, while retaining directions of hybrid edges of $k$-cycles for $k\ge 5$.
\end{theorem}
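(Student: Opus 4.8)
The plan is to read off $\widetilde{\mathcal{N}}$ in three stages: first recover every cycle of size $\ge 4$ together with its internal combinatorics, then recover the tree structure carrying the taxa, and finally glue the two together. For the first stage I would extract from the collection of orderings the set $S$ of all $4$-taxon subsets satisfying the $BC$ property; by Theorem \ref{thm::twocor} these are exactly the quartets whose induced unrooted network carries a $4$-cycle, so $S$ is determined generically. Feeding $S$ into Lemma \ref{netblo} returns the $4$-network blocks, and Lemmas \ref{blov} and \ref{orderofcycles} then return, for every cycle $C_v$ with $k_v\ge 4$, its size $k_v$, its $v$-blocks, the cyclic order of those blocks around $C_v$, and---precisely when $k_v\ge 5$---the $v$-hybrid block. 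This already produces every cycle of $\widetilde{\mathcal N}$ with its full internal structure, and via the hybrid block it fixes the orientation of the hybrid edges for every $k\ge 5$ cycle; for $k_v=4$ the hybrid block is unrecoverable from the orderings, matching the statement's omission of those directions.

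Next I would reconstruct the tree that threads through these cycles. The crucial observation is that every quartet $\{a,b,c,d\}\notin S$ determines a resolved split of $\widetilde{\mathcal N}$: by Theorem \ref{thm::twocor} its induced quartet has no $4$-cycle, so it is treelike or carries a $3_2$-cycle, and in either case the failure of the $BC$ property forces a coincidence, leaving generically exactly two equal concordance factors and one ``odd one out.'' For a treelike quartet the odd value is the largest and names the displayed split; for a $3_2$-cycle (Proposition \ref{prop:3c2}) it is the smallest and names the hybrid pair, and suppressing that $3$-cycle groups exactly that pair---so in both cases the odd-one-out split is the split displayed by $\widetilde{\mathcal N}$. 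Applying this to quartets lying inside a single $4$-network block yields, by standard quartet-to-tree reconstruction, the internal tree of each block; applying it to quartets straddling a cycle yields splits recording positions along that cycle's arcs, consistent with the cyclic orders already obtained.

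Finally I would assemble $\widetilde{\mathcal N}$ by recursion on its tree of cycles (Definition \ref{treeofcycles}). Corollary \ref{cor:partiofparti} presents the $4$-network partition as the common refinement of the $v_i$-partitions of the big cycles, which records how the cycles nest: each $v$-block of a cycle $C_v$ is either atomic (a single $4$-network block) or is itself subdivided by further big cycles. Working from the outermost cycles inward, I would attach to each node of $C_v$ the substructure determined by the corresponding $v$-block---its internal tree when the block is atomic, or its already-reconstructed sub-network otherwise---placing the $k_v$ blocks in the cyclic order supplied by Lemma \ref{orderofcycles} and fixing the attachment point on each side from the straddling splits of the previous paragraph. \emph{I expect this last gluing step to be the main obstacle:} the cycle data and the block trees are each clean, but proving that they fit together into one and only one level-1 network---pinning down every attachment point and untangling the recursive nesting of cycles within $v$-blocks---is where the genuine combinatorial bookkeeping lies, whereas everything before it is a direct appeal to Theorem \ref{thm::twocor} and Lemmas \ref{netblo}, \ref{blov}, and \ref{orderofcycles} together with the elementary odd-one-out reading of each ordering.
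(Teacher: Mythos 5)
Your first stage is exactly the paper's: extract $S$, apply Theorem \ref{thm::twocor} and Lemmas \ref{netblo}, \ref{blov}, \ref{orderofcycles} to recover every big cycle, its blocks, their cyclic order, and (for $k_v\ge 5$) the hybrid block. Your odd-one-out reading of the non-$BC$ quartets is also sound: whether the induced quartet is treelike or carries a $3_2$-cycle, the unequal concordance factor always names the split $ab|cd$ of the quartet tree obtained by suppressing all cycles (the only imprecision is that a $3_2$-cycle quartet can itself be treelike, in which case the odd value is the largest rather than the smallest, but it names the same split either way).

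The genuine gap is the one you flag yourself: the assembly. Listing the cycles, their cyclic block orders, the block trees, and the straddling splits does not by itself prove that exactly one level-1 network realizes all of this data, and "working from the outermost cycles inward" is not an argument. The paper closes this with a concrete induction on the number of cycles of size $\ge 4$. The key devices you are missing are: (i) for a single big cycle, each $v$-block's subtree is recovered as the induced quartet tree on $P_u\cup\{q\}$ for an external taxon $q$, and $q$ serves as an outgroup that \emph{roots} that subtree, so its root can be joined unambiguously to the cycle node $u$ --- this is how attachment points are pinned down, not by an unspecified reading of straddling splits; (ii) in general, one locates a \emph{leaf} of the tree of cycles, detectable from the data because it is a cycle all but one of whose $v$-blocks are $4$-network blocks, deletes the node $u$ carrying the non-atomic block to split $X$ into $S_1\sqcup S_2$, applies the inductive hypothesis to $\mathcal{N}^-_{S_i\cup\{s_j\}}$ (each of which has strictly fewer big cycles), and glues the two reconstructions by identifying $s_1$ with $s_2$ and suppressing that node. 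This reduces the "one and only one network" claim to a two-piece join at a single shared taxon, which is where the uniqueness actually gets proved. Without some such mechanism your outline establishes that the data is recoverable but not that it determines $\widetilde{\mathcal{N}}$.
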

\begin{proof}
	We proceed by induction in the number of cycles of size $\geq4$. Suppose there are no such cycles.Then every induced quartet tree will have no cycle of size $4$, and the ordering of the concordance factors determines the topology of the quartet tree obtained by suppressing all $2$- and $3$-cycles. These then determine the topology $\widetilde{\mathcal{N}}$ by a standard result \cite{Semple2005}.
	
	Suppose there is exactly one cycle of size at least 4. Then there is just one hybrid node $v$ in $\mathcal{N}^-$ with $k_v\ge 4$. By Lemmas \ref{blov} and \ref{orderofcycles} we can determine the size $k_v$ of the cycle, the $v$-blocks and the order of the $v$-blocks in the cycle. If $k_v\geq 5$ we can identify the hybrid node $v$ and thus  identify the direction of the hybrid edges. 
	Let $P_u$ be a $v$-block where $u$ is a node in $C_v$, and $q\in X\smallsetminus P_u$.  Let $\mathcal{K}$ be the induced network on $P_u\cup \{q\}$ with all 2-cycles and 3-cycles suppressed.  
	Note that $\mathcal{K}$  is a tree, and the quartet concordance factors for taxa in $P_u\cup{q}$ identify its  topology. Viewing $q$ as an outgroup of $P_u$, induces a rooted tree on $P_u$. The root can then be joined with an edge to $u$. Doing this for all $v$-blocks establishes the claim.
		
	Now suppose that the result is true for networks with $l$ cycles of size at least $4$, and $\mathcal{N}^-$  contains $l+1$ such cycles. We can first determine all $4$-network blocks and the $v$-blocks and its cycle order for every cycle of size at least 4 by Lemmas \ref{netblo},  \ref{blov}, and \ref{orderofcycles}. 	Following Definition \ref{treeofcycles}, consider $\mathcal{T}$, the tree of cycles of $\widetilde{\mathcal{N}}$.  A leaf of $\mathcal{T}$  arises from a cycle $C_v$ on $\mathcal{N}^-$ if and only if all $v$-blocks but one are 4-network blocks. We may therefore determine the $v$-blocks of some cycle $C_v$ that is a leaf of $\mathcal{T}$. 
	
	Let $u$ be the vertex in $C_v$ associated to the $v$-block that is not a $4$-network block. Note that $\widetilde{\mathcal{N}}\smallsetminus\{u\}$ is a disconnected graph, with two connected components $\widetilde{\mathcal{N}_1}$ and $\widetilde{\mathcal{N}_2}$. Let $\widetilde{\mathcal{N}_1}$ be the component containing all nodes of $C$ except $u$, and  $S_i$  the set of taxa on $\widetilde{\mathcal{N}_i}$, $i\in\{1,2\}$. Let $s_i\in S_i$. Then 
	$\mathcal{N}^-_{S_i\cup\{s_j\}}$ for $i,j\in\{1,2\}$,  $i\neq j$, has at most $l$ cycles of size at least $4$. By the induction hypothesis  we can determine the semidirected topological network $\mathcal{N}_i$ obtained from  $\mathcal{N}^-_{S_i\cup\{s_j\}}$ by suppressing all 2- and 3-cycles, and directions of the hybrid edges in 4-cycles, while retaining directions of the hybrid edges of $k$-cycles for $k\ge 5$.  We obtain $\widetilde{\mathcal{N}}$ by identifying $s_1$ in $\mathcal{N}_2$ with $s_2$ in $\mathcal{N}_1$ and suppressing that node. $\Box$\end{proof}

Figure \ref{fig::finalex} shows a phylogenetic metric rooted network $\mathcal{N}^+$  and  $\widetilde{\mathcal{N}}$, the unrooted semidirected topological network which is identified by Theorem \ref{thm::main}. The cycle colored in green is a $4$-cycle and, though, its hybrid node is not identified from quartet concordance factors. However, its hybrid node has to be such that  $\widetilde{\mathcal{N}}$  is induced from a rooted network. Thus the node labeled $x$ in Figure \ref{fig::finalex} cannot be the hybrid node. This illustrates that although we cannot always identify the hybrid node on  $4$-cycles, sometimes the structure of the resulting network $\widetilde{\mathcal{N}}$ restricts the possible nodes for  its placement.   
\begin{figure}\begin{center}
	\includegraphics[scale=.22]{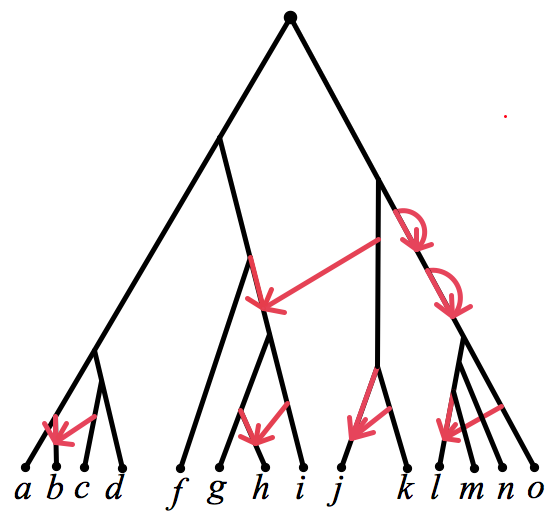}\includegraphics[scale=.22]{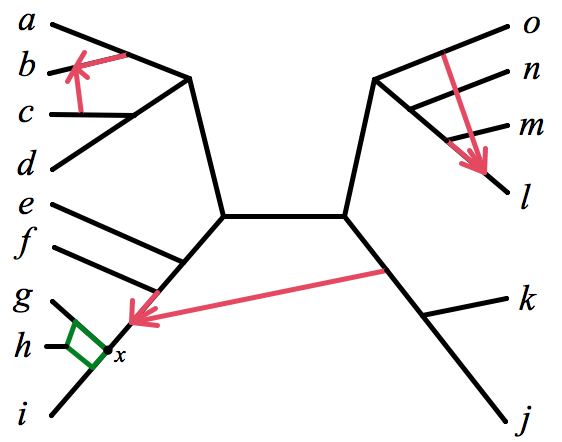}
	\caption{A rooted metric phylogenetic network  $\mathcal{N}^+$ (left) and the network structure  $\widetilde{\mathcal{N}}$ (right) that can be identified by Theorem \ref{thm::main}. The 4-cycle on the network in the right has 3 different candidates for the  hybrid node.}\label{fig::finalex}
\end{center}\end{figure}

\section{Further results on $3_2$-cycles}\label{sec::cycleprop}

Under some special circumstances, for example when a set of taxa satisfy the Cycle property but not the $BC$ property, it is possible to detect further information about the topology of the network than that given in Theorem \ref{thm::main}. For instance, some 3-cycles are identifiable under such hypothesis. In this section, we discuss these extensions briefly, as  it is difficult to formulate general statements on identifiability.\\

Recall that a $3_2$-cycle may lead to concordance factors satisfying the Cycle property, but it need not, as shown in Proposition \ref{prop:3c2}. There is a full-dimensional subset of parameters space on which concordance factors indicate a $3_2$-cycle and another in which it fails to. Nonetheless, the following gives a positive, but limited, identifiability result. 

\begin{proposition}
	Let $\mathcal{N}^+$ be a metric rooted level-1 network on $X$  and suppose $\{a,b,c,d\}\subset X$  satisfies  the Cycle property but not the $BC$ property.  Then under the NMSC model, for generic parameters, if there is no taxon $e\in X$ such that $\{i,j,k,e\}$ satisfies the $BC$ property for any distinct $i,j,k\in \{a,b,c,d\}$ then  $\mathcal{N}^-$ contains a $3$-cycle with at least two descendants of the hybrid node.
\end{proposition}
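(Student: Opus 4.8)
The plan is to combine the detection result of Corollary \ref{cor::Cpgen} with the structural criterion of Lemma \ref{cfs}. First I would apply Corollary \ref{cor::Cpgen}: since $\{a,b,c,d\}$ satisfies the Cycle property but not the $BC$ property for generic parameters, the induced unrooted quartet network $\mathcal{Q}^-_{abcd}$ must contain a $3_2$-cycle. After relabeling I take $a$ and $b$ to be the two taxa descending from the hybrid node of this quartet cycle, and $c$, $d$ to lie in its two remaining blocks.

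Next I would transport this cycle back to $\mathcal{N}^-$. Identifying $\mathcal{Q}^-_{abcd}$ with $(\mathcal{N}^-)_{abcd}$ via Proposition \ref{prop::ancestralinduced}, and noting that the induction only deletes edges and suppresses degree-two nodes (so no cycle is created and the distinguished hybrid node is preserved), the quartet $3_2$-cycle is induced by a genuine cycle $C_v$ of $\mathcal{N}^-$ whose hybrid node $v$ maps to the quartet's hybrid node. Because $a$ and $b$ descend from that hybrid node, both lie in the hybrid block $B_v$ of $C_v$; in particular $C_v$ has at least two descendants of its hybrid node. Since the induced cycle has size $3$, the taxa $c$ and $d$ must occupy two further $v$-blocks $B_c\neq B_d$, each distinct from $B_v$, so $\{a,b,c,d\}$ meets exactly the three $v$-blocks $B_v$, $B_c$, $B_d$.

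The crux is then to show that $C_v$ has size exactly $3$ rather than $k_v\ge 4$. Suppose $k_v\ge 4$. In a binary level-1 network each node of a cycle carries a nonempty pendant block, so there would be a fourth $v$-block $B'$, disjoint from $B_v$, $B_c$, $B_d$, containing a taxon $e\notin\{a,b,c,d\}$. Then $a\in B_v$ lies in the hybrid block while $c$, $d$, $e$ lie in three further distinct $v$-blocks, so the forward direction of Lemma \ref{cfs} shows that $\{a,c,d,e\}$ satisfies the $BC$ property for generic parameters. But $\{a,c,d\}$ is a triple of distinct taxa from $\{a,b,c,d\}$, and $e\in X$, which contradicts the standing hypothesis that no taxon completes any triple from $\{a,b,c,d\}$ to a $BC$-set. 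Hence $k_v=3$, and $C_v$ is a $3$-cycle of $\mathcal{N}^-$ with at least two descendants of its hybrid node, which is the assertion.

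I expect the main obstacle to be the bookkeeping of the final paragraph: one must verify that the extra block $B'$ is genuinely nonempty and disjoint from the three occupied blocks, so that $e$ is a legitimately new taxon placed in a fourth distinct non-hybrid $v$-block, and that the resulting configuration $\{a,c,d,e\}$ literally meets the hypotheses under which Lemma \ref{cfs} produces a $BC$-set. Everything else is a direct chaining of the already-established reduction and detection lemmas.
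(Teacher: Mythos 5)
Your proposal is correct and follows essentially the same route as the paper's proof: use the Cycle-but-not-$BC$ hypothesis (via Corollary \ref{cor::Cpgen}) to get a $3_2$-cycle in $\mathcal{Q}^-_{abcd}$, lift it to a cycle $C_v$ of $\mathcal{N}^-$ with the relevant taxa in distinct $v$-blocks, and then rule out $k_v\ge 4$ by producing, via Lemma \ref{cfs}, a taxon $e$ that would complete a triple from $\{a,b,c,d\}$ to a $BC$-set, contradicting the hypothesis. Your write-up is in fact somewhat more explicit than the paper's about why the fourth $v$-block is nonempty and why the quartet's hybrid block lifts to the hybrid block of $C_v$.
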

\begin{proof}
	Since $\{a,b,c,d\}\subset X$  satisfy  the Cycle property but not the $BC$ property, by Proposition \ref{prop::cpcycle},  there is a $3_2$-cycle in $\mathcal{Q}^-_{abcd}$. Thus three taxa of  $a,b,c,d$ are in distinct $v$-blocks in $\mathcal{Q}^-_{abcd}$. This implies  that   there exists a cycle $C_v$ in $\mathcal{N}^-$ where three taxa of  $a,b,c,d$ are in distinct $v$-blocks.  	Since $\{i,j,k,e\}$ does not satisfy the $BC$ property for any  distinct $i,j,k\in \{a,b,c,d\}$, this implies $C_v$ is not a $k$-cycle for $k\ge 4$. Thus by Proposition \ref{prop:3c2}, $C_v$ has size $3$ and at least two of $a$, $b$, $c$, $d$ descend from $v$. 
 $\Box$\end{proof}

Let $\mathcal{Q}^-_{abcd}$ be an unrooted  level-1 quartet network  where $\{a,b,c,d\}$ satisfies the Cycle property but not the $BC$ property.   It can be shown that if, for example, the smallest entry in $CF_{abcd}$ is the one corresponding to the quartet $AB|CD$, then either $a,b$ or $c,d$ are in the $v$-hybrid block. This proof is very similar to that of Proposition \ref{prop:isbc}. \\

Let $\mathcal{N}^+$ be a network such that  $\widetilde{\mathcal{N}}$ (the network obtained from $\mathcal{N}^+$ in Theorem \ref{thm::main}) is as shown in Figure \ref{fig:magicquintet}. Observe that $\{a,b,c,d\}$  satisfies the $BC$ property by Theorem \ref{thm::twocor}. If $\{a,e,b,d\}$ satisfies the Cycle property, then the  following Proposition indicates the hybrid node can be determined. 
\begin{figure}\begin{center}
	\includegraphics[scale=.25]{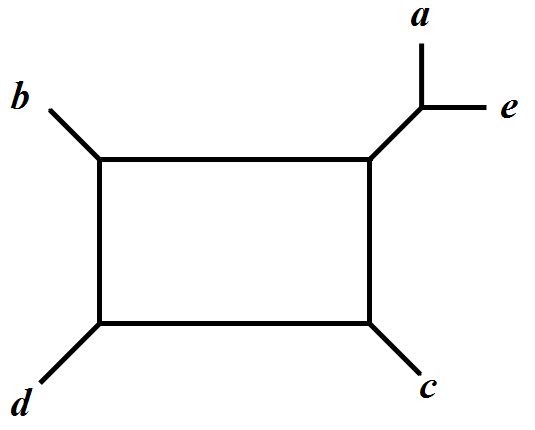}
	\caption{A network  $\widetilde{\mathcal{N}}$ with a four cycle such that if $\{a,b,c,e\}$ satisfies the Cycle property, the hybrid block can be detected.}\label{fig:magicquintet}
\end{center}\end{figure}

\begin{proposition}\label{lem::cpto4c}
	Let $\mathcal{N}^+$ be a metric rooted level-1  network on  $X$ and let $C_v$ be a $4$-cycle in $\mathcal{N}^-$. Let $a,b,c,d\in X$ be in different $v$-blocks in $\mathcal{N}^-$.  Suppose under the NMSC model, for generic parameters, for distinct $i,j,k\in \{a,b,c,d\}$, there exists a taxon $e \in X$ such that $\{i,j,k,e\}$ satisfies the Cycle property but not the $BC$ property. Then the $v$-block containing $e$ is the $v$-hybrid block.  
\end{proposition}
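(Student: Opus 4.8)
The plan is to combine the semi-algebraic dichotomy for quartet concordance factors (Corollary \ref{cor::Cpgen}) with a careful bookkeeping of how the cycles of $\mathcal{N}^-$ induce cycles in the quartet network $\mathcal{Q}^-_{ijke}$. Write $R_v,R_p,R_q,R_w$ for the four $v$-blocks of the $4$-cycle $C_v$, with $R_v$ the hybrid block, and let $d$ denote the unique element of $\{a,b,c,d\}\smallsetminus\{i,j,k\}$, so that $i,j,k$ occupy the three $v$-blocks other than the block of $d$. Since $\{i,j,k,e\}$ satisfies the Cycle property but not the $BC$ property, Corollary \ref{cor::Cpgen} guarantees (for generic parameters) that $\mathcal{Q}^-_{ijke}$ contains a $3_2$-cycle. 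The entire argument consists in showing that this $3_2$-cycle can arise only from $C_v$ and only when $e$ lies in $R_v$.

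First I would locate $e$ among the blocks. If $e$ were in the same $v$-block as $d$, then $i,j,k,e$ would lie in four distinct $v$-blocks of $C_v$, so by Lemma \ref{cfs} (cf. Theorem \ref{thm::twocor}) $\mathcal{Q}^-_{ijke}$ would contain a $4$-cycle and $\{i,j,k,e\}$ would satisfy $BC$ for generic parameters, contradicting the hypothesis. Hence $e$ lies in one of the three $v$-blocks already occupied by $i,j,k$, and relative to $C_v$ the four taxa $i,j,k,e$ occupy exactly three $v$-blocks in a $2{+}1{+}1$ pattern, the doubly occupied block being that of $e$.

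The key step is to determine which cycle of $\mathcal{N}^-$ can produce the guaranteed $3_2$-cycle. By the $2{+}1{+}1$ pattern, the only component of $\mathcal{N}^-\smallsetminus E(C_v)$ containing two of $i,j,k,e$ is the one giving $e$'s block. Any cycle $C_{v'}\neq C_v$ is, by the level-1 hypothesis, contained in a single component of $\mathcal{N}^-\smallsetminus E(C_v)$; for it to induce a $3$-cycle on $\{i,j,k,e\}$ it must therefore lie in $e$'s component, and the two taxa outside that component then fall into a single $v'$-block. That common block cannot be the hybrid block of $C_{v'}$: every semidirected path leaving $v'$ must begin with the unique out-edge of $v'$, which descends into the hybrid block of $C_{v'}$ inside $e$'s component, so no taxon outside that component descends from $v'$. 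Hence the doubly occupied block of the cycle induced by $C_{v'}$ is non-hybrid, forcing a $3_1$-cycle. Thus the $3_2$-cycle must come from $C_v$ itself. It remains to read off what $C_v$ induces: since $d$'s block is empty, $C_v$ induces a $3$-cycle exactly when $d$'s block is a tree-node block, and that $3$-cycle is a $3_2$-cycle precisely when the doubly occupied block, $e$'s block, is the hybrid block $R_v$; if instead $d$'s block equals $R_v$, then $R_v$ is empty, the hybrid node of $C_v$ has no descendant among the chosen taxa, its two hybrid edges lie on no simple trek between them, and $C_v$ induces no cycle at all.

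Assembling these observations, if $e\notin R_v$ then neither $C_v$ nor any other cycle produces a $3_2$-cycle, so by Proposition \ref{prop:Cfspreserv} the concordance factors of $\mathcal{Q}^-_{ijke}$ are treelike and $\{i,j,k,e\}$ fails the Cycle property, contrary to hypothesis. Therefore $e\in R_v$, i.e. the $v$-block containing $e$ is the $v$-hybrid block. I expect the crux to be the cycle-induction bookkeeping of the third paragraph, in particular the two structural facts that the external block of an interior cycle $C_{v'}$ is never its hybrid block and that a cycle with unoccupied hybrid block induces no cycle; both rest on the hybrid node being the unique sink of its cycle, together with the simple-trek description of induced networks in Definition \ref{def:inducenetontaxa} and the notion of descent in an unrooted network.
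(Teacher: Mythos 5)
Your overall strategy parallels the paper's: place $e$ outside $d$'s $v$-block, invoke Corollary \ref{cor::Cpgen} to obtain a $3_2$-cycle in $\mathcal{Q}^-_{ijke}$, and then argue that this $3_2$-cycle can only be the one induced by $C_v$, with $e$'s doubly occupied block as its hybrid block. Your first, second and fourth paragraphs are correct, and your accounting of what $C_v$ itself induces (a $3_1$-cycle when $e$'s block is non-hybrid, no cycle at all when $d$'s block is the hybrid block) is accurate and more explicit than anything in the paper, which at the corresponding point simply asserts that the $4$-cycle of $\mathcal{Q}^-_{ijkd}$ and the $3$-cycle of $\mathcal{Q}^-_{ijke}$ ``have to have the same hybrid edges, otherwise the level-1 condition would be violated.''

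The gap is in the third paragraph, in the claim that the $v'$-block of an interior cycle $C_{v'}$ containing the two chosen taxa outside $e$'s component can never be the $v'$-hybrid block. Your justification assumes that the out-edge of $v'$ ``descends into the hybrid block of $C_{v'}$ inside $e$'s component,'' but nothing forces this: the suppressed root of $\mathcal{N}^+$ may lie inside $e$'s component, in which case the attachment node $u$ of that component to $C_v$ can be \emph{below} $v'$. Then the component of $\mathcal{N}^-\smallsetminus E(C_{v'})$ reached by the out-edge of $v'$ contains $u$, hence contains every taxon outside $e$'s component, and the doubly-external block \emph{is} the $v'$-hybrid block. Concretely, let $d$'s block be the hybrid block of $C_v$, attach $j$ and $k$ to the two tree nodes of $C_v$ adjacent to $v$, and let the remaining block's component consist of a cycle $C_{v'}$ whose hybrid node has $u$ as its child, with $i$ and $e$ attached to its two non-hybrid sides and the root above $C_{v'}$. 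Then $C_v$ induces no cycle in $\mathcal{Q}^-_{ijke}$, while $C_{v'}$ induces precisely the $3_2$-cycle quartet of Figure \ref{fig::3c2t} with hybrid block $\{j,k\}$, so by Proposition \ref{prop:3c2} the set $\{i,j,k,e\}$ satisfies the Cycle property but not $BC$ on a full-dimensional set of parameters even though $e$ is not in the $v$-hybrid block. So this step of your argument is not merely unproved but false as stated; your careful case analysis has in fact surfaced a configuration that the paper's one-line appeal to the level-1 condition does not exclude either, and which would need to be ruled out by an additional hypothesis (for instance, on the other triples from $\{a,b,c,d\}$) rather than by the structure of a single quartet.
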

\begin{proof}
	Without loss of generality suppose that $i=a$, $j=b$ and $k=c$. Note  that $e$ is not in the same $v$-block as $d$, otherwise $\{a,b,c,e\}$ would satisfy the $BC$ property. Thus $e$ is the same $v$-block as $a,b$ or $c$. Without loss of generality suppose that is in the same $v$-block as $a$. Thus $\{e,b,c,d\}$ satisfies the $BC$ property and by Theorem \ref{thm::main} the order of the cycle can be determined. Without loss of generality suppose that the order is the one as in Figure \ref{fig:magicquintet}. By	Lemma  \ref{3tax},  $\{a,b,c,d\}$ and  $\{e,b,c,d\}$ determine the same hybrid node $v$.  Since $\{a,b,c,e\}$ satisfies the Cycle property, Corollary \ref{cor::Cpgen}  shows $\mathcal{Q}^-_{abce}$ has a $3_2$-cycle.  The $4$-cycle in $\mathcal{Q}^-_{abcd}$ and the $3$-cycle in $\mathcal{Q}^-_{abce}$ have to have the same hybrid edges, otherwise the level-1 condition would be violated.  Observe that the only possibility for $\mathcal{Q}^-_{abce}$  having a $3_2$-cycle  is if $e$ and $a$ are in the hybrid block. 	  $\Box$\end{proof}
 

\section{Appendix}\label{app}
Here, Proposition \ref{prop::ancestralinduced} of Section \ref{sec def} is proved. The argument uses the following.
\begin{lemma}\label{lem::trekmrca}
	Let $\mathcal{N}^+$ be a (metric or topological) rooted network on  $X$  and let $Z\subset X$. For any edge $e$ below MRCA$(Z)$, with a descendant in $Z$, there are $x,y\in Z$ such that $e$ is in a simple trek in $\mathcal{N}^+$ from  $x$ to $y$ whose edges are below MRCA$(Z)$. 
\end{lemma}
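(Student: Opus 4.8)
The plan is to construct the desired trek directly and then repair it to be simple while making sure the edge $e$ survives the repair. Write $e=(p,c)$ with $p$ the parent and $c$ the child. Since $e$ lies below $m:=\mathrm{MRCA}(Z)$, its parent $p$ is below or equal to $m$, and since $c$ is a child of $p$ we have $c$ strictly below $m$. Because $e$ has a descendant in $Z$, I would fix a taxon $x\in Z$ lying below $c$ together with a directed path $c\rightsquigarrow x$. Concatenating a directed path $m\rightsquigarrow p$ (which exists since $p$ is below or equal to $m$) with the edge $e$ and with $c\rightsquigarrow x$ yields a directed path
\[
P_1:\ m\rightsquigarrow p\to c\rightsquigarrow x
\]
through $e$, all of whose edges are below $m$. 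It then remains to produce a second taxon $y\in Z$ and a directed path $P_2$ from $m$ to $y$ so that, after pruning, $(P_1,P_2)$ restricts to a simple trek still containing $e$.

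The engine of the repair is the observation that, since $P_1$ and $P_2$ are directed paths out of the common node $m$, their shared nodes occur in the same relative order along both paths and hence form a chain; there is therefore a well-defined lowest common node $w$, and the sub-paths $w\rightsquigarrow x$ and $w\rightsquigarrow y$ meet only at $w$, giving a simple trek with top $w$ whose edges are below $m$. This pruned trek contains $e$ exactly when $w$ lies on the segment $m\rightsquigarrow p$, that is, at or above $p$. So the entire difficulty is to arrange that the two downward paths cannot reconverge strictly below $c$.

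First I would dispose of the easy case in which some taxon $y\in Z$ is not below $c$. Every node of a directed path to $y$ is an ancestor of $y$, so such a path can contain neither $c$ nor any descendant of $c$; thus $P_1$ and $P_2$ can only share nodes on the segment $m\rightsquigarrow p$, the lowest common node $w$ lies at or above $p$, and pruning yields a simple trek through $e$ with all edges below $m$. (That $y$ may additionally be chosen with $\mathrm{MRCA}(x,y)=m$ follows from Lemma~\ref{lem::mrcaxy}, but in this case I only need $y$ to lie outside the descendants of $c$.)

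The hard case, which I expect to be the main obstacle, is when every taxon of $Z$ lies below $c$. Here the hypothesis that $e$ is below $m$ is essential: since $c$ is strictly below $m$ and $m=\mathrm{MRCA}(Z)$, the node $c$ cannot lie on every directed path from the root to every taxon of $Z$ — otherwise $c$ would belong to the set $D$ of Definition~\ref{def::MRCA} and force $\mathrm{MRCA}(Z)$ below or equal to $c$ — so some taxon $z_0\in Z$ admits a directed path $P_2$ from $m$ that avoids $c$. The remaining trouble is that $P_2$ must still descend below $c$ to reach $z_0$, entering that region through hybrid nodes, and so it may reconverge with $P_1$ at a hybrid node strictly below $c$, which pruning would use as its top and thereby discard $e$. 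I would resolve this by inducting on the number of edges (equivalently, of hybrid nodes) below $c$: at a reconvergence node $w$ one has two internally disjoint downward routes into the subnetwork, and rerouting one path to a taxon reached without re-entering $P_2$ — together with the inductive hypothesis applied to the strictly smaller subnetwork hanging below the relevant hybrid — should furnish the required internally disjoint pair. Checking that this rerouting can always preserve the traversal of $e$ while genuinely separating the two chosen taxa, and that the induction strictly decreases, is the technical heart of the proof.
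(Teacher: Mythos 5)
Your easy case is sound: when some $y\in Z$ is not below the child $c$ of $e$, no node of a directed path from $m=\mathrm{MRCA}(Z)$ to $y$ can be $c$ or a descendant of $c$, so that path and your path $P_1$ through $e$ can only share nodes on the segment from $m$ to the parent of $e$; pruning at their lowest common node then yields the required simple trek through $e$ with all edges below $m$. This matches the first case of the paper's proof in mechanism (the paper roots the construction at $\mathrm{MRCA}(x,y)$ via Lemma~\ref{lem::mrcaxy} rather than at $m$, but the pruning argument is the same).

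The gap is the hard case, which you have correctly located but not closed. When every relevant taxon lies below $c$, you must exhibit a pair of paths, one through $e$ and one avoiding $c$, sharing no node strictly below $e$; the obstruction is reconvergence at hybrid nodes inside the subnetwork below $c$, and the two paths may meet, separate at tree nodes, and meet again, so the interaction is not controlled by a single reconvergence node. Your proposal to induct on the number of edges below $c$ and to ``reroute'' at a reconvergence point is only a plan: the assertion that the rerouting ``can always preserve the traversal of $e$'' while strictly decreasing the induction quantity is precisely the nontrivial claim, and you acknowledge as much. This is where the paper's proof spends essentially all of its effort: it fixes the pair $(P_x,P_y)$ minimizing the set $B$ of their common nodes below $e$, uses the defining property of the MRCA to produce a third path avoiding the minimal element of $B$, and then carries out a five-case analysis on where that third path first meets $P_x\cup P_y$, in each case either extracting a simple trek through $e$ directly or constructing a new pair of paths with strictly smaller $B$, contradicting minimality. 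Some argument of this kind (or your induction, executed in full with the multiple-reconvergence structure handled) is required; as written, the proposal does not prove the lemma.
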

\begin{proof}
	Let  $x\in Z$ be below $e$. By Lemma \ref{lem::mrcaxy}  there exists $y\in Z$ with MRCA($x,y$)  above $e$.
	
	 Suppose $y$ is not below $e$. Let $P_x$ be a path from MRCA($x,y$) to $x$ containing $e$ and let $P_y$ be a path from MRCA($x,y$) to $y$. Let $u$ be the minimal node in the intersection of $P_x$ and $P_y$. Since $y$ is not below $e$, $u$ cannot be below $e$. Then the subpath of $P_x$ from $u$ to $x$, which  contains $e$, and the subpath of $P_y$ from  $f$ to $y$ form a simple trek containing $e$.

 		Now assume $y$ is below $e$.  Since $e$ is below MRCA($x,y$), there exists a path   from  MRCA($x,y$) to one of $y$ or $x$ that does not pass through the child of $e$. Without loss of generality suppose  such a path $P_y$ goes from MRCA($x,y$) to $y$.  	Let $P_x$ be a path from  MRCA($x,y$) to $x$ that passes through $e$. Let $A=A(P_x,P_y)$ be the set of nodes above $e$,  common to  $P_y$ and $P_x$.   Let $a\in A$  be the minimal node  in  $A$.

 			Let $B(P_y,P_x)$ be the set of nodes below $e$,  common to  $P_y$ and $P_x$.  We may assume that we choose $P_x$ and $P_y$ such that $B=B(P_y,P_x)$ has minimal cardinality. If $B=\emptyset$ then the desired trek is easily constructed, with top $a$. So suppose $B\neq \emptyset$ has minimal element $b^-$ and maximal element $b^+$.  We are going to contradict the minimality of $B$. Note that $b^+$ must be  the hybrid node of a cycle containing $e$ (see Figure \ref{fig::cases} for a graphical reference). 
	
	Since $b^-$ is not the MRCA($x,y$), there exists a path $P^*$ from MRCA($x,y$) to one of $x$ or $y$ that does not pass through $b^-$.  Note that $P^*$ has to intersect at least one of $P_y$ or $P_x$ at an internal node below $b^-$. Let $C_1$  be the set of nodes below $b^-$, common to $P^*$ and $P_y$ and let $C_2$ be the set of nodes below $b^-$, common to $P^*$ and $P_y$. Let $c$ be the maximal node in $C_1\cup C_2$. We can assume, without loss of generality, that $c$ is in $P_y$. This is because if instead, $c$ were in $P_x$ , we can construct paths $P_x'$ and $P_y'$  where $P_i'$ contains all the edges in $P_i$ above $b^-$ and all edges of $P_j$ below $b^-$ for $i,j\in\{x,y\}$, $i\neq j$. Note that $P_x'$  passes through $e$ and does not contains $c$, while $P_y'$ does not pass through $e$, contains $c$, and $B=B(P_y',P_x')$.

	  Denote by $W$ the set of nodes in $(P^*\cap P_y)\cup(P^*\cap P_x)$ and let $w$ be the minimal node of $W$  above $b^-$. Since $\mathcal{N}^+$ is binary,  $w$ cannot be  $a$ or $b^+$  (see Figure \ref{fig::cases} for a graphical reference). There are 5 different cases of the location of $w$ in the network composed by the paths $P_y$ and $P_x$. These are   
	\begin{enumerate}
		\item $w$ is in $P_y$, above $b^+$ but below $a$.
		\item $w$ is in $P_x$, above $b^+$ but below $e$.
		\item $w$ is in $P_x$, above $e$ but below $a$.
		\item $w$ is in one or more of $P_x$ or $P_y$, above $a$.
		\item $w$ is in one or more of $P_x$ or $P_y$, above $b^-$ but below $b^+$. 
	\end{enumerate}
	Figure \ref{fig::cases} depicts in gray the graph composed by the paths $P_y$ and $P_x$, and in black we see the possible subpaths of $P^*$ from $w$ to $c$. In any of case 1, 2 or 3 we can find a simple trek containing $e$ as depicted in Figure \ref{fig::case123} by choosing the appropriate edges, and thus $B$ was not minimal. For case 4 and 5 there are two possibilities; (i) $w$ is in both $P_y$ and $P_x$; (ii) $w$ is only in one of $P_y$ or $P_x$. For case 4 (i), the situation is simple, and we can find a simple trek  as depicted on the left in Figure \ref{fig::case45}. For case 4 (ii), we first find  the node in $A$ that is right above $w$. Then as depicted on the left of Figure \ref{fig::case45} we can find a simple trek. 
	
	For case 5 we do not find a simple trek directly, instead we construct two paths $P_1$ and $P_2$ from MRCA($x,y$) to  $x$, $y$ respectively, only one of which contains $e$ with at least one less node in $B(P_1,P_2)$ than $B$. For case 5 (i), we just take $P_1$ to be the same as $P_x$ and for $P_2$ we consider the same edges that are in $P_y$ above $w$, the edges below $c$, and the edges in $P^*$ between $w$ and $c$. For case 5 (ii), we assume without loss of generality that $w$ is in $P_x$. Let $b$ be the node in $B$ right above $w$. Let  $P_1$ be the path containing the edges in $P_x$ that are above $b$, the edges in $P_y$ that are below $b$ but above the node $b'\in B$ right below $w$, and at last the edges in $P_x$ below $b'$. Let $P_2$ the path containing the edges in $P_y$ that are above $b$, the edges in $P_x$ that are above $a$ but below $b$, the edges in $P^*$ that are above $c$ but below $w$ and at last the edges in $P_y$ that are below $c$. Figure \ref{fig::case45} (right) depicts $P_1$ (red) and $P_2$ (blue) for (i) and (ii). Since  $B(P_1,P_2)$ has at least one less node that $B$ and we assumed $B$, the minimality of $B$ is contradicted. $\Box$\end{proof}
	\begin{figure}\begin{center}
		\includegraphics[scale=.20]{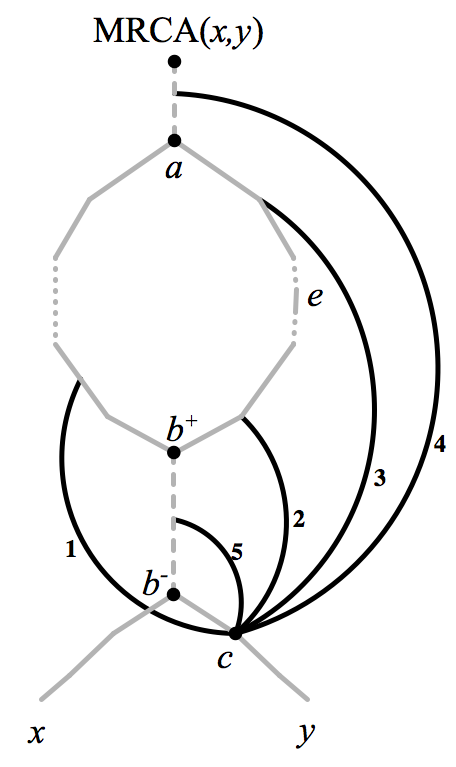}\caption{In gray we see the subgraph composed by $P$ and $P'$, the dashed edges represent that $P$ and $P'$ could intersect, the dotted segments represent  just a succession of edges. In black we see the different cases of the possible edges in $P^*$ above $b$ but below $a$.}\label{fig::cases}
	\end{center}\end{figure}
	\begin{figure}\begin{center}
		\includegraphics[scale=.20]{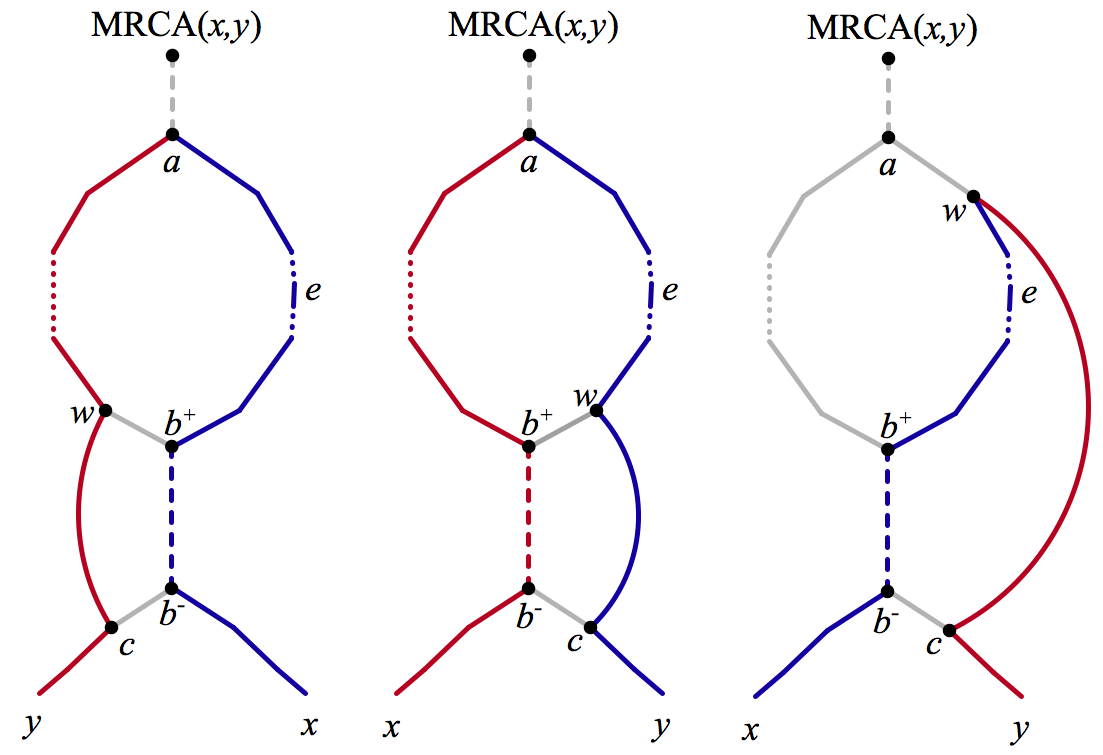}\caption{The treks in case 1 (left), case 2 (center), and case 3 (right).}\label{fig::case123}
	\end{center}\end{figure}
	\begin{figure}\begin{center}
		\includegraphics[scale=.192]{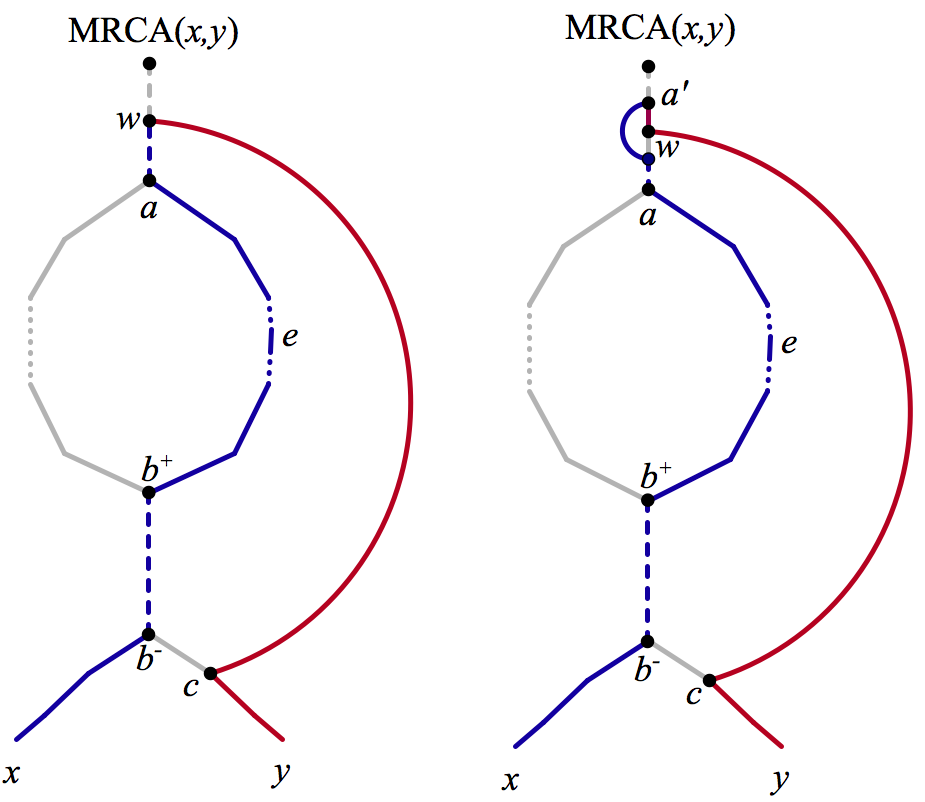}\includegraphics[scale=.205]{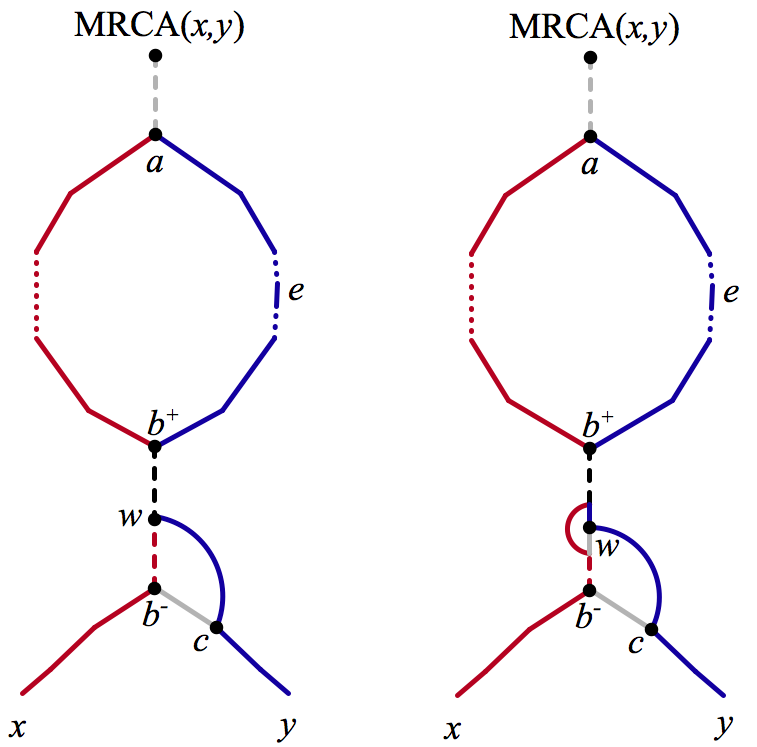}\caption{(Left) The treks in the two possibilities of case 4. (Right) The two possibilities of case 5, where the black segments represent possible edges red and blue at the same time. }\label{fig::case45}
	\end{center}\end{figure}
 
\begin{proof}[of Proposition \ref{prop::ancestralinduced}]
	Let $M^+=\mathcal{N}^\oplus_Z$.  Let $M^-$ be the graph obtained from $M^+$ by ignoring  the direction of all tree edges and then suppressing the  MRCA($Z,\mathcal{N}^+$), that is, the induced unrooted network from  $M^+$.  	Denote by $M'$ the graph obtained by ignoring all directions of the tree edges in $M^+$, so that by suppressing   degree two nodes of either $M^-$ or $M'$ gives  $(\mathcal{N}^+_Z)^-$. Let $K$ be the graph obtained by considering all the edges in simple treks in $\mathcal{N}^-$ from $x$ to $y$ for all $x,y\in Z$, so that   suppressing degree two nodes in $K$ gives  $(\mathcal{N}^-)_Z$. Showing  either $M'=K$ or  $M^-=K$, will prove the claim.\\ 
	
	First we show that if  MRCA($Z,\mathcal{N}^+$)$\neq$MRCA($X,\mathcal{N}^+$) then $M'=K$,   by arguing that   $M'$ and $K$ have the same edges. Let $e$ be an edge of $M'$. Since MRCA($Z,\mathcal{N}^+$)$\neq$MRCA($X,\mathcal{N}^+$), $M'$ is a subgraph of $\mathcal{N}^-$ and  $e$ is  directed in $M^+$. By Lemma \ref{lem::trekmrca}, $e$ is in a simple trek in $M^+$  from  $x$ to $y$, for some $x,y\in Z$. This trek induces a simple trek in $M'$  from   $x$ to $y$,  and therefore a simple trek in  $\mathcal{N}^-$ from   $x$ to $y$.    Thus  $e$ is in  $K$. 
	
	Now let $e$ be an edge of $K$. Then there exists a simple trek   $(\ol{P_1},\ol{P_2})$   in  $\mathcal{N}^-$  from  $x$ to  $y$, for some $x,y\in Z$ containing $e$.  Let $v=$top$(\ol{P_1},\ol{P_2})$ and let $T$  be the   sequence of  incident edges in  $\mathcal{N}^+$ from $x$ to $v$ conformed of edges inducing those in $\ol{P_1}$ and $\ol{P_2}$. Since $(\ol{P_1},\ol{P_2})$ is  simple, $T$ does not have repeated edges. Following $T$ in $\mathcal{N}^+$ from $x$ to $y$, edges are first transversed ``uphill" (in reverse direction) until there is a first ``downhill" edge $(u,w)$. The next edge in $T$  cannot be uphill, as otherwise it would be hybrid  and  $(\ol{P_1},\ol{P_2})$ would have not been a trek in $\mathcal{N}^-$. This argument applies for all  consecutive edges in $T$ until  we end at $y$. Thus there is a simple trek  $(\ol{P_1},\ol{P_2})$ from $x$ to $y$ in $\mathcal{N}^+$ with top $u$.  Note that $u$ must be below or equal to MRCA($Z,\mathcal{N}^+$) since otherwise the trek would not be simple. Moreover, $P_1$ and $P_2$ contain only edges in $M^+$ and thus in $M'$ after the directions of the tree edges is omitted. Thus  $e$ is  in $M'$, so $K=M'.$

If MRCA($Z,\mathcal{N}^+$)$=$MRCA($X,\mathcal{N}^+$) then $M^-=K$ follows from a straight forward modification of the previous argument to account for the suppression of MRCA$(z,\mathcal{N}^+)$ in both $M^-$ and $K$.  $\Box$\end{proof}
\begin{acknowledgements}
 The author deeply thanks  John A. Rhodes and  Elizabeth S. Allman for their technical assistance and suggestions during the development of this work. 
\end{acknowledgements}

\bibliographystyle{plain}
\bibliography{Hybridization}
\end{document}